\newif\ifamsart
\newif\ifllncs
\newif\ifpubrel
\newcounter{running}[section]
{\setcounter{running}{\value{enumi}}\end{enumerate}}
\newcommand{\DHKA}{{DH}}
\newcommand{\DHCR}{\ensuremath{\mathsf{DHCR}}}
\newcommand{\cpsa}{\textsc{cpsa}}
\newcommand{\kind}[1]{\ensuremath{\mathsf{#1}}}
\newcommand{\Algebra}{\kind{M}}
\newcommand{\Algebrabasic}{\Algebra_b}
\newcommand{\Algebraconstr}{\Algebra_c}
\newcommand{\Sigmabasic}{\Sigma_b}
\newcommand{\Sigmaconstr}{\Sigma_c}
\newcommand{\Sigmamult}{\Sigma_m}
\newcommand{\dom}{\kind{dom}}
\newcommand{\node}{\kind{nodes}}
\newcommand{\esc}{\kind{esc}}
\newcommand{\sent}{\kind{sent}}
\newcommand{\form}{\kind{form}}
\newcommand{\term}{\kind{term}}
\newcommand{\syntax}[1]{\textcolor{blue}{#1}}
\newcommand{\RPPred}[2]{\kind{RP}_{#1,#2}}
\newcommand{\PPred}[2]{\kind{P}_{#1,#2}}
\newcommand{\nf}[2]{\RPPred{#1}{#2}}
\newcommand{\pf}[2]{\PPred{#1}{#2}}
\newcommand{\fn}[1]{\ensuremath{\operatorname{\mathit{#1}}}}
\newcommand{\mt}[1]{\ensuremath{\mathtt{#1}}}
\newcommand{\field}{\mbox{\textsc{fld}}}
\newcommand{\zero}{\ensuremath{\mathbf{0}}}
\newcommand{\one}{\ensuremath{\mathbf{1}}}
\newcommand{\plus}{\ensuremath{{+}}}
\newcommand{\minus}{\ensuremath{{-}}}
\newcommand{\ftimes}{\ensuremath{{\cdot}}} 
\newcommand{\fdivide}{\ensuremath{{/}}} 
\newcommand{\group}{\mbox{\textsc{grp}}}
\newcommand{\gen}{\ensuremath{\mathbf{g}}}
\newcommand{\expt}{\ensuremath{\mathbf{exp}}}
\newcommand{\ltx}[1]{\kind{ltx}(#1)}
\newcommand{\sctrsc}{\mbox{\textsc{trsc}}}
\newcommand{\scname}{\mbox{\textsc{name}}}
\newcommand{\sctext}{\mbox{\textsc{text}}}
\newcommand{\scskey}{\mbox{\textsc{skey}}}
\newcommand{\scakey}{\mbox{\textsc{akey}}}
\newcommand{\sccreate}{\mbox{\textsc{create}}}
\newcommand{\scsbasic}{\mbox{\textsc{basic}}}
\newcommand{\scbasic}{\mbox{\textsc{basic}}}
\newcommand{\scmesg}{\mbox{\textsc{mesg}}}
\newcommand{\scnode}{\mbox{\textsc{node}}}
\newcommand{\scfld}{\field}
\newcommand{\scgrp}{\group}
\newcommand{\skel}{\ensuremath{\mathbb{A}}}
\newcommand{\skelB}{\ensuremath{\mathbb{B}}}
\newcommand{\skelC}{\ensuremath{\mathbb{C}}}
\newcommand{\qdot}{\,\mathbf{.}\;}
\newcommand{\limp}{\Longrightarrow}
\newcommand{\entails}{\Vdash}
\newcommand{\bnd}{\mathcal{B}}
\newcommand{\lang}[1]{\mathcal{L}_{#1}}
\newcommand{\thefield}{\mathcal{F}}
\newcommand{\thegroup}{\mathcal{C}}
\newcommand{\mdlA}{\mathcal{A}}
\newcommand{\mdlB}{\mathcal{B}}
\newcommand{\seq}[1]{\langle #1 \rangle}
\newcommand{\strandnode}[2]{ {#1}\downarrow{#2} }
\newcommand{\ltk}{\ensuremath{\kind{ltk}}}
\newcommand{\kinv}{\ensuremath{\kind{inv}}}
\newcommand{\pk}{\ensuremath{\kind{pk}}}
\newcommand{\sk}{\ensuremath{\kind{sk}}}
\newcommand{\length}[1]{{| #1 |}}
\newcommand{\enc}[2]{\{\!|#1|\!\}_{#2}}
\newcommand{\tagged}[2]{[\![\,#1\,]\!]_{#2}}
\newcommand{\hash}[1]{\#
  (#1)}
\newcommand{\encs}[2]{\ensuremath{\{\!|#1|\!\}^{\mathrm{s}}_{#2}}}
\newcommand{\enca}[2]{\ensuremath{\{\!|#1|\!\}^{\mathrm{a}}_{#2}}}
\newcommand{\foundwithin}[3]{\ensuremath{#1 \mathrel{\odot^{#2}} #3}}
\newcommand{\restricted}{\mathbin{\mid\!\hspace{-1.2pt}\grave{}}}
\newcommand{\restrict}[2]{#1 \restricted #2}
\newcommand{\ingredient}{\sqsubseteq}
\newcommand{\msg}{\kind{msg\/}}
\newcommand{\direction}{\kind{dir\/}}
\newcommand{\assume}{\kind{assume\/}}
\newcommand{\strands}{\fn{Str}}
\newcommand{\tr}{\fn{tr}}
\newcommand{\s}{\kind{s}}
\newcommand{\skelax}{\kind{Ax}_{\kind{sk}}}
\newcommand{\trsc}{\ensuremath{\mathsf{trsc}}}
\newcommand{\ccolon}{\ensuremath{\mathbin{\mathbb{::}}}}
\newcommand{\cccolon}{\ensuremath{\mathbin{\mathbb{:::}}}}
\newcommand{\fv}{\kind{fv}}
\newcommand{\append}{\mathbin{{}^\smallfrown}}
\newcommand{\termat}{\mathbin{@}}
\newtheorem{principle}{Principle}{\bfseries}{\itshape}
\newtheorem{mylemma}[prop]{Lemma}{\bfseries}{\itshape}
\newtheorem{thm}{Theorem}{\bfseries}{\itshape}
\newtheorem{eg}{Example}{\bfseries}{\upshape}
\newtheorem{cor}[prop]{Corollary}{\bfseries}{\itshape}
\spnewtheorem*{proofsketch}{Proof Sketch}{\itshape}{\rm}
\spnewtheorem*{thmredux}{Theorem}{\bfseries}{\itshape}
\spnewtheorem*{propredux}{Proposition}{\bfseries}{\itshape}
\newtheorem{theorem}{Theorem}
\newtheorem{definition}{Definition}
\newtheorem{lemma}{Lemma}
\newtheorem{Assum}{Assumption}
\def\squareforqed{\hbox{\rlap{$\sqcap$}$\sqcup$}}
\def\qed{\ifmmode\squareforqed\else{\unskip\nobreak\hfil
\penalty50\hskip1em\null\nobreak\hfil\squareforqed
\parfillskip=0pt\finalhyphendemerits=0\endgraf}\fi}
\newenvironment{proof}{\par\smallskip{\noindent\rm\bfseries
    Proof:}}{
\par\medskip\noindent}
\title{Enrich-by-need Protocol Analysis for Diffie-Hellman\\(Extended version)}
\author{Moses D. Liskov\quad\and Joshua D. Guttman\quad\and John
  D. Ramsdell \quad\and Paul D. Rowe\quad\and F. Javier
  Thayer\thanks{Email:  \texttt{mliskov, guttman, ramsdell,
      prowe@mitre.org}}}
\ifllncs\institute{The MITRE Corporation}\fi
\begin{document}
\maketitle

\ifpubrel
\thispagestyle{title}
\fi
\pagestyle{plain}

\begin{abstract}
  Enrich-by-need protocol analysis is a style of symbolic protocol
  analysis that characterizes all executions of a protocol that extend
  a given scenario.  In effect, it computes a strongest security goal
  the protocol achieves in that scenario.  {\cpsa}, a Cryptographic
  Protocol Shapes Analyzer, implements enrich-by-need protocol
  analysis.

  In this paper, we describe how to analyze protocols using the
  Diffie-Hellman mechanism for key agreement ({\DHKA}) in the
  enrich-by-need style.  {\DHKA}, while widespread, has been
  challenging for protocol analysis because of its algebraic
  structure.  {\DHKA} essentially involves fields and cyclic groups,
  which do not fit the standard foundational framework of symbolic
  protocol analysis.  By contrast, we justify our analysis via an 
  algebraically natural model.

  This foundation makes the extended {\cpsa} implementation reliable.
  Moreover, it provides informative and efficient results.

  An appendix, written by John Ramsdell in 2014, explains how
  unification is efficiently done in our framework.  

\end{abstract}

\ifllncs{}\else%
\newpage
{\footnotesize
  \tableofcontents
}
\newpage
\fi
\section{Introduction}
\label{sec:intro}


Diffie-Hellman key agreement ({\DHKA})~\cite{DiffieHellman76}, while
widely used, has been challenging for mechanized security protocol
analysis.  Some techniques,
e.g.~\cite{DBLP:conf/fosad/EscobarMM07,cremers2012operational,MeierSCB13,kuesters2009using},
have produced informative results, but focus only on proving or
disproving individual protocol security goals.  A protocol and a
specific protocol goal are given as inputs.  If the tool terminates,
it either proves that this goal is achieved, or else provides a
counterexample.  However, constructing the \emph{right} security goals
for a protocol requires a high level of expertise.

By contrast, the \emph{enrich-by-need} approach starts from a protocol
and some scenario of interest.  For instance, if the initiator has
had a local session, with a peer whose long-term secret is
uncompromised, what other local sessions have occurred?  What session
parameters must they agree on?  Must they have happened recently, or
could they be stale?

Enrich-by-need protocol analysis identifies all essentially different
smallest executions compatible with the scenario of interest.  While
there are infinitely many possible executions---since we put no bound
on the number of local sessions---often surprisingly few of them are
really different.  The Cryptographic Protocol Shapes Analyzer
(\cpsa)~\cite{cpsa09}, a symbolic protocol analysis tool based on
strand spaces~\cite{ThayerHerzogGuttman99,Guttman10,cpsatheory11},
efficiently enumerates these minimal, essentially different
executions.  We call the minimal, essentially different executions the
protocol's \emph{shapes} for the given scenario.

Knowing the shapes tells us a \emph{strongest} relevant security goal,
i.e.~a formula that expresses authentication and confidentiality trace
properties, and is at least as strong as any one the protocol achieves
in that scenario~\cite{Guttman14,Ramsdell12}.  Using these shapes, one
can resolve specific protocol goals.  The hypothesis of a proposed
security goal tells us what scenario to consider, after which we can
simply check whether the conclusion holds in each resulting shape
(see~\cite{RoweEtAl2016} for a precise logical treatment).

Moreover, enrich-by-need has key advantages.  Because it can also
compute strongest goal formulas directly for different protocols, it
allows comparing the strength of different
protocols~\cite{RoweEtAl2016}, for instance during standardization.
Moreover, the shapes provide the designer with \emph{visualizations}
of exactly what the protocol may do in the presence of an adversary.
Thus, they make protocol analysis more widely accessible, being
informative even for those whose expertise is not mechanized protocol
analysis.

In this paper, we show how we strengthened {\cpsa}'s enrich-by-need
analysis to handle {\DHKA}\@.  It is efficient, and has a flexible
adversary model with corruptions.

Foundational issues needed to be resolved.  For one thing, finding
solutions to equations in the natural underlying theories is
undecidable in general, which means that mechanized techniques must be
carefully circumscribed.  For another, these theories, which include
fields, are different from many others in security protocol analysis.
The field axioms are not (conditional) equations, meaning that they do
not have a simplest (or ``initial'') model for the analysis to work
within.  Much work on mechanized protocol analysis, even for {\DHKA},
which relies on equational theories and their initial models,
e.g.~\cite{DBLP:conf/fosad/EscobarMM07,cremers2012operational,MeierSCB13,kuesters2009using}.
However, one would like an analysis method to have an explicit theory
justifying it relative to the standard mathematical structures such as
fields.  {\cpsa} now has a transparent foundation in the algebraic
properties of the fields that {\DHKA} manipulates.  This development
extends our earlier work~\cite{DoughertyGuttman2014,LiskovThayer14}.

\paragraph{Contributions.}  In this paper, we describe how we extended
{\cpsa} to analyze {\DHKA} protocols.  Our method is currently
restricted to protocols that do not use addition in the exponents,
which is a large class.  We call these \emph{multiplicative}
protocols.  The method is also restricted to protocols that disclose
randomly chosen exponents one-by-one.  This allows modeling a wide
range of possible types of corruption; however, it excludes a few
protocols in which products of exponents are disclosed.  We say that a
protocol \emph{separates disclosures} if it satisfies our condition.

{\cpsa}'s analysis is based on two
principles---Lemmas~\ref{lemma:simple:visible}
and~\ref{lemma:group:visible}---that are valid for all executions of a
protocol $\Pi$, if $\Pi$ is multiplicative and separates disclosures.
They characterize how the adversary can obtain an exponent value such
as $xy$ or an exponentiated {\DHKA} value such as $g^{xy}$, resp.
They tell {\cpsa} what information to add to a partial analysis to get
one or more possible enrichments that describe the different
executions in which the adversary obtains these values.
Lemmas~\ref{lemma:cohort:group}--\ref{lemma:cohort:field} show how
{\cpsa} represents the enrichments.

To formulate and prove these lemmas, we had to clarify the algebraic
structures we work with.  In our semantics, the messages in protocol
executions contain mathematical objects such as elements of fields and
cyclic groups.  We regard the the random choices of the compliant
protocol participants serve as ``transcendentals,'' i.e.~primitive
elements added to a base field that have no algebraic relationship to
members of the base field.  The adversary cannot rely on any
particular algebraic relations with known values before the compliant
participant has made the random choice.  Even after the choice of a
value $x$, if the adversary sees only the exponentiated value $g^x$,
the adversary still does not know any algebraic (polynomial)
constraints on $x$.  Instead, as in the Generic Group
Model~\cite{Shoup97,Maurer05}, the adversary can apply algebraic
operations to such values, but cannot benefit from the bitstrings by
means of which they are represented.  

{\cpsa}'s analysis operates within a language of first order predicate
logic.  The analysis relates to real protocol executions via
Tarski-style satisfaction.  Protocol executions (using group and field
elements as messages, in the presence of an adversary) are the
\emph{models} of the \emph{theories} used in the analysis process.
This provides a familiar foundational setting.

{\cpsa} is very efficient.  When executed on a rich set of variants of
Internet Key Exchange (IKE) versions 1 and 2, our analysis required
less than 30
seconds on a laptop.  By contrast, Scyther analyzed this same set of
IKE variants, requiring about a day of work on a computing
cluster~\cite{Cremers11keyexchange}.  This is about 3.4 orders of
magnitude more time, quite apart from the difference between the quite
powerful cluster (circa 2010) and the laptop (built circa 2015).
Another benchmark~\cite{BasinCM13} with little use of {\DHKA} led to
similar results.  This suggests that {\cpsa} is broadly efficient,
with or without {\DHKA}.  Performance data for Tamarin and Maude-NPA
is less available.

\section{An example:  Unified Model}
\label{sec:xmpls}

Many variants of the original Diffie-Hellman
idea~\cite{DiffieHellman76} exchange both certified, long-term values
$g^a$ and $g^b$, and also one-time, ephemeral values $g^x$ and $g^y$.
The peers agree compute session keys using \emph{key computation
  functions} $\mathsf{KCF}^{(\cdot)}$, where in successful sessions,
$\mathsf{KCF^A}(a, x, g^b, g^y) = \mathsf{KCF^B}(b, y, g^a, g^x)$.
The ephemeral values ensure the two parties agree on a different key
in each session.  The long-term values are intended for
authentication, namely that any party that obtains the session key
must be one of the intended peers.
%
%
Different functions $\mathsf{KCF}^{(\cdot)}$ yield different security
properties.  Protocol analysis tools for {\DHKA} key exchanges must be
able to use algebraic properties to identify these security
consequences.

\subsection{DH Challenge-Response with Unified Model keying}
\label{sec:xmpls:dhcr:um}

We consider here a simple {\DHKA} Challenge-Response protocol
${\DHCR}$ in which nonces from each party form a challenge and
response protected by a shared, derived {\DHKA} key (see
Fig.~\ref{fig:dhcr}).
\begin{figure}[tb]\small
  \centering
  \begin{eqnarray*}
    \xymatrix@R=2mm{\bullet\ar@{=>}[r] & \bullet\ar@{=>}[r]\ar[d] & \bullet\ar@{=>}[r] & \bullet\ar[d]
    \\   
    \mathtt{certs}\ar[u] & na, A, B, g^x  & g^\beta, \enc{na,nb}K\ar[u] &{nb}
                                                                                          } \\ 
    \xymatrix@R=2mm{    \mathtt{certs}\ar[d] & na, A, B, g^\alpha\ar[d]  & g^y, \enc{na,nb}K & {nb}\ar[d]
    \\   \bullet\ar@{=>}[r] & \bullet\ar@{=>}[r] & \bullet\ar@{=>}[r]\ar[u] & \bullet
                                                                              }
    \\
    \xymatrix@R=3mm{  
    \bullet\ar[d]\ar@{=>}[r] &\cdots  &&& \\
    \tagged{g^{\ltx P}}{\sk(P)}
                                           } 
  \end{eqnarray*}
  \begin{tabbing}
    This is  
    \= where \= \texttt{init} computes $K$ as\quad \= $\mathsf{KCF^B}({b}, y,g^{{a}},g^\alpha)$
    and \kill 
    \> where \> \texttt{init} computes $K$ as \> $\mathsf{KCF^A}({a}, x, g^{b}, g^\beta)$
    \\
    \>\> \texttt{resp} computes $K$ as \> $\mathsf{KCF^B}({b}, y,g^{{a}},g^\alpha)$ 
  \end{tabbing}
  \caption{Protocol ${\DHCR}$:  Initiator, responder, and
    $\ltx{}$ self-certifying roles.  The $\mathtt{certs}$ are
    $\tagged{g^{\ltx A}}{\sk(P)}$ and $\tagged{g^{\ltx B}}{\sk(P)}$
    for each role.}
  \label{fig:dhcr}
\end{figure}

Each participant obtains two self-signed certificates, covering the
long term public values of his peer and himself.  $A$ and $B$'s
long-term {\DHKA} exponents are values $\ltx A, \ltx B$, which we will
mainly write as a lower case $a,b$.  We assume that the principals are
in bijection with distinct long-term $\ltx{\cdot}$ values.

Each participant furnishes an ephemeral public {\DHKA} value $g^x,g^y$
in cleartext, and also a nonce.  Each participant receives a value
which may be the peer's ephemeral, or may instead be some value
selected by an active adversary.  Neither participant can determine
the exponent for the ephemeral value he receives, but since the value
is a group element, it must be some value of the form
$g^\alpha,g^\beta$.

Each participant computes a session key using his
$\mathsf{KCF}^{(\cdot)}$.  The responder uses the key to encrypt the
nonce received together with his own nonce.  The initiator uses the
key to decrypt this package, and to retransmit the responder's nonce
in plaintext as a confirmation.

The \emph{registration} role allows any principal $P$ to emit its
long-term public group value $g^{\ltx P}$ under its own digital
signature.  In full-scale protocols, a certifying authority's
signature would be used, but in this example omit the CA so as not to
distract from the core {\DHKA} issues.

We will assume that each instance of a role chooses its values for
certain parameters freshly.  For instance, each instance of the
registration role makes a fresh choice of ${\ltx P}$.  Each instance
of the initiator role chooses $x$ and $na$ freshly, and each responder
instance chooses $y$ and $nb$ freshly.

This protocol is parameterized by the key computations.  We consider
three key computations from the Unified
Model~\cite{ankney1995unified}, with a hash function~$\hash\cdot$
standing for key derivation.  The shared keys---when each participant
receives the ephemeral value $g^\alpha=g^x$ or $g^\beta=g^y$ that the
peer sent---are:

\begin{tabbing}\textbf{Three-component UM3:} \quad \= \kill 
  \textbf{Plain UM:\quad} \>  $\hash{g^{{a}{b}}, g^{xy}}$; \\ 
  \textbf{Criss-cross UMX:\quad} \> $\hash{g^{{a}y},
    g^{{b}x}}$; \\ 
  \textbf{Three-component UM3:\quad} \> $\hash{g^{{a}y}, g^{{b}x}, g^{xy}}$
\end{tabbing}
%
We discuss three security properties:
\begin{description}
  
  \item[Authentication:]  If either the initiator or responder
  completes the protocol, and \emph{both} principals' private
  long-term exponents are secret, then the intended peer must have
  participated in a matching conversation.
\end{description}
All three key computations UM, UMX, and UM3 enforce the authentication
goal.
\begin{description}
  \item[Impersonation resistance:]  If either the initiator or
  responder completes the protocol, and the intended peer's private
  long-term exponent is secret, then the intended peer must have
  participated in a matching conversation.
\end{description}
Here we do not assume that \emph{ones own} private long-term exponent
is secret.  Can the adversary impersonate the intended peer if ones
own key is compromised?
  
${\DHCR}$ with the plain UM KCF is susceptible to an
impersonation attack:  An attacker who knows Alice's own long-term
exponent can impersonate any partner to Alice.  The adversary can
calculate $g^{{a}{b}}$ from ${a}$ and $g^{{b}}$, and can calculate the
$g^{xy}$ value from $g^x$ and a $y$ it chooses itself.

On the other hand, the UMX and UM3 KCFs resist the impersonation
attack.
\begin{description}
  \item[Forward secrecy:]  If the intended peers complete a protocol
  session and then the private, long-term exponent of each party is
  exposed \emph{subsequently}, then the adversary still cannot derive
  the session key.

  This is sometimes called \emph{weak} forward secrecy.
\end{description}
To express forward secrecy, we allow the registration role to continue
and subsequently disclose the long term secret $\ltx P$ as in
Fig.~\ref{fig:dhcr:registration}.  If we assume in an analysis that
$\ltx P$ is uncompromised, that implies that this role does not
complete.
\begin{figure}\small
  \centering
  \[  \xymatrix@R=3mm{  
      \bullet\ar[d]\ar@{=>}[r] & \bullet\ar@{=>}[r] & \bullet\ar[d] \\
      \tagged{g^{\ltx P}}{\sk(P)} & \mathtt{dummy}\ar[u]& {\ltx P}}
  \]
  \caption{The registration role in full}
  \label{fig:dhcr:registration}
\end{figure}
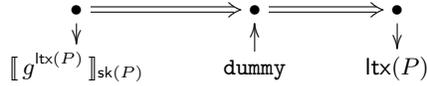
The dummy second node allows specifying that the $\ltx P$ release node
occurs after some other event, generally the completion of a normal
session.

The Normal UM KCF guarantees forward secrecy, but UMX does not.  In
UMX, if an adversary records both $g^x$ and $g^y$ during the protocol
and learns ${a}$ and ${b}$ later, the adversary can compute the key by
exponentiating $g^x$ to the power $b$ and $g^y$ to the power $a$.  

UM3 restores forward secrecy.  It meets all three goals.

\subsection{Strand terminology}
\label{sec:xmpls:terms}

In discussing how {\cpsa} carries out its analysis, we will use a
small amount of the strand space terminology; see
Section~\ref{sec:strands} for more detail.

We call a sequence of transmission and reception events as illustrated
in Figs.~\ref{fig:dhcr}--\ref{fig:dhcr:registration} a \emph{strand}.
We call each send-event or receive-event on a strand a \emph{node}.
We draw strands either horizontally as in
Figs.~\ref{fig:dhcr}--\ref{fig:dhcr:registration} or vertically, as in
diagrams generated by {\cpsa} itself.

A \emph{protocol} consists of a finite set of these strands, which we
call the \emph{roles} of the protocol.  The roles contain variables,
called the \emph{parameters} of the roles, and by plugging in values
for the parameters, we obtain a set of strands called the
\emph{instances} of the roles.  We also call a strand a \emph{regular
  strand} when it is an instance of a role, because it then complies
with the rules.  \emph{Regular nodes} are the nodes that lie on
regular strands.  We speak of a \emph{regular principal} associated
with a secret if that secret is used only in accordance with the
protocol, i.e.~only in regular strands.

In an execution, events are (at least) partially ordered, and values
sent on earlier transmissions are available to the adversary, who
would like to provide the messages expected by the regular
participants on later transmission nodes.  The adversary can also
generate primitive values on his own.  We will make assumptions
restricting which values the adversary does generate to express
various scenarios and security goals.
%


\subsection{How {CPSA} works, I:  UMX initiator}
\label{sec:xmpls:how}

Here we will illustrate the main steps that {\cpsa} takes when
analyzing {\DHCR}.  For this illustration, we will focus on the
impersonation resistance of the UMX key computation, in the case where
the initiator role runs, aiming to ensure that the responder has also
taken at least the first three steps of a matching run.  The last step
of the responder is a reception, so the initiator can never infer that
it has occurred.  We choose this case because it is typical, yet quite
compact.

\paragraph{Starting point.}  We start {\cpsa} on the problem shown in
Fig.~\ref{fig:dhcr:umx:init:1},
\begin{figure}[tb]
  \centering
  \begin{minipage}[c]{.3\linewidth}
    \centering
    \includegraphics[width=\linewidth]{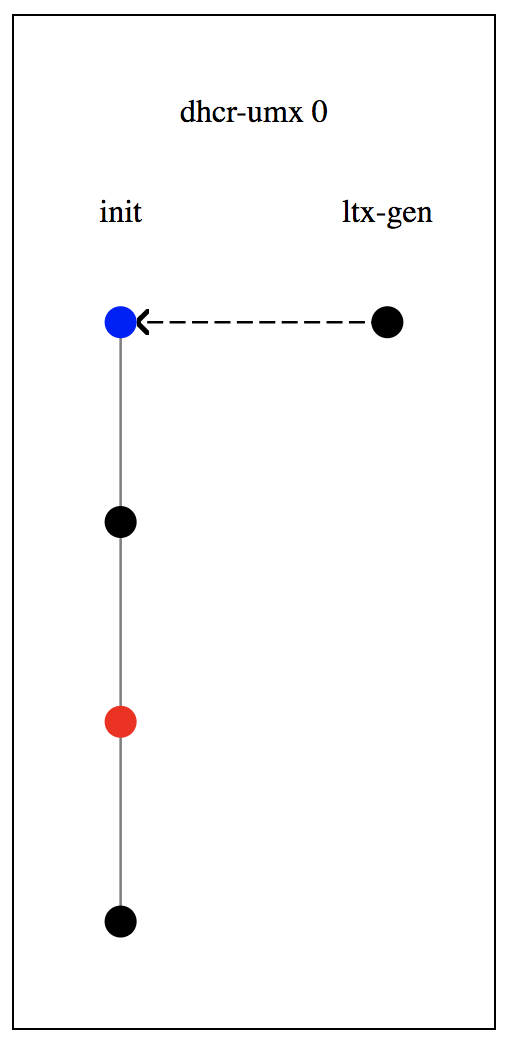} \\
  \end{minipage}\qquad
  \begin{minipage}[c]{.37\linewidth}
    \centering
    \includegraphics[width=\linewidth]{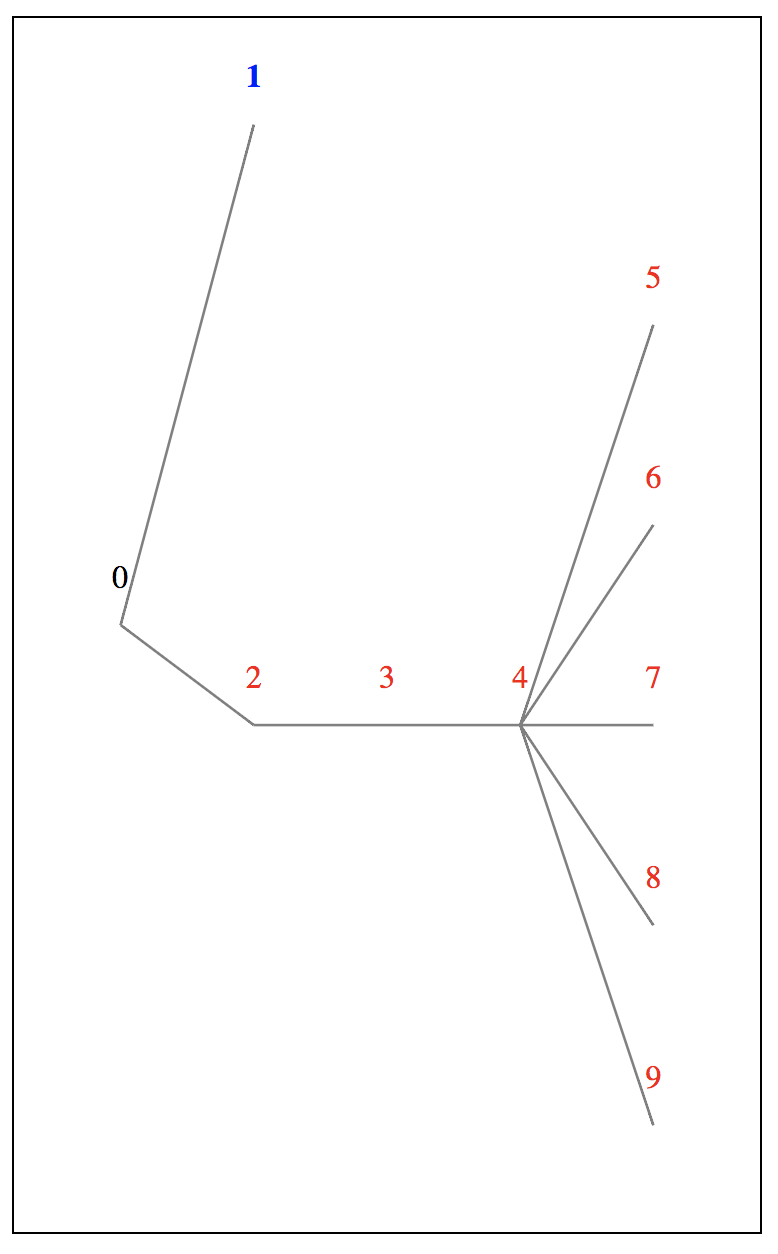}
  \end{minipage}
  \caption{Skeleton 0:  The initial scenario, assuming {$\ltx B$
      non-compromised and $A\not=B$}.  \quad {\cpsa{} exploration tree}}
  \label{fig:dhcr:umx:init:1}
  \label{fig:dhcr:um:tree}
\end{figure}
%
%
in which $A$, playing the initiator role, has made a full local run of
the protocol, and received the long term public value of $B$ from a
genuine run of the ltx-gen role.  These are shown as the vertical
column on the left and the single transmission node at the top to its
right.  We will assume that $B$'s private value $\ltx B$ is
non-compromised and freshly generated, so that the public value
$g^{\ltx B}$ originates only at this point.  In particular, this run
definitely does not progress to expose the secret as in the third node
of Fig.~\ref{fig:dhcr:registration}.  The fresh selection of $\ltx B$
must certainly precede the reception of $g^{\ltx B}$ at the beginning
of the initiator's run.  This is the meaning of the dashed arrow
between them.  We \emph{do not} assume that $A$'s long term secret
${\ltx A}$ is uncompromised, although we will assume that the
ephemeral $x$ value is freshly generated by the initiator run, and not
available to the adversary.
We assume $A\not=B$, which is the case of most interest.

\paragraph{Exploration tree.}  In Fig.~\ref{fig:dhcr:um:tree}, we also
show the exploration tree that {\cpsa} generates.  Each item in the
tree---we will call each item a \emph{skeleton}---is a scenario
describing some behavior of the regular protocol participants, as well
as some assumptions.  For instance, skeleton~0 contains the
assumptions about ${\ltx B}$ and $A$'s ephemeral value $x$ mentioned
before.  The exploration tree contains one blue, bold face entry,
skeleton~1 (shown in Fig.~\ref{fig:umx:init:skel:1}), as well as a
subtree starting from 2 that is all red.  The bold blue skeleton~1 is
a \emph{shape}, meaning it describes a simplest possible execution
that satisfies the starting skeleton~0.  The red skeletons are
\emph{dead skeletons}, meaning possibilities that the search has
excluded; no executions can occur that satisfy these skeletons.  Thus,
skeleton~1 is the only shape, and {\cpsa} has concluded that all
executions that satisfy skeleton~0 in fact also satisfy skeleton~1.

In other examples, there may be several shapes identified by the
analysis, or in fact zero shapes.  The latter means that the initial
scenario cannot occur in any execution.  This may be the desired
outcome, for instance when the initial scenario exhibits some
disclosure that the protocol designer would like to ensure is
prevented.

\begin{figure}
  \centering
  \begin{minipage}[c]{.36\linewidth}
    \includegraphics[width=\linewidth]{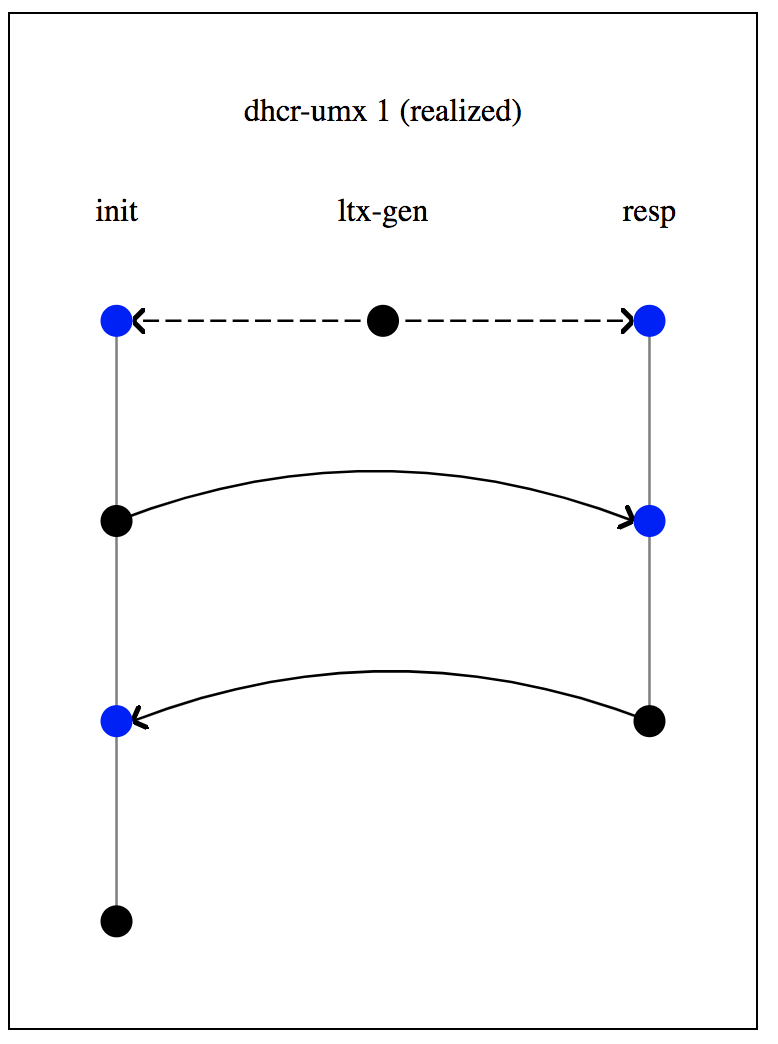}
  \end{minipage}\qquad\quad
  \begin{minipage}[c]{.36\linewidth}
    \centering
    \includegraphics[width=\linewidth]{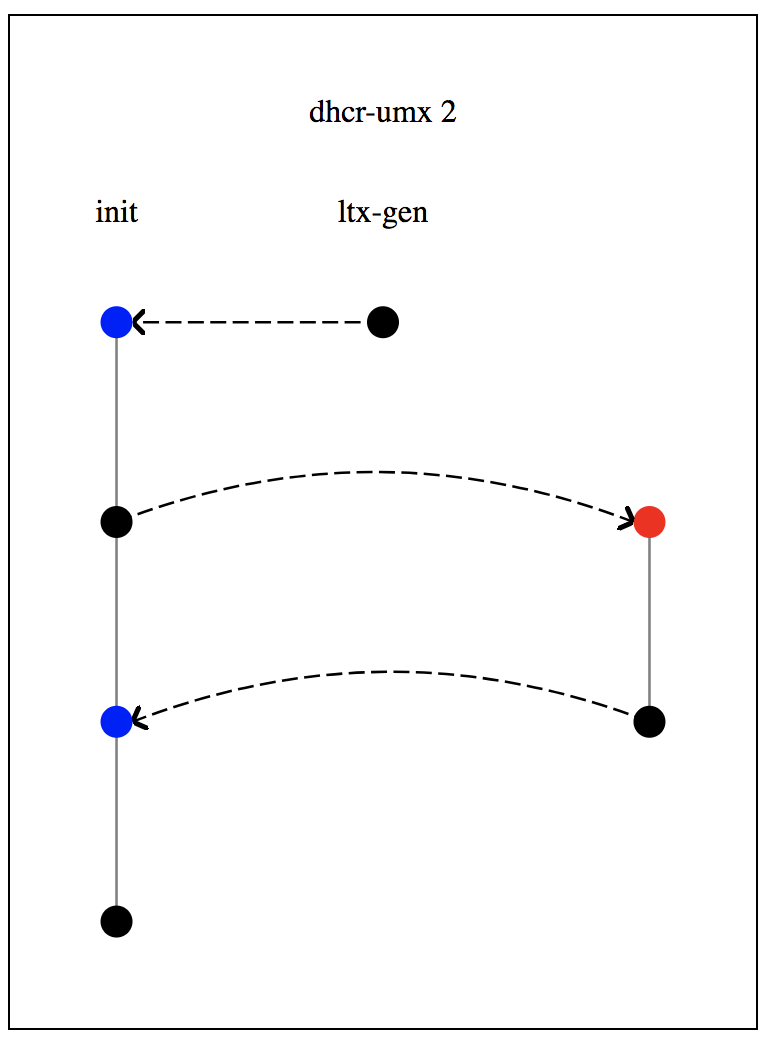}
  \end{minipage}
  \caption{Skeleton~1, the sole resulting shape. \quad {Skeleton~2:  Can the
      UMX encryption key $K$ be exposed, on the rightmost strand?}}
  \label{fig:umx:init:skel:1}
  \label{fig:umx:init:skel:2}
\end{figure}

\paragraph{First step.}  {\cpsa}, starting with skeleton~0 in
Fig.~\ref{fig:dhcr:umx:init:1}, identifies the third node of the
initiator strand, which is shown in red, as unexplained.  This is the
initiator receiving the {\DHKA} ephemeral public value $g^y$ and the
encryption $\enc{na,nb}K$, where $K$ is the session key $A$ computes
using $g^y$ and the other parameters.  The node is red because the
adversary cannot supply this message on his own, given the materials
we already know that the regular, compliant principals have
transmitted.  Thus, {\cpsa} is looking for additional information,
including other transmissions of regular participants, that could
explain it.  Two possibilities are relevant here, and they lead to
skeletons~1 and~2 (see
Fig.~\ref{fig:umx:init:skel:1}
).

In skeleton~1, a regular protocol participant executing the responder
role transmits the message $g^y,\enc{na,nb}K$.  Given the values in
this message---including those used to compute $K$ using the UMX
function---all of the parameters in the responder role are determined.
It is executed by the intended peer $B$, who is preparing the message
for $A$, with matching values for the nonces and ephemeral exponents.
These are matching conversations~\cite{BellareRogaway93,Lowe97}.  A
solid arrow from one node to another means that the former precedes
the latter, and moreover it transmits the same message that the latter
receives.

Skeleton~2 considers whether the key $K=\hash{g^{ay}, g^{bx}}$,
computed by the initiator, might be compromised.  $K$ is the value
received on the rightmost strand.  The reception node is called a
\emph{listener node}, because it witnesses for the availability of $K$
to the adversary.  The ``heard'' value $K$ is then retransmitted so
that {\cpsa} can register that this event must occur before the
initiator's third node.  This listener node is red because {\cpsa}
cannot yet explain how $K$ would become available.  However, if
additional information, such as more actions of the regular
participants, would explain it, then the adversary could use $K$ to
encrypt $na,nb$ and forge the value $A$ receives.  Thus, skeleton~2
identifies this listener node for further exploration.

\paragraph{Step 2.}  Proceeding from skeleton~2, {\cpsa} performs a
simplification on $K=\hash{g^{ay}, g^{bx}}$.  The value $y$ is
available to the adversary, as is $a$, since we have not assumed them
uncompromised.  Thus, $g^{ay}$ is available.  The adversary will be
able to compute $K$ if he can obtain $g^{bx}$.  Skeleton~3 (not shown)
is similar to skeleton~2 but has a red node asking {\cpsa} to explain
how to obtain $g^{bx}$.

This requires a step which is distinctive to {\DHKA} protocols.
{\cpsa} adds Skeleton~4, which has a new rightmost strand with a red
node, receiving the pair $g^{bx/w}, w$.  To resolve this, {\cpsa} must
meet two constraints.  First, it must choose an exponent $w$ that can
be exposed and available to the adversary.  Second, for this value of
$w$, either $w=bx$ or else the ``leftover'' {\DHKA} value $g^{bx/w}$
must be transmitted by a regular participant and extracted by the
adversary.
\begin{figure}
  \centering
  \includegraphics[height=.18\textheight]{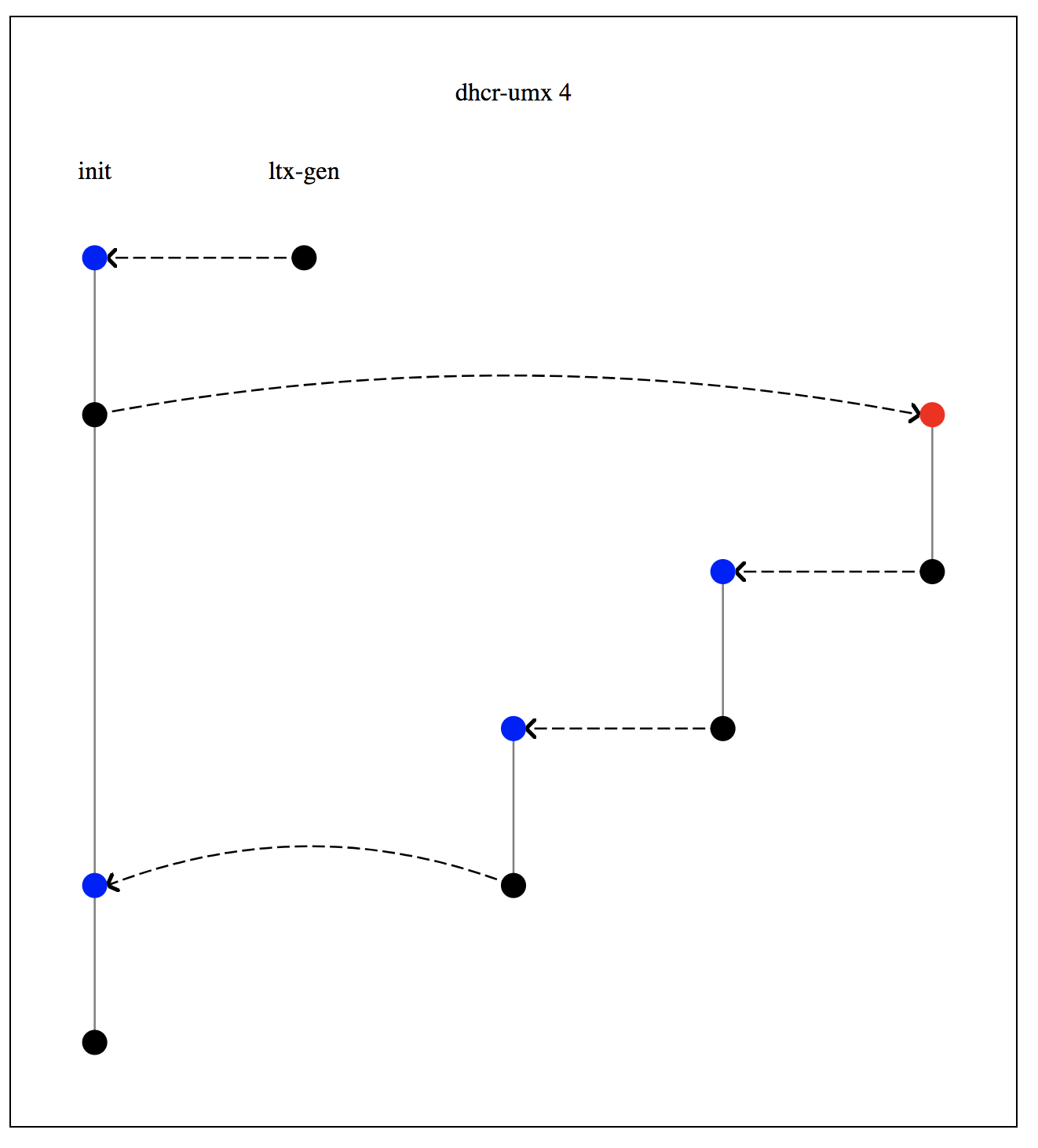}
  \caption{Skeleton~4:  $g^{bx/w}, w$ is on the rightmost strand.  Is
    there an exposed exponent $w$ where either $w=bx$ or else
    $g^{bx/w}$ was sent by a regular participant?}
  \label{fig:umx:init:skel:4}
\end{figure}
\label{step:2:group}

One of our key lemmas, Lemma~\ref{lemma:group:visible}, justifies this
step.

\paragraph{Step 3, clean-up.}  From skeleton~4, {\cpsa} considers the
remaining possibilities in this branch of its analysis.  First, it
immediately eliminates the possibility $w=bx$, since the protocol
offers no way for the adversary to obtain $b$ and $x$.
\label{step:3}

In fact, because $b$ and $x$ are random values, independently chosen
by different principals, the adversary cannot obtain their product
without obtaining the values themselves.

{\cpsa} then considers each protocol role in turn, namely the
initiator, responder, and registration roles.  Can any role transmit a
{\DHKA} value of the form $g^{bx/w}$, where the resulting inferred
value for $w$ would be available to the adversary?

In skeleton~5, it considers the case in which the initiator strand is
the original starting strand, which transmits $g^x$.  Thus, $x=bx/w$,
which is to say $w=b$.  However, since $b$ is assumed uncompromised,
the adversary cannot obtain it, and this branch is dead.  Skeleton~6
explores the case where a different initiator strand sends $g^z$, so
$z=bx/w$, i.e.~$w=bx/z$.  However, this is unobtainable, since it too
is compounded from independent, uncompromised values $b,x,z$.

Skeleton~7 considers the responder case, and skeletons~8 and~9
consider a registration strand which is either identical with the
initial one (skeleton~8) or not (skeleton~9).  They are eliminated for
corresponding reasons.

The entire analysis takes about 0.2 second.

\paragraph{{CPSA} overall algorithm.}  In this paper, unlike earlier
work, a \emph{skeleton} for $\Pi$ will be a theory that a set of
executions satisfies.
This may be the empty set of executions, in which case the skeleton is
``dead,'' like skeletons~2--9.

Protocol analysis in {\cpsa} starts with a skeleton, the initial
scenario.  At any step, {\cpsa} has a set $\mathcal{S}$ of skeletons
available.  If $\mathcal{S}$ is empty, the run is complete.

Otherwise, {\cpsa} selects a skeleton $\skel$ from $\mathcal{S}$.  If
$\skel$ is \emph{realized}, meaning that it gives a full description
of some execution, then {\cpsa} records it as a result.  Otherwise,
there is some reception node $n$ within $\skel$ that is not explained.
This $n$ is the \emph{target node}.  That means that {\cpsa} cannot
show how the message received by $n$ could be available, given the
actions the adversary can perform on his own, or using messages
received from earlier transmissions.

{\cpsa} replaces $\skel$ with a ``cohort.''  This is a set
$\skelC_1,\ldots,\skelC_k$ of extensions of the theory $\skel$.  For
every execution satisfying $\skel$, there should be at least one of
the $\skelC_i$ which this execution satisfies.  {\cpsa} must not
``lose'' executions.  When $k=0$ and there are no cohort members,
{\cpsa} has recognized that $\skel$ is dead.  {\cpsa} then repeats
this process starting with
$\mathcal{S}\setminus\{\skel\}\cup\{\skelC_1,\ldots,\skelC_k\}$.

\paragraph{{CPSA} cohort selection.}  {\cpsa} generates its cohorts by
adding one or more facts, or new equalities, to $\skel$, to generate
each $\skelC_i$.  It also does some renaming, so that the theories
$\skelC_i$ result from a theory interpretation from $\skel$ rather
than syntactic extension.

The selection of facts to add is based on a taxonomy of the executions
satisfying $\skel$.  In each one of them, the reception on the target
node $n$ must somehow be explained.  There are only a limited number
of types of explanation, which are summarized in
Fig.~\ref{fig:cohort}.  
\begin{figure}
  \centering
  \hrule~\\[2mm]   %
  \begin{description}\small
    \item[Regular transmission:]  Some principal, acting in accordance
    with the protocol (i.e.~``regularly''), has transmitted a message
    in this execution which is not described in $\skel$.

    Skeleton~1 was introduced in this way in step~1 above.  
    \item[Encryption key available:]  An encrypted value in a
    reception must be explained, and the adversary obtains the
    encryption key in a way that {\cpsa} will subsequently explore.

    Skeleton~2 was introduced in this way in step~1 above.  
    \item[Decryption key available:]  A value was previously
    transmitted in encrypted form, and, in a way that {\cpsa} will
    subsequently explore, the adversary obtains the decryption key to
    extract it.
    \item[Specialization:]  The execution satisfies additional
    equations, not included in $\skel$, and in this special case the
    adversary can obtain the target node message.
    \item[{\DHKA} value computed:] The adversary needs to supply a
    {\DHKA} value $g^\alpha$, and obtains it from $g^{\alpha/w}$ by
    exponentiating with $w$, which must also be available.

    Skeleton~4 was introduced in this way in step~2 above.  
    \item[Exponent value computed:]  The adversary must obtain an
    exponent $xw$.  There are then two subcases:
    \begin{enumerate}
      \item Both $x$ and $w$ will be obtainable in ways that {\cpsa}
      will subsequently explore; or
      \item $w$ will be instantiated as some $v/x$, so that $x$ will
      cancel out.  Thus, $x$ is in fact be absent from the instance of
      $xw$.
    \end{enumerate}
  \end{description}
  \hrule
  
  \caption{Kinds of cohort members.}
  \label{fig:cohort}
\end{figure}
Of these types, the first three are entirely unchanged.  The
\emph{Specialization} clause is conceptually unchanged, but the
unification algorithm that finds the relevant equations has been
updated to reflect the {\DHKA} algebra; it works efficiently in
practice (see Section~\ref{sec:strands:restrictions} below).

The last two clauses are new.  Lemmas~\ref{lemma:group:visible}
and~\ref{lemma:simple:visible} justify them (resp.), and Cohort
Cases~\ref{lemma:cohort:group} and~\ref{lemma:cohort:field} formulate
how the skeletons $\skelC_i$ are defined from $\skel$; in these cases
they are in fact syntactic extensions.  We will push the overall
Theorem~\ref{thm:test} into the Appendix (p.~\pageref{thm:test}) to
concentrate on these central cases.

We now briefly mention the types of step we have not yet illustrated.

%
%
%
%

\paragraph{A decryption step.}  One must also handle a case dual to
the encryption principle we used in step 1 above, leading from
skeleton~0 to skeletons~1,2, but concerning decryption.

Consider the analysis from the \emph{responder's} point of view.  In
this scenario, the responder's last step, in which he receives the
decrypted nonce $nb$, needs explanation.  The responder previously
transmitted it inside the encryption in $g^y,\enc{na,nb}K$.  In this
case, {\cpsa} must explain the how $nb$ can \emph{escape} from the
protection of the encryption.

One possible explanation is that a regular transmission does so.  That
is, in some role of the protocol, a participant accepts messages of
the form $\enc{na,nb}K$, and retransmits the second nonce outside this
form.  This is symmetric to the case with an initiator strand, leading
to a shape very similar to skeleton~1.

The other possible explanation is that the adversary obtains the key
$K=\hash{g^{ay}, g^{bx}}$.  This would allow the adversary to do the
decryption, and free $nb$ from its protection.  The analysis of the
resulting skeleton~12 is almost identical with the analysis of
skeleton~2 in Section~\ref{sec:xmpls:how}.

\paragraph{A step for exponents.}  When {\cpsa} needs to explore the
availability of some exponent value, such as $bx/z$ in an example
above, it may be able to resolve the question directly.  When it
cannot, it takes a step based on our other key lemma,
Lemma~\ref{lemma:simple:visible}.  This involves splitting the
exploration tree, i.e.~distinguishing possible cases.  In one branch,
variables will be instantiated so that an element such as $b$ in the
exponent will be canceled out.  The other branch explores whether the
same element $b$ can be available to the adversary.  {\cpsa} resorts
to this step relatively rarely when a protocol uses {\DHKA} in
straightforward ways; the forward secrecy analysis for the plain UM
key computation gives an example however.



\subsection{Performance}
\label{sec:xmpls:perf}

Our implementation of the {\cpsa} tool is highly efficient.  See
Figure~\ref{table:performance} for a list of performance results.  We
ran the tool not only on the Diffie-Hellman challenge-response
protocol described in this section, with each of the key derivation
options, but also on a rich set of variants of Internet Key Exchange
(IKE) versions 1 and 2.  Cremers' Scyther tool was used circa 2010 to
analyze this same set of variants \cite{cremers2012operational}.  The
IKE variants were analyzed for an average of five properties each,
yielding conclusions similar to those drawn using
Scyther~\cite{cremers2012operational}.  Scyther can thus be used as a
basis for performance comparison.  The authors reported that the
analysis of the IKE variants took about a day of work on a computing
cluster.
\begin{figure}[tb]\centering\small
  
  \begin{tabular}{c|c|c} 
    \multicolumn{3}{c}{ DHCR: Example \& Time} \\ \hline
    dhcr-um {\quad} 4.06s & dhcr-umx {\quad} 0.72s & dhcr-um3 {\quad}
                                                     0.47s  
  \end{tabular}
  ~ \\[4ex]
  ~
  \begin{tabular}{l@{\quad}r|l@{\quad}r}
    \multicolumn{4}{c}{ IKEv1:  Example \& Time} \\ \hline
    IKEv1-pk2-a & 1.06s & IKEv1-pk2-a2 & 1.02s \\
    IKEv1-pk2-m & 0.49s & IKEv1-pk2-m2 & 0.58s \\
    IKEv1-pk-a1 & 1.27s & IKEv1-pk-a12 & 1.09s \\
    IKEv1-pk-a2 & 1.00s & IKEv1-pk-a22 & 1.10s \\
    IKEv1-pk-m  & 0.51s & IKEv1-pk-m2  & 0.49s \\
    IKEv1-psk-a & 0.43s & IKEv1-psk-m  & 0.68s \\
    IKEv1-psk-m-perlman & 0.69s & IKEv1-quick & 0.66s \\
    IKEv1-psk-quick-noid & 0.65s & IKEv1-quick-nopfs & 0.09s \\
    IKEv1-sig-a1 & 0.15s & IKEv1-sig-a2 & 0.16s \\
    IKEv1-sig-a-perlman & 0.17s & IKEv1-sig-a-perlman2 & 0.19s \\
    IKEv1-sig-m & 0.21s & IKEv1-sig-m-perlman & 0.19s \\
    \multicolumn{4}{c}{ ~ } \\ 
    \multicolumn{4}{c}{ IKEv2:  Example \& Time } \\
    \hline
    IKEv2-eap & 1.35s & IKEv2-eap2 & 1.36s \\
    IKEv2-mac & 0.76s & IKEv2-mac2 & 0.97s \\
    IKEv2-mac-to-sig{\quad} & 0.83s & IKEv2-mac-to-sig2{\quad} & 0.82s \\
    IKEv2-sig & 0.56s & IKEv2-sig2 & 0.54s \\
    IKEv2-sig-to-mac & 0.70s & IKEv2-sig-to-mac2 & 0.69s
  \end{tabular}
  \caption{{\cpsa} Timings: DHCR-UM*, Internet Key Exchange v.~1 and
    2.}
  \label{table:performance}
\end{figure}

In contrast, {\cpsa} needed no more than 1.36 seconds to analyze any
of the individual variants, and 21.32 seconds to analyze all of them
combined.  The data in Fig.~\ref{table:performance} is from a run of
{\cpsa} on a mid-2015 MacBook Pro with a 4-core 2.2 GHz Intel Core i7
processor, run with up to 8 parallel threads using the Haskell
run-time system.  
    
Our analysis of the Diffie-Hellman challenge response protocol
revealed the properties we expected, as described earlier in this
section.
Each {\cpsa} run checked about five scenarios, considering the
guarantees obtained by initiator and responder each under two sets of
assumptions, as well as a forward secrecy property.  
Our analysis of the IKE variants discovered no novel attacks, but does
sharpen Cremers' analysis, because {\cpsa} reflects the algebraic
properties of Diffie-Hellman natively, while Scyther emulated some
properties of Diffie-Hellman.

We turn now to the task of identifying the foundational ideas that
will justify protocol analysis in the efficient style we have just
illustrated.


\section{Algebraic and logical context
}
\label{sec:algebra}

We first examine the mathematical objects on which Diffie-Hellman
relies (Section~\ref{sec:algebra:math}), and we then consider how to
represent tupling and cryptographic operations above them
(Section~\ref{sec:algebra:free}).

\subsection{The algebraic context}
\label{sec:algebra:math}

Diffie-Hellman operations act in a cyclic subgroup ${\thegroup}$ of a
enclosing group $\mathcal{G}$.  In the original formulation, the
enclosing group $\mathcal{G}$ is the multiplicative group
$\mathbb{Z}^*_p$ modulo a large prime $p$; since 0 does not
participate in the multiplicative group, this has an even number of
members, namely $p-1$.  If $q$ is a prime that divides $p-1$ and $g$
is a typical member of $\mathbb{Z}^*_p$, then $\mathbb{Z}^*_p$
contains a cyclic subgroup ${\thegroup}_q$ of order $q$ generated by
the powers of $g$ to integers mod $q$.  Since $q$ is prime, the
integers mod $q$ form a field ${\thefield}_q$, where the field
operations are addition modulo $q$ and multiplication mod $q$.

Thus, generally, consider an enclosing group $\mathcal{G}$ and a large
prime $q$.  We are interested in the field ${\thefield}_q$ and a
cyclic group ${\thegroup}$ generated by the powers $g^x$ of some
$g\in\mathcal{G}$ exponentiated to field elements $x\in{\thefield}_q$.
We will assume that the $\mathcal{G}$s are chosen (or represented) so
that, given $g^x\in{\thegroup}$, it is algorithmically hard to recover
$x$, and, more specifically, the following problems are hard:
\begin{description}
  \item[Computational DH assumption:] given $g^x$ and $g^y$, generated
  from randomly chosen $x,y\in{\thefield}_q$, to find $g^{xy}$; and
  \item[Decisional DH assumption:] given $g^x$ and $g^y$, generated
  with randomly chosen $x,y\in{\thefield}_q$, to distinguish $g^{xy}$
  from $g^z$, where $z\in{\thefield}_q$ is independently randomly
  chosen.
\end{description}
We will work in a model in which the adversary can apply the group
operation to known group elements; can apply the field operations to
known field elements; and can exponentiate a known group value to a
field value.  As in the generic group model~\cite{Shoup97,Maurer05},
we regard the structures as otherwise opaque to the adversary.  In
\cite{BartheFFMSS14}, Barthe et al.~show that the generic group model,
which is expressed in probabilistic terms, justifies a
non-probabilistic adversary model in which the adversary must solve
equations using only the given algebraic operations.  We will follow
their strategy.

\paragraph{Adversary model.}  The job of the adversary is to construct
counterexamples to security goals of the system.  In the framework we
will use~\cite{Guttman14,RoweEtAl2015}, a security goal is an
implication $\Phi\limp\Psi$, so the adversary, to provide a
counterexample, will offer a structure in which $\Phi$ is satisfied,
but $\Psi$ is not.

For instance, the goal may be an authentication property, in which
case $\Phi$ may say that one party (the initiator, e.g.)~has executed
a run with certain fresh values and uncompromised keys.  We will call
a participant that is following the protocol a \emph{regular}
participant.  The system's goal $\Psi$ may then say that a regular
responder run matches this regular initiator run.  The adversary will
want to exhibit a situation in which there is no such matching run.

If the goal is a non-disclosure goal, then $\Phi$ may say that one
party (the initiator, e.g.)~has executed a run with certain fresh
values and uncompromised keys; has computed a particular session key
$k$; and that same value $k$ has been observed unprotected on the
network.  In this case, any structure that satisfies $\Phi$ is a
counterexample; $\Psi$ doesn't matter, and can be the always-false
formula $\bot$.

Hence, the adversary must ensure that certain equations are satisfied.
For instance, in the non-disclosure goal just mentioned, the adversary
must ensure that the session key $k$ computed by the initiator is
equal to the key observed on the network.  The adversary must also
ensure, for each message received by a regular participant, either
that it is obtained from an earlier transmission, or else that the
adversary can compute it with the help of earlier transmissions.  This
condition also requires solving equations:  The adversary must obtain
or compute values that will equal the messages that the regular
participants are assumed to receive.

Thus, the core of the adversary's job is to solve equations by
transforming the transmissions of the regular participants via a set
of computational abilities.  These include the ability to encrypt
given key and plaintext; to decrypt given decryption key and
ciphertext; and to execute exponentiation and the algebraic operations
of the group and field.  We codify this as a game between the system
and the adversary.

\begin{enumerate}
  \item The system chooses a security goal $\Phi\limp\Psi$, involving
  secrecy, authentication, key compromise, etc., as in
  Section~\ref{sec:xmpls}.  
  %

  \item The adversary proposes a potential counterexample $\skel$
  consisting of local regular runs with equations between values in
  reception of transmission events, e.g.~an equation between session
  keys as computed by two participants, or a regular participant and a
  disclosed value.\label{clause:adv:skel}
  \item For each message reception node in $\skel$, the adversary
  chooses a recipe, intended to produce an acceptable message, using
  the computational abilities.  The adversary may use earlier
  transmission events on regular strands to build messages for
  subsequent reception events.\label{clause:adv:strategy}

  These recipes determine a set of equalities between the values
  computed by the adversary and the values ``expected'' by the
  recipient (i.e.~acceptable to the recipient).  They are the
  \emph{adversary's proposed equations}.
  \item The adversary wins if his proposed equations are valid in
  ${\thefield}_q$, for infinitely many primes
  $q$.\label{clause:adv:winning} 
\end{enumerate}
Concentrating on the field values, if the proposed equations are
valid, then whatever choices the regular participants make for their
random exponents, the adversary's recipes should establish the
equalities.  In effect, the adversary is choosing recipes before the
regular participants choose their exponents.  For this reason, we
regard the choices of the regular participants as \emph{field
  extension elements}.  Thus, we will now introduce field structures,
and mention how they may be extended with new extension elements.

\paragraph{Fields and their cyclic groups.}  We use
$(\field,\zero,\one,\plus,\minus,\ftimes,\fdivide)$ as the signature
for fields; $\field$ is the sole sort.  The \emph{field theory}
contains the axioms:
\begin{enumerate}
  \item $\plus$ and $\ftimes$ are associative and commutative, and
  satisfy the distributive law; 
  \item $\zero$ is an identity element for $\plus$ and $\one$ is an
  identity element for $\ftimes$; 
  \item $\minus$ is inverse to $\plus$; and 
  \item
  $\forall x,y\colon\field\qdot y\not= \zero\limp (x/y)\ftimes y=x$.
\end{enumerate}
A structure for this signature is said to be a \emph{field} iff it
satisfies these axioms.
We augment the field signature to introduce the cyclic group
structure:
\[ \Sigma_G =
  (\field,\group,\zero,\one,\plus,\minus,\ftimes,\fdivide,\gen,\expt),
\] 
where $\gen$ has arity $()\rightarrow\group$ and $\expt$ has arity
$\group\times\field\rightarrow\group$.  Cyclic groups satisfy the
axioms:
\begin{enumerate}
  \item $\forall h\colon\group\qdot\exists x\colon\field\qdot
  \expt(\gen,x)=h$; 
  \item $\forall x,y\colon\field\qdot
  \expt(\gen,x)=\expt(\gen,y)$ implies $x=y$;
  \item $\forall h\colon\group\qdot\expt(h,\one)=h$; and
  \item $\forall h\colon\group,\, x,y\colon\field\qdot
  \expt(\expt(h,x),y)=\expt(h,(x\cdot y))$.
\end{enumerate}
We will use familiar notations, writing e.g.~$(h^x)^y=h^{xy}$ for the
body of the last cyclic group axiom.  The first two axioms ensure that
$\expt(\gen,\cdot)$ is a bijection between the field and the group.
The third axiom fixes how this bijection acts on scalars in the field,
e.g.~rationals in $\mathbb{Q}$.  Since we can always write group
operations by adding the exponents, we have no separate group
operation in the signature.  The group identity element is
$\gen^\zero$.

In this paper, we will in fact consider only protocols $\Pi$ in which
the regular participants use only the sub-signature
$(\field,\one,\ftimes,\fdivide)$.  That is, they never use the
additive structure $\zero,\plus,\minus$ in $\Pi$.  We have proved that
if the regular participants do not use the additive structure, then
the advesary will never need it either.  Every attack the adversary
can achieve against such a protocol $\Pi$, the adversary can achieve
using only the multiplicative structure~\cite{LiskovThayer14}.

However, our conclusions about these protocols are motivated by the
natural underlying mathematical structures, namely the fields and the
cyclic groups on which they act.

\paragraph{Transcendentals.}  We can always extend a given field
${\thefield}$ with new elements $x_1,\ldots,x_n$; the extended field,
written ${\thefield}(x_1,\ldots,x_n)$, is then generated from the
polynomials $P_i$ in $x_1,\ldots,x_n$ with coefficients from
${\thefield}$.  Specifically, the members of
${\thefield}(x_1,\ldots,x_n)$ come from the rational expressions
$P_1/P_2$ where $P_2$ is not the identically $\zero$ polynomial.  Two
rational expressions represent the same field element when the usual
rules for polynomial multiplication (or factoring) and cancellation
imply that they are equal.  The field elements are thus the
equivalence classes of rational expressions partitioned by these
rules.

In algebra, one is concerned with two kinds of field extension
elements.  \emph{Algebraic} extension elements are introduced with a
polynomial of which the new element will be a root.  For instance, the
rationals $\mathbb{Q}$ have no square root of two, $\mathbb{Q}$ has a
proper algebraic extension $\mathbb{Q}(x)$ where $x$ is subjected to
the polynomial $x^2-2$.  Alternatively, a field extension may not be
subjected to any polynomial.  This is of course necessary to introduce
transcendental numbers such as $\pi$ and $e$, since they are not roots
of any polynomial with coefficients from $\mathbb{Q}$.  These
unconstrained field extension elements are called
\emph{transcendentals}.

We will use transcendental field extension elements to represent the
randomly chosen exponents of the regular participants in protocol
runs.  This has the consequence:  If an adversary's proposed equations
include some $P_1=P_2$ involving a transcendental $x$, then it will be
true in a field ${\thefield}$ only if $P_1-P_2$ is identically $\zero$
in ${\thefield}$.  This matches our winning condition
Clause~\ref{clause:adv:winning}, at least for ${\thefield}$.

\paragraph{The rationals $\mathbb{Q}$.}  We will focus on the base
field $\mathbb{Q}$, since a polynomial $P_1$ is identically $\zero$ in
$\mathbb{Q}$ iff there are infinitely many primes $q$ such that $P_1$
is identically $\zero$ in ${\thefield}_q$.  Certainly, if $P_1$ is
$\zero$ in $\mathbb{Q}$, it is $\zero$ in every ${\thefield}_q$, which
can only add equations, not eliminate them.  On the other hand, a 
polynomial of degree $d$ can have at most $d$ zeros in any field.
Thus, if $P_1$ is identically $\zero$ over ${\thefield}_q$ but not over
$\mathbb{Q}$, then $q\le d$.
%
%
%
However, since every polynomial has finite degree, there are only
finitely many such exceptions $q$.

\begin{definition}
  \label{def:alg:trans}
  Fix an infinite set of transcendentals $\trsc$.  Define:
  \begin{description}
    \item[$\Sigma_F$] to be the signature $\Sigma_D$ augmented with a
    sort {\sctrsc}, with the sort inclusion
    {\sctrsc}$\le\field$; 
    \item[${\thefield}$] to be the field $\mathbb{Q}(\trsc)$ of
    rational expressions in $\trsc$;
    \item[${\thegroup}$] to be the cyclic group generated from
    ${\thefield}$ by $\expt$.  
  \end{description}
  ${\thefield}$ and ${\thegroup}$ furnish an algebra of the signature
  $\Sigma_F$.  
\end{definition}

\subsection{Building messages}
\label{sec:algebra:free}

\paragraph{Signatures, algebras, and structures.}  As we have just
illustrated, our messages form certain order sorted
algebras~\cite{GoguenMeseguer92}, although we will not need explicitly
overloaded symbols.  An order sorted signature is a triple
$\Sigma=(S, \leq, \ccolon)$ where:
\begin{description}
  \item[$S$] is a set of sort names; 
  \item[$\leq$] is a partial order on $S$; and 
  \item[$\ccolon$] is a finite map.  Its domain is a finite set of
  function constants, and it returns an \emph{arity}
  $\s_1\times\ldots\times \s_k\rightarrow \s_{0}$ for each of those
  function constants $f$.  We write
  $f\ccolon\s_1\times\ldots\times \s_k\rightarrow \s_{0}$ to assert
  that the arity of $f$ in $\Sigma$ is
  $\s_1\times\ldots\times \s_k\rightarrow \s_{0}$.
\end{description}
The \emph{function symbols} of $\Sigma$ form the domain
$\dom(\ccolon)$; $c\in\dom(\ccolon)$ is an \emph{individual constant}
if it has zero argument sorts $c\ccolon()\rightarrow \s_{0}$, or as we
will write $c\ccolon\s_{0}$.
%
%
%

A structure $\mdlA$ is a $\Sigma$-\emph{algebra} iff
\begin{enumerate}
  \item $\mdlA$ supplies a set $\s(\mdlA)$ for each sort
  $\s$ in $S$, where
  \item $\s_1\leq \s_2$ implies
  $\s_1(\mdlA)\subseteq \s_2(\mdlA)$; 
  \item If $\s_1(\mdlA)\cap \s_2(\mdlA)\not=\emptyset$
  then for some sort $\s\in S$, $\s\le \s_1$ and $\s\le
  \s_2$; \label{clause:no:overlap} and
  \item $\mdlA$ supplies a function
  $f_{\mdlA} \colon \s_1(\mdlA)\times\ldots\times
  \s_k(\mdlA)\rightarrow \s_0(\mdlA)$ for each function
  symbol $f\ccolon s_1\times\ldots\times s_k\rightarrow s_0$.
\end{enumerate}
We assume (Clause~\ref{clause:no:overlap}) that sorts overlap only if
they share a common subsort; when $\s_1\le\s_2$, then $\s_1$ is itself
a common subsort.

Strictly speaking, the algebra is the \emph{map} that associates each
sort $\s$ to its interpretation $\s(\mdlA)$ and each function symbol
$f$ to its interpretation $f_{\mdlA}$.  We will often speak as if the
algebra is the range of the interpretation, but we will make use of
the map wherever needed, e.g.~in Section~\ref{sec:skeletons:lang}.  

A \emph{homomorphism} $H\colon\mdlA\rightarrow\mdlB$ is a
sort respecting map from the domains of $\mdlA$ to the domains
of $\mdlB$ that respects the function symbols:
$f_{\mdlB}(H(v_1),\ldots,H(v_k))=H(f_{\mdlA}(v_1,\ldots,v_k))$.

 We construct our message algebras in two steps.  We first start with
\emph{basic} values that include the field ${\thefield}$ and cyclic
group ${\thegroup}$, as well as other convenient values such as names,
nonces, texts and keys.  We will then freely build messages above
these basic values via tupling and cryptographic operations such as
encryption, hashing, and digital signature.  We will refer to the
algebra generated from basic values by these free constructors as a
\emph{constructed} algebra.

What we claim here is true regardless of the exact choice of
constructors, and of the ``convenient'' basic values we mention.
Moreover, some of our claims are unchanged as the structure of
${\thefield},{\thegroup}$ is extended, as we will mention in
connection with future work.
%
%
In the meantime, we will focus on a particular exemplar.

\paragraph{Basic algebra.}  Let:
\[ S_0=\{{\scskey}, {\scname}, {\sctext}, {\scakey}, {\sctrsc},
  \sccreate, \field, \group, {\scsbasic}\}
\]
be a set of sorts, with ${\sctrsc}\le\field$ and
${\scskey}, {\scname}, {\sctext}, {\scakey}, \sctrsc\le{\sccreate}$.
The adversary may create values of these sorts.  To obtain other
values of field sort (for instance) she uses the constant $\one$ and
the field operations.

The sort $\scsbasic$ is the top sort; all other sorts are below it.
Moreover, the sorts
${\scskey}, {\scname}, {\sctext}, {\scakey}, \field,\group$ are all
flat (hence by Clause~\ref{clause:no:overlap}~disjoint).  We also
require function symbols:
\[  \kinv\ccolon{\scakey}\rightarrow{\scakey} \quad
  \pk\ccolon{\scname}\rightarrow{\scakey} \quad
  \ltk\ccolon{\scname}\times{\scname}\rightarrow {\scskey}
\]
to take the inverse of an asymmetric key (namely the other member of a
public/private key pair); to associate a public key with a name; and
to associate a long term shared symmetric key with a pair of names.
We will refer to this signature as $\Sigmabasic$.  
%


\begin{definition}
  \label{def:alg:basic}
  Fix a $\Sigmabasic$ algebra $\Algebrabasic$ containing
  ${\thefield},{\thegroup}$ satisfying the field and group axioms, and
  infinitely many values of each sort
  ${\scskey}, {\scname}, {\sctext}, {\scakey}$,
  satisfying the three axioms:
  \begin{enumerate}
    \item that inverse satisfies $\kinv(\kinv(k))=k$; 
    \item $\pk,\ltk$ are injective; and 
    \item every member of ${\scsbasic}$ is in one of the subsorts
    ${\scskey}$, ${\scname}$, $ {\sctext}$,
    $ {\scakey}$, $ \field$, $\group$.
  \end{enumerate}
  Strictly speaking, the algebra is the map from $\Sigmabasic$ into
  $\Algebrabasic$.  
\end{definition}

\paragraph{Message algebras.}  Having built the basic part of the
algebra, we augment it by applying free constructors for tupling and
for cryptographic operations, yielding values in a new top sort
{\scmesg}.  The exact set of operations is not crucial; however,
we assume that they are partitioned into:
\begin{description}
  \item[Tupling operations] $\tau(m_1,\ldots,m_i)$ where the $m_j$ are
  drawn from {\scmesg} and the result is in {\scmesg};
  \item[Asymmetric operations] $\enca{m}{K}$ for $m$ in {\scmesg}
  and $K\colon{\scakey}$, yielding a result in {\scmesg}; and
  \item[Symmetric operations] $\encs{m}{K}$ for
  $m,K\colon{\scmesg}$, yielding a result in {\scmesg}.
\end{description} 
We write $\enc{m}{K}$ when we do not care to distinguish $\enca{m}{K}$
from $\encs{m}{K}$.  We call any unit $\enc{m}{K}$ an
\emph{encryption}, even though in practice they can be used as hashes,
digital signatures, and so on.  

Multiple entries in each of these categories are useful, for instance:
Multiple tupling operations can represent distinct formats that cannot
collide~\cite{MoedersheimKatsoris14}; multiple asymmetric operations
can represent encryption vs.~digital signature; and multiple symmetric
operations can represent a cipher vs.~a hash function.

We will illustrate our approach using a single asymmetric operator.
We will use a pair of symmetric operators representing a cipher and a
hash function $\hash{m}$.  We regard $m$ as the \emph{key} argument,
the plaintext being some vacuous value $\zero$;
i.e.~$\hash{m}=\encs{\zero}{m}$.  We use a single tupling operation of
untagged pairing $(m_1,m_2)$.

\begin{definition}
  \label{def:alg:full}
  Fix an extension $\Sigmaconstr$ of $\Sigmabasic$ by augmenting $\Sigmabasic$
  with a new top sort {\scmesg}, and with the free pairing,
  symmetric, and asymmetric operators with result sort {\scmesg},
  as in the previous paragraph.

  The $\Sigmaconstr$-algebra $\Algebraconstr$ is the closure of
  $\Algebrabasic$ under the (free) operators of
  $\Sigmaconstr\setminus\Sigmabasic$.

  Strictly speaking, the algebra is the map from $\Sigmaconstr$ into
  this closure.

\end{definition}

We partition the operators as mentioned because the members of each
partition have corresponding rules for adversary derivability,
displayed in Fig.~\ref{fig:rules}.  These rules are either
introduction rules, named operator-$\uparrow$, or elimination rules,
named operator-$\downarrow$.  The adversary combines these rules to
obtain messages.

\begin{figure}[tb]
  \centering
  \begin{tabular}{l@{\qquad}r@{$\quad\vdash\quad$}l@{\qquad}r} 
    Rule & \small Premises & \small Concl. & \small  \\[1mm]
    $\tau$-$\uparrow$ & $m_1,\ldots,m_i$ & $\tau(m_1,\ldots,m_i)$ &  \\
    $\enca{\cdot}{\cdot}$-$\uparrow$ & $m,K$ & $\enca{m}{K}$ &  \\
    $\encs{\cdot}{\cdot}$-$\uparrow$ & $m,K$ & $\encs{m}{K}$ &  
    \\[1mm]
    $\tau_j$-$\downarrow$ & $\tau(m_1,\ldots,m_i)$ & $m_j$ & for each
                                                             $1\le
                                                             j\le i$\\ 
    $\enca{\cdot}{\cdot}$-$\downarrow$ & $\enca{m}{K}, K^{-1}$ & $m$ & \\
    $\encs{\cdot}{\cdot}$-$\downarrow$& $\encs{m}{K}, K$ & $m$ & 
  \end{tabular}
  \caption{Derivability rules for free constructors}
  \label{fig:rules}
\end{figure}

These have a weak Gentzen-style normalization
property~\cite{Gentzen35,MarreroEtAl97,Paulson98,GuttmanThayer02}.
Suppose that an elimination rule is used immediately after an
introduction rule.  Unless the introduction rule is providing the
decryption key for an application of
$\encs{\cdot}{\cdot}$-$\downarrow$, the operator produced by the
introduction rule must be the same as the operator consumed by the
elimination rule.  In this case, the result of the elimination rules
is equal to one of the inputs to the introduction rule.  Hence, these
pairs of rules provide the adversary nothing new, and we can assume
they do not occur.

The algebras $\Algebrabasic,\Algebraconstr$ are ground algebras in the
sense that they do not contain any variables.  Transcendentals in
particular are not variables, but are particular objects that help to
make up fields of rational expressions.  We will, however, also
consider objects that involve variables, namely linguistic
\emph{terms} and \emph{formulas}.  However, even terms that contain no
variables are not members of $\Algebraconstr$; they are simply
linguistic terms that may refer to members of $\Algebraconstr$.



\subsection{Unification and Matching}
\label{sec:algebra:unification}

{\cpsa}'s good performance is due to the use of efficient algorithms
for unification and matching.  Prior versions of {\cpsa} had an
algebra with just one equation---the double inverse of an asymmetric
key is the same as the key.  It is straightforward to modify standard
algorithms for syntactic unification and matching to honor this
equation.

In this implementation, the exponents satisfy $\mathit{AG}$, the
equations for a free Abelian group.  There are efficient algorithms
for unification
modulo~$\mathit{AG}$~\cite[Section~5.1]{BaaderSnyder01}, and matching
is equivalent to unification while treating some terms as constants.
{\cpsa} uses an algorithm that reduces the problem to finding integer
solutions to an inhomogeneous linear equation with integer
coefficients.  The equation solver used is from The Art of Computer
Programming~\cite[Pg.~327]{Knuth81}, and~\cite{agum09} provides an
implementation of $\mathit{AG}$ unification and matching in Haskell.

As expected, the implementation uses two sorts for exponents, with the
sort for transcendentals being a subsort of the one for exponents.
The key observation is that there are no equations between
transcendentals, thus syntactic unification applies.

Section~6.1 of the Handbook~\cite{BaaderSnyder01} describes a general
method for combining unification algorithms.  The method specifies
making many non-deterministic choices that would make unification
expensive.  We take advantage of the fact that both syntactic and
$\mathit{AG}$ unification are relatively simple algorithms, and using
techniques explained in~\cite{KepserRichts99,Liu12}, have a fast
implementation of unification and matching.
Appendix~\ref{sec:unification} presents the algorithm.  For more
detail, see Appendix~\ref{sec:unification}.  


\subsection{Logic:  Languages and structures}
\label{sec:algebra:logic}

$\Sigma=(S, \leq, \ccolon, \cccolon)$ is a \emph{first order signature}
iff:
\begin{description}
  \item[$(S, \leq, \ccolon)$] is an order sorted signature, and
  \item[$\cccolon$] is a finite map from relation constants to arities
  $s_1\times\ldots\times s_k$ from $S$.  We assume an equality symbol
  $=^\s$ is in $\dom(\cccolon)$ for each $\s\in S$, with
  ${=^\s}\cccolon\s\times\s$.
\end{description}
A first order language $\lang{}=\lang{}(\Sigma,\kind{Var})$ is
determined from a first order signature and a set of variables.  Given
a supply of sorted variables $\kind{Var}$ with infinitely many
$v_i\colon\s$ for each sort $\s\in S$,
$\lang{}=\lang{}(\Sigma,\kind{Var})$ is the first order language with
terms and formulas defined inductively:
\begin{description}
  \item[Terms of ${\lang{}}$:] If $\s_0\le \s$, then each variable
  $v_i\colon\s_0$ is a
  term of sort $\s$.   \\
  If (i) $t_1\colon \s_1,\ldots, t_k\colon\s_k$ are terms of the sorts
  shown, (ii) $f\ccolon s_1\times\ldots\times s_k\rightarrow s_0$ is a
  function symbol, and (iii) $\s_0\le \s$, then $f(t_1,\ldots, t_k)$
  is a term of sort~$\s$.
  \item[Formulas of ${\lang{}}$:]  If (i)
  $t_1\colon \s_1,\ldots, t_k\colon\s_k$ are terms of sorts shown, and
  (ii) $R$ is a relation symbol
  $R\cccolon s_1\times\ldots\times s_k$, then $R(t_1,\ldots, t_k)$
  is a formula.   \\
  If $\phi$ and $\psi$ are formulas, and $x\colon\s$ is a variable in
  $\kind{Var}({\lang{}})$, then:
  \[ \phi\lor\psi \quad \phi\land\psi \quad \phi\limp\psi \quad
    \lnot\phi \quad \exists x\colon\s\qdot \phi \quad \forall
    x\colon\s\qdot \phi 
  \]
  are also formulas.  
\end{description}
The \emph{occurrences} of variables in terms and formulas are defined
as usual, and occurrences are \emph{bound} if they are within the
scope of a quantifier governing that variable.  We write
$\term({\lang{}})$ for the terms of ${\lang{}}$;
$\form({\lang{}})$ for the formulas of ${\lang{}}$; and
$\sent({\lang{}})$ for the formulas that are \emph{sentences},
namely those in which no variable has a free occurrence.

If we refer to ${\lang{}}$ as a language over an order sorted
signature $(S, \leq, \ccolon)$, we mean that it is a language over the
first order signature $(S, \leq, \ccolon, \cccolon)$ where
$\dom(\cccolon)$ contains only the equality relations.

A structure $\mdlA$ is an ${\lang{}}$-\emph{structure} iff:
\begin{enumerate}
  \item Its restriction $\restrict{\mdlA}{\Sigma}$ to the underlying
  algebraic signature is a $(S, \leq, \Sigma)$-algebra;
  \item For each relation symbol $R$, with
  $R\cccolon s_1\times\ldots\times s_k$, $\mdlA$ supplies a
  relation
  $R_{\mdlA} \subseteq \s_1(\mdlA)\times\ldots\times
  \s_k(\mdlA)$ .
\end{enumerate}
We always interpret $=^\s_{\mdlA}$ as the standard equality on
$\s({\mdlA})$.  As before, the structure is in fact the map from the
signature of $\lang{}$ into its results.  

A \emph{variable assignment} (aka~\emph{environment}) $\eta$ is a map
$\eta\colon\kind{Var}({\lang{}})\rightarrow\mdlA$ such that
for each variable $v\colon \s$, $\eta(v)\in\s(\mdlA)$.  That is,
$\eta$ is sort-respecting.  If $\phi\subseteq\form({\lang{}})$ is a
formula of ${\lang{}}$ and $\eta$ is a variable assignment, then
$\mdlA\models_\eta\phi$ is defined to mean that $\phi$ is
satisfied in $\mdlA$ under $\eta$ in the usual style following
Tarski.  Specifically: 
\begin{description}
  \item[Terms:]  Extend $\eta$ from variables to $\term({\lang{}})$
  using $\mdlA$ by stipulating inductively that
  $\eta_{\mdlA}(f(t_1,\ldots,t_k))$ will be equal to
  $f_{\mdlA}(\eta_{\mdlA}(t_1),\ldots,\eta_{\mdlA}(t_k))$.
  \item[Atomic formulas:]  Stipulate
  $\mdlA\models_\eta R(t_1,\ldots,t_k)$ iff
  $\eta_{\mdlA}(t_1),\ldots,\eta_{\mdlA}(t_k)\in
  R_{\mdlA}$;
  \item[Compound formulas:]   Stipulate inductively that: \\
  $\mdlA\models_\eta\phi\lor\psi$ iff
  $\mdlA\models_\eta\phi$ or else $\mdlA\models_\eta\psi$; \\
  $\mdlA\models_\eta\phi\land\psi$ iff
  $\mdlA\models_\eta\phi$ and also
  $\mdlA\models_\eta\psi$; \\
  $\mdlA\models_\eta\phi\limp\psi$ iff
  $\mdlA\not\models_\eta\phi$ or else
  $\mdlA\models_\eta\psi$; and \\
  $\mdlA\models_\eta\lnot\phi$ iff
  $\mdlA\not\models_\eta\phi$.

  \medskip If $\eta$ is a variable assignment, $v\colon\s$ is a
  variable, and $a\in\s(\mdlA)$ is a member of the domain for
  sort $\s$, then $\eta[v\mapsto a]$ will mean the variable assignment
  that returns $a$ for $v$ and agrees with $\eta$ for all other
  variables.  Using
  this notation:  \\
  $\mdlA\models_\eta\exists v\colon\s\qdot\phi$ iff, for some
  $a\in\s(\mdlA)$,
  $\mdlA\models_{\eta[v\mapsto a]}\phi$; \\
  $\mdlA\models_\eta\forall v\colon\s\qdot\phi$ iff, for every
  $a\in\s(\mdlA)$, $\mdlA\models_{\eta[v\mapsto a]}\phi$.
\end{description}
$\mdlA\models\phi$ means that
$\mdlA\models_\eta\phi$ for all $\eta$.

$T$ is a \emph{theory} of ${\lang{}}$ iff
$T\subseteq\sent({\lang{}})$ is a set of sentences of ${\lang{}}$.
%
%
An ${\lang{}}$-structure $\mdlA$ is a \emph{model} of $T$,
written $\mdlA\models T$, iff $\mdlA\models \phi$ for
every $\phi\in T$.

Suppose that $\Sigma_0=(S, \leq, \ccolon)$ and
$\Sigma=(S, \leq, \ccolon,\cccolon)$.  If $\mdlA$ and
$\mdlB$ are $\Sigma$-structures, then a
$\Sigma$-\emph{homomorphism}
$H\colon\mdlA\rightarrow\mdlB$ is a sort respecting map
from the domains of $\mdlA$ to the domains of $\mdlB$ such
that
\begin{enumerate}
  \item $H$ restricts to an algebra homomorphism
  $(\restrict H{\Sigma_0}) \colon (\restrict{\mdlA}{\Sigma_0})
  \rightarrow (\restrict{\mdlB}{\Sigma_0})$;
  \item For each relation symbol
  $R\cccolon s_1\times\ldots\times s_k$, and each $k$-tuple
  $(v_1,\ldots v_k)\in\s_1(\mdlA)\times\ldots\times
  \s_k(\mdlA)$,
  \[  (v_1,\ldots v_k)\in R_{\mdlA} \mbox{ implies }
    (H(v_1),\ldots H(v_k))\in R_{\mdlB} .  
  \]
\end{enumerate}
Suppose that $\Phi\in\form({\lang{}})$ uses only the logical
connectives conjunction $\land$ and disjunction $\lor$ and the
existential quantifier $\exists$, and has no occurrences of negation
$\lnot$ or implication $\limp$ or the universal quantifier $\forall$.
Then formula $\Phi$ is \emph{positive existential} (PE).  If
$H\colon\mdlA\rightarrow\mdlB$ is an
${\lang{}}$-homomorphism, then if $\Phi$ is PE,
\[ \mdlA\models_\eta\Phi \mbox{ implies }
  \mdlB\models_{H\circ\eta}\Phi .  
\]
That is, satisfaction is preserved for positive existential formulas,
when we extend the variable assignment $\eta$ to $\mdlB$ by
composing with $H$.  In particular, when $\Phi\in\sent({\lang{}})$
has no free occurrences of variables, then $\mdlA\models\Phi$
implies $\mdlB\models\Phi$.

Suppose $\mdlA$ is a $\Sigma=(S, \leq, \ccolon, \cccolon)$-structure,
and ${\lang{}}'$ is a
$\Sigma'=(S', \leq', \ccolon', \cccolon')$-language, where
$S'\subseteq S$, $\leq'$ is $\leq$ restricted to $S'$, and $\ccolon'$
and ${\cccolon}'$ are subfunctions of $\ccolon$ and $\cccolon$
respectively.  Then the notion of satisfaction carries over, as if
${\lang{}}'$ is a sublanguage of a larger $\Sigma$-language.  The
semantic clauses are all local to the sorts and vocabulary that
actually appear in the formula.  In section~\ref{sec:strands} we will
consider restricted languages in relation to richer structures.



\section{Strands, bundles, and protocols}
\label{sec:strands}

We will now introduce the main notions of the strand space framework,
which underlies {\cpsa}.  A protocol is a set of roles, together with
some auxiliary information, and each role has instances, which are the
behaviors of individual regular principals on different occasions.
Because they have instances, it is natural that protocols involve
\emph{variables}.  Thus, we will build them from linguistic items,
namely terms $\term({\lang{}})$ in suitable ${\lang{}}$.  On the other
hand, bundles are our execution models, and they involve specific
values.  In particular, in the {\DHKA} context, they may involve field
elements, and they may depend on the properties of these field
elements---such as identities involving polynomials---to be successful
executions.  Thus, we will build bundles from non-linguistic items,
namely the values in our message algebra $\Algebraconstr$
(Def.~\ref{def:alg:full}).

Strands represent local behaviors of participants, or basic adversary
actions.  We use them to define the roles of protocols, and this usage
requires terms containing variables.  Strands are also constituents of
bundles, and that usage requires concrete values and field elements.
Hence, we will allow strands of both kinds, and we will henceforth use
the word \emph{message} to cover either terms in some
$\term({\lang{}})$ and also members of a message algebra, specifically
$\Algebraconstr$.  We will sometimes refer to terms in
$\term({\lang{}})$ as ``formal'' messages, and to values in
$\Algebraconstr$ as ``concrete'' messages.

\subsection{General notions}
\label{sec:strands:generally}

We will now introduce strands, bundles (our notion of execution), and
protocols.  

\paragraph{Strands.}  A strand represents a single local session of a
protocol for a single regular participant, or else a single adversary
action.  Suppose that $\Algebra$ is a set of messages such as
$\Algebraconstr$ or $\term({\lang{}})$, and $+,-$ are two values we
use to represent the direction of messages, representing transmission
and reception, respectively.  When $m\in\Algebra$, we write $+m$ and
$-m$ for short for the pairs $(+,m)$ and $(-,m)$.  We write
$\pm\Algebra$ for the set of all such pairs, and $(\pm\Algebra)^+$ for
the set of non-empty finite sequences of pairs.

By a \emph{strand space} $(\strands,\tr)$ \emph{over} $\Algebra$ we
mean a set of objects $\strands$ equipped with a trace function
$\tr\colon \strands\rightarrow(\pm\Algebra)^+$.  For each
$s\in \strands$, $\tr(s)$ is a finite sequence of transmission and
reception events.  Other types of events have also been used, for
instance to model interaction with long term
state~\cite{Guttman12,GuttmanLRR15}, but only transmission and
reception events will be needed here.  We often do not distinguish
carefully between a strand $s$ and its trace $\tr(s)$.

The length $\length{s}$ of a strand $s$ means the number of entries in
$\tr(s)$.  We will use the same notation $\length\alpha$ for the
length of a sequence $\alpha$ and for the cardinality $\length S$ of a
set $S$.  When $1\le i\le\length{s}$, we regard the pair $(s,i)$ as
representing the $i^{\mathrm{th}}$ event of $\tr(s)$; we call $(s,i)$
a \emph{node} and generally write it $\strandnode s i$.  If $n$ is a
node, we write $\msg(n)$ for the message it sends or receives.  The
\emph{direction} $\direction(n)$ of a node is either $+$ for
transmission or $-$ for reception.  Thus, when $n=\strandnode s i$,
$\tr(s)[i]=(\direction(n), \msg(n))$.

We write $n_1\Rightarrow n_2$ when the node $n_2$ immediately follows
$n_1$ on the same strand, i.e.~$n_1\Rightarrow n_2$ holds iff, for
some strand $s$ and integer $i$, $n_1=\strandnode s i$ and
$n_2=\strandnode s {i+1}$.  We write $n_1\Rightarrow^+ n_2$ for the
transitive closure:  it holds iff, for some strand $s$ and integers
$i,j$, $i<j$, $n_1=\strandnode s i$ and $n_2=\strandnode s j$.  The
reflexive-transitive closure $n_1\Rightarrow^* n_2$ is defined by the
same condition but with $i\le j$.

\paragraph{Bundles.}  Fix a strand space $(\strands,\tr)$ over
$\Algebraconstr$.  A binary relation $\rightarrow$ on nodes is a
\emph{communication relation} iff $n_1\rightarrow n_2$ implies that
$n_1$ is a transmission node, $n_2$ is a reception node,
$\msg(n_1),\msg(n_2)\in\Algebraconstr$, and $\msg(n_1)=\msg(n_2)$.

We require that the range of $\msg(\cdot)$ be in $\Algebraconstr$ so
that this last equality has a definite meaning.  If the messages here
were terms, then the equality would depend on what structure is chosen
to interpret the terms.  We will use the bundle notion only when this
structure is already selected; in this paper, our selection is
$\Algebraconstr$.

\begin{definition}
  \label{def:bundle}
  Let $\bnd=(\mathcal{N},\rightarrow)$ be a set of nodes together with
  a communication relation on $\mathcal{N}$.  $\bnd$ is a
  \emph{bundle} (over $\Algebraconstr$) iff:
  \begin{enumerate}
    \item \label{clause:bundle:backward} $n_2\in\mathcal{N}$ and
    $n_1\Rightarrow n_2$ implies $n_1\in\mathcal{N}$;
    \item $n_2\in\mathcal{N}$ and $n_2$ is a reception node implies
    there exists a unique $n_1\in\mathcal{N}$ such that
    $n_1\rightarrow n_2$; and 
    \item Letting $\Rightarrow_{\bnd}$ be the restriction of
    $\Rightarrow$ to $\mathcal{N}\times\mathcal{N}$, the
    reflexive-transitive closure
    $(\Rightarrow_{\bnd}\cup\rightarrow)^*$ is a well-founded
    relation.  We write $\preceq_{\bnd}$ for this relation.
  \end{enumerate}
  We write $\node(\bnd)$ for $\mathcal{N}$.  We say a strand $s$ is
  \emph{in} $\bnd$ iff there exists an $i$
  s.t.~$\strandnode s i\in\node(\bnd)$ (hence, in particular, its
  first node $\strandnode s 1\in\node(\bnd)$, by
  clause~\ref{clause:bundle:backward}).
\end{definition}
If the set of nodes $\node(\bnd)$ is finite, then the well-foundedness
condition is equivalent to saying that the finite directed graph
$(\mathcal{N},\rightarrow\cup\Rightarrow_{\bnd})$ is acyclic.
Protocol analysis uses the following \emph{bundle induction principle}
incessantly:
\begin{lemma}[see~\cite{ThayerHerzogGuttman99}]
  \label{lemma:bundle:induction}
  If $\bnd$ is a bundle, and $S\subseteq\node(\bnd)$ is non-empty,
  then $S$ contains $\preceq_{\bnd}$-minimal nodes.  
\end{lemma}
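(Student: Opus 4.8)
The plan is to read the conclusion as an almost immediate consequence of clause~3 of Definition~\ref{def:bundle}, which already asserts that $\preceq_{\bnd}=(\Rightarrow_{\bnd}\cup\rightarrow)^*$ is well-founded. First I would make the target precise, since $\preceq_{\bnd}$ is reflexive and therefore cannot be well-founded in the literal strict sense: a node $m\in S$ is $\preceq_{\bnd}$-minimal in $S$ when no $n\in S$ distinct from $m$ satisfies $n\preceq_{\bnd}m$. So minimality is understood with respect to the strict part $\prec_{\bnd}$, where $n\prec_{\bnd}m$ holds iff $n\preceq_{\bnd}m$ and $n\neq m$; because $\preceq_{\bnd}$ is a well-founded (hence antisymmetric) partial order, this strict part coincides with the transitive closure $(\Rightarrow_{\bnd}\cup\rightarrow)^+$. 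The claim to establish is then simply that $S$ contains a $\prec_{\bnd}$-minimal element.

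Next I would invoke the standard characterization of well-foundedness. Under the reading in which a relation is well-founded exactly when every non-empty subset of its carrier has a minimal element, the lemma is literally the instance of that property for $S\subseteq\node(\bnd)$, and nothing further is needed. If instead one adopts the ``no infinite strictly descending chain'' formulation, I would argue by contradiction: were $S$ to contain no $\prec_{\bnd}$-minimal element, then every $n\in S$ would admit some $n'\in S$ with $n'\prec_{\bnd}n$, and dependent choice would produce an infinite sequence $\cdots\prec_{\bnd}n_2\prec_{\bnd}n_1\prec_{\bnd}n_0$ lying in $S$, contradicting well-foundedness of $\preceq_{\bnd}$.

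For the finite bundles that dominate the subsequent analysis, I would record the shortcut already flagged after Definition~\ref{def:bundle}: when $\node(\bnd)$ is finite, well-foundedness reduces to acyclicity of the directed graph $(\mathcal{N},\rightarrow\cup\Rightarrow_{\bnd})$. In that case a minimal node of $S$ is found by walking backwards along $\Rightarrow_{\bnd}\cup\rightarrow$ edges that remain inside $S$; acyclicity forces this walk to terminate, and its endpoint is the required $\prec_{\bnd}$-minimal node. This also makes explicit that no choice principle is needed in the finite setting.

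The main obstacle is bookkeeping rather than mathematics: fixing which formulation of ``well-founded'' is in force, and handling the reflexivity of $\preceq_{\bnd}$ correctly so that ``minimal'' is phrased against the strict part $\prec_{\bnd}$. Once those conventions are pinned down, the statement is a direct unfolding of the bundle axioms, which is why it can be attributed to~\cite{ThayerHerzogGuttman99} without a lengthy argument.
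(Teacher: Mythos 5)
Your proposal is correct and matches the paper's treatment: the paper offers no proof at all, citing~\cite{ThayerHerzogGuttman99} precisely because the claim is a direct unfolding of clause~3 of Definition~\ref{def:bundle}, which builds well-foundedness of $\preceq_{\bnd}$ into what it means to be a bundle. Your care in reading minimality against the strict part of the reflexive relation, and in separating the two formulations of well-foundedness (with the finite acyclic-graph case noted after the definition), is exactly the bookkeeping the citation elides.
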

For instance, if $S$ is the set of nodes at which something bad is
happening, this principle justifies considering how it could first
start to go wrong, and examining the possible cases for that.  

Bundles furnish our model of execution.  Given a protocol $\Pi$, an
execution of $\Pi$ with an active adversary is a bundle in which every
strand represents either an initial part of a local run of some role
in $\Pi$, or else some adversary activity.  

\paragraph{Protocols.}  A protocol consists of a set of strands over
$\term({\lang{}})$, together with some auxiliary information that
provides assumptions, usually about fresh choices and uncompromised
keys.

Let $\Sigma_0$ be a subsignature of $\Sigmaconstr$ including
{\scmesg}, and let $\Sigma=(S, \leq, \ccolon, \cccolon)$ augment
$\Sigma_0$ with one new incomparable sort {\scnode}.  We will let
$\ccolon$ be identical with the arity function of $\Sigma_0$, and
require that the new relation symbols of $\cccolon$ all involve
{\scnode}.  Thus, letting ${\lang{}}$ be a language over
$\Sigma$, we can interpret ${\lang{}}$ over structures whose
{\scmesg}s belong to the structure $\Algebraconstr$.

We will define an ${\lang{}}$-protocol $\Pi$.  A protocol consists
of a finite set of strands over $\term({\lang{}})$ (the roles of
$\Pi$) together with a function $\assume$ from nodes to formulas in
$\form({\lang{}})$.  For simplicity, we also refer to the set of
roles of $\Pi$ by the symbol $\Pi$.  Thus, we say a protocol $\Pi$
consists of:
\begin{enumerate}
  \item A finite strand space $(\strands,\tr)$ over the terms
  $\term({\lang{}})$.  We call this finite set of strands the
  \emph{roles} of $\Pi$.  We often write $\rho\in\Pi$ to mean that
  $\rho\in\strands$, thereby reducing notation.  We write $\node(\Pi)$
  to mean $\{\strandnode\rho i\colon\rho\in\Pi$ and
  $1\le i\le\length\rho\}$; the image of this set under $\msg(\cdot)$
  consists of terms:  $\msg(\node(\Pi))\subseteq\term({\lang{}})$.

  The \emph{parameters of} $n\in\node(\Pi)$ and $\rho\in\Pi$ are the
  sets:  
  %
  \begin{eqnarray*}
    \kind{params}(n) &=& \{v\in\kind{Var}({\lang{}}) \colon\exists
                         n_0\qdot n_0\Rightarrow^* n\mbox{ and } v\in\fv(\msg(n_0)) \}. \\ 
    \kind{params}(\rho) &=& \{v\in\kind{Var}({\lang{}}) \colon\exists
                            i\qdot v\in\kind{params}(\strandnode\rho i) \}.
  \end{eqnarray*}


  \item A function
  $\assume\colon\node(\Pi)\rightarrow\form({\lang{}})$ such that,
  for a distinguished variable $v_n\colon{\scnode}$, for all
  $n\in\node(\Pi)$, the free variables
  $\fv(\assume(n))\subseteq\{v_n\}\cup\kind{params}(n)$.
\end{enumerate}
Suppose that $\rho\in\Pi$ and $n_1=\strandnode s j\in\node(\bnd)$,
where $\bnd$ is a bundle over $\Algebraconstr$, and $\eta$ is a
variable assignment
$\eta\colon \kind{Var}({\lang{}}) \rightarrow
\node(\bnd)\cup\Algebraconstr$.  Then $n_1$ is an \emph{instance of}
$n_2=\strandnode\rho i$ \emph{under} $\eta$ iff $i=j$ and,
inductively:
\begin{enumerate}
  \item $\eta(\msg(n_2))=\msg(n_1)$;  
  \item $\eta(v_n)=n_1$ and $\bnd\models_{\eta}\assume(n_2)$; and
  \item if $i=k+1$ where $k\ge 1$, then $n_0=\strandnode s k$ is an
  instance of $\strandnode\rho k$ under $\eta[v_n\mapsto n_0]$.
\end{enumerate}
That is, $\eta$ should send the terms of all nodes on the role up to
$n_2$ to the corresponding messages in $\Algebraconstr$, and all the
assumptions should be satisfied.  We will define the \emph{adversary
  strands} momentarily; relative to that notion, we can define when a
bundle is a possible run of a particular protocol.
\begin{definition}
  \label{def:bundle:pi}
  Suppose that $\bnd$ is a bundle (over $\Algebraconstr$) and $\Pi$ is
  a protocol.  $\bnd$ is a \emph{bundle of protocol} $\Pi$ iff, for
  every strand $s$ in $\bnd$, either
  \begin{enumerate}
    \item $s$ is an adversary strand, or else 
    \item letting $n_1=\strandnode s i$ be the last node of $s$ in
    $\node(\bnd)$, there is a role $\rho\in\Pi$ and a variable
    assignment $\eta_s$ s.t.~$n_1$ is an instance of
    $\strandnode\rho i$ under $\eta_s$.
  \end{enumerate}
\end{definition}
%
%
By altering $\eta_s$ at $v_n$, we also obtain assignments
$\eta[v_n\mapsto(\strandnode s j)]$ that witness for earlier nodes
along $s$ being instances of corresponding nodes $\strandnode\rho j$.

There may be several different $\rho_j$ of which a given $n_2$ is an
instance.  This may happen when the different $\rho_j$ represent
branching behaviors that diverge only after the events present in
$\bnd$.

We will make the languages $\lang{}$ more specific in
Section~\ref{sec:skeletons}.  

\subsection{The adversary}
\label{sec:strands:adversary}


The adversary's computational abilities include the remaining
algebraic operations, as well as the derivation rules summarized in
Fig.~\ref{fig:rules}.  The adversary can also select values.  We
represent these abilities as strands.  A computation in which the
adversary generates $f(a,b)$ as a function of potentially known values
$a$ and $b$ may be expressed as a strand
$-a\Rightarrow -b\Rightarrow +f(a,b)$.  If the inputs $a$ and $b$ are
in fact known, then the adversary can deliver them as messages to be
received by this strand.  The strand will then transmit the value
$f(a,b)$.  That in turn can either be delivered to a regular strand,
or else delivered to further adversary strands to compute more
complicated results.  In this way, we build up acyclic graph
structures that ``mimic'' any composite derivations that may be
generated by rules such as those in Fig.~\ref{fig:rules}.

The adversary can also select values of his own choosing.  We will
assume that this means he can originate values of the sorts
${\scskey}, {\scname}, {\sctext}, {\scakey}, {\sctrsc}$ and
$\mathbb Q$.  In particular, if the adversary wants, the adversary can
choose a random exponent, which we represent by a choice in {\sctrsc}.
The adversary can certainly choose the field value $\one$, and by
using addition and division, then adversary can obtain any rational in
$\mathbb Q$.  In all, this gives the adversary the abilities
represented in the strands shown in Fig.~\ref{fig:adversary:strands}.
\begin{figure}[tb]
  \begin{description} 
    \item[Creation:] \quad $+\gen$\quad  $+\one$ 
    \quad
    $+a$ \qquad 
    for \quad
    $a\colon\sccreate
    $
    \smallskip 
    \item[Multiplicative ops:]
    $-w_1\Rightarrow -w_2\Rightarrow +w_1\cdot w_2$ \qquad
    $-w_1\Rightarrow -w_2\Rightarrow +w_1/ w_2$  \\
    \phantom{$-w_1\Rightarrow -w_2\Rightarrow +w_1\cdot w_2$}
    $-h\Rightarrow -w\Rightarrow +\expt(h,w)$ 
    \item[Additive ops:]
    $-w_1\Rightarrow -w_2\Rightarrow +(w_1+_\thefield w_2)$ \qquad
    $-w_1\Rightarrow -w_2\Rightarrow +(w_1-_\thefield w_2)$ \\
    \phantom{$-w_1\Rightarrow -w_2+$}
    $-\expt(h,w_1)\Rightarrow -\expt(h,w_2)\Rightarrow
    +\expt(h,w_1+_\thefield w_2)$ \smallskip
    \item[Construction:] \quad
    $-m_1\Rightarrow -m_2\Rightarrow +(m_1,m_2)$\qquad
    $-m\Rightarrow -K\Rightarrow +\enc m K$
    \item[Destruction:] \quad $-(m_1,m_2)\Rightarrow +m_1$ \qquad
    $-(m_1,m_2)\Rightarrow +m_2$ \\
    \phantom{$-w_1\Rightarrow -w_2\Rightarrow +w_1\cdot w_2$}
    $-\enc m K\Rightarrow -K^{-1}\Rightarrow +m$
  \end{description}
  \begin{center}
    Symbols $+,-$ mean transmission and reception.  \\
    Symbols $+_\thefield,-_\thefield$ mean field addition and
    subtraction.
  \end{center}
  \caption{Adversary strands for $\Algebraconstr$.}
  \label{fig:adversary:strands}
\end{figure}

The first three groups concern basic values.  The first group presents
the creation strands, which allow the adversary to generate its own
values.  The second and third groups contains the algebraic
operations, in which the arguments to the operation are received and
the result is transmitted.  The second group contains the operations
that use the multiplicative structure of the field, while the third
group contains those that use the additive structure.  We separate
them because we will omit them from the powers of the adversary when
we restrict our protocols to those in which the regular participants
use only multiplicative structure (following~\cite{LiskovThayer14}).


The fourth group provides adversary strands that model the
constructive $\uparrow$-rules of Fig.~\ref{fig:rules}.  The fifth
group provides adversary strands to cover the destructive
$\downarrow$-rules of Fig.~\ref{fig:rules}.  We write $K^{-1}$ for
$\kinv(K)$ when $K\colon{\scakey}$; we stipulate that $K^{-1}=K$ for
any key not of that sort.

By ``routing'' the results of some adversary strands as inputs to
others, the adversary can build up finite acyclic graph structures
that do all the work of the inductively defined derivation relation
generated from Fig.~\ref{fig:rules} and corresponding algebraic rules.
Moreover, the adversary strands have an advantage:  They provide
well-localized, \emph{earliest} places where certain kinds of values
become available, as we will illustrate in Section~\ref{sec:tests}.
Hence, defining bundles to include these adversary strands provides a
convenience for reasoning.

We formalize the idea of routing results among adversary strands as
\emph{adversary webs}:
\begin{definition}
  \label{def:adversary:web}
  Suppose that $(W,\rightarrow)$ consists of a finite strand space in
  which every strand is an adversary strand, together with a
  communication relation $\rightarrow$ on the nodes of $W$.

  $(W,\rightarrow)$ is an \emph{adversary web} iff it is acyclic.  We
  write $\preceq_W$ for the well-founded partial order
  $(\Rightarrow_W\cup\rightarrow)^*$.
  A node $n\in\node(W)$ is a \emph{root} iff it is
  $\preceq_W$-maximal.  It is a \emph{leaf} iff it is a
  $\preceq_W$-minimal reception node.  
\end{definition}
We regard an adversary web as a method for deriving its roots,
assuming that its leaves are somehow obtained with the help of regular
protocol participants.  A web may not have any leaves, as happens when
all of its $\preceq_W$-minimal nodes are creation nodes, and thus
transmissions.  Every bundle contains a family of adversary webs,
which show how the adversary has obtained the messages received by the
regular participants, with the help of earlier regular transmissions.
By expanding definitions, we obtain the conclusions summarized in this
lemma:  
\begin{lemma}
  \label{lemma:adversary:web}
  Let $\bnd$ be a $\Pi$-bundle, and let $n\in\node(\bnd)$ be a regular
  reception node.  Define $W$ to contain an adversary strand $s$ iff
  there is a path in $\bnd$ from the last node
  $\strandnode s{\length{s}}$ to $n$ that traverses only adversary
  nodes.  Define $\rightarrow$ to be
  $\rightarrow_\bnd\cap(W\times W)$.  Then:
  \begin{enumerate}
    \item $(W,\rightarrow)$ is an adversary web.
    \item If $W$ is non-empty, then $W$ has a root $n_1$ such that
    $\msg(n_1)=\msg(n)$.  
    \item For every leaf $n_1$ of $(W,\rightarrow)$, then there exists
    a regular transmission node $n_0\in\node(\bnd)$ such that
    $n_0\preceq_\bnd n$ and $\msg(n_1)=\msg(n_0)$.
  \end{enumerate}
\end{lemma}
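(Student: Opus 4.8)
The plan is to derive all three conclusions directly from the bundle axioms (Definition~\ref{def:bundle}) and the adversary-web definition (Definition~\ref{def:adversary:web}), leaning on two structural observations. The first is that every node of $W$ precedes $n$ in the bundle order: each strand $s\in W$ carries, by construction, an adversary-only path from $\strandnode s{\length s}$ to $n$, and every node of $s$ reaches $\strandnode s{\length s}$ through $\Rightarrow_{\bnd}$, so $\node(W)\subseteq\{m\colon m\preceq_{\bnd}n\}$. The second is that each adversary strand of Fig.~\ref{fig:adversary:strands} has a single transmission node, occurring last; hence any adversary transmission node equals $\strandnode s{\length s}$ for its strand $s$, and whenever it feeds $n$ the one-edge path already witnesses $s\in W$. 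I will also use repeatedly that, because $n$ is regular while these defining paths traverse only adversary nodes, any adversary-only path can enter $n$ solely through $n$'s unique cause (the second clause of Definition~\ref{def:bundle}).

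For conclusion~(1) the adversary-strand and communication-relation requirements hold by construction, since $\rightarrow$ is the restriction of $\rightarrow_{\bnd}$ and a restriction of a communication relation is again one. Acyclicity is inherited: $(\Rightarrow_W\cup\rightarrow)^{*}$ is a subrelation of $\preceq_{\bnd}$ on $\node(W)$, and $\preceq_{\bnd}$ is well-founded. The only substantive point is the finiteness demanded of the strand space; by the first observation $\node(W)$ sits inside the $\preceq_{\bnd}$-predecessors of $n$, and since each node has at most two immediate predecessors (its $\Rightarrow$-predecessor and, when it is a reception, its unique cause) while $\preceq_{\bnd}$ has no infinite descending chain, a K\"onig-style argument bounds this set.

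For conclusion~(2), with $W\neq\emptyset$, let $n_1$ be the unique node with $n_1\rightarrow_{\bnd}n$. Since some strand of $W$ reaches $n$ by an adversary-only path, the final edge into $n$ issues from an adversary node, so $n_1$ is adversary; by the second observation $n_1=\strandnode s{\length s}\in\node(W)$, and $\msg(n_1)=\msg(n)$ by the communication relation. The delicate part, which I expect to be the main obstacle, is showing $n_1$ is $\preceq_W$-maximal, i.e.\ a root. It has no $\Rightarrow_W$-successor, so the only threat is an edge $n_1\rightarrow m''$ with $m''\in\node(W)$. Here the adversary-only paths pay off: the path carrying $m''$'s strand to $n$ must enter $n$ through its unique cause $n_1$, so $m''\preceq_{\bnd}n_1$, while $n_1\rightarrow m''$ gives $n_1\prec_{\bnd}m''$; together these contradict well-foundedness. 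Ruling out this ``double-feeding'' configuration is exactly where uniqueness of causes and well-foundedness must be combined, rather than either alone.

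For conclusion~(3), let $\ell$ be a leaf, a $\preceq_W$-minimal reception node, with unique cause $n_0\rightarrow_{\bnd}\ell$. I would argue $n_0$ is regular by contradiction: were it adversary, the second observation makes it the last node of a strand $s_0$, and splicing $n_0\rightarrow\ell$ before the adversary-only path carrying $\ell$'s strand to $n$ exhibits an adversary-only path from $\strandnode{s_0}{\length{s_0}}$ to $n$, placing $s_0$ in $W$ and hence $n_0\prec_W\ell$, against minimality of $\ell$. Thus $n_0$ is a regular transmission node, $\msg(n_0)=\msg(\ell)$ by the communication relation, and $n_0\prec_{\bnd}\ell\preceq_{\bnd}n$ yields $n_0\preceq_{\bnd}n$, as required. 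The finiteness step in conclusion~(1) is a close second in difficulty, but becomes routine once the bounded-branching observation is in place.
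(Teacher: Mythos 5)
Your proof is correct, and it is essentially the approach the paper intends: the paper offers no explicit argument at all, saying only that the lemma follows ``by expanding definitions,'' and your proof is a sound elaboration of exactly that expansion. The three points you single out as substantive are indeed the ones needing care---finiteness of $W$ via bounded backward branching plus well-foundedness, $\preceq_W$-maximality of $n$'s unique cause by combining cause-uniqueness with acyclicity, and regularity of a leaf's cause by splicing an edge onto an adversary-only path---and you handle each of them correctly.
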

We call $(W,\rightarrow)$ \emph{the adversary web rooted at} $n$ \emph{in} $\bnd$.

\subsection{Properties of protocols}
\label{sec:strands:properties}
\label{sec:strands:restrictions}

\paragraph{Paths and positions.}
We often want to look inside values built by the free constructors,
identifying the parts by their position.  This is well-defined because
we use these positions only within constructors that are free.

A \emph{position}~$\pi$ is a finite sequence $\pi\in\mathbb{N}^*$ of
natural numbers.  The concatenation of positions $\pi$ and $\pi'$ is
written $\pi \append \pi'$.
A \emph{path} is a pair $p=(m,\pi)$.  The (1-based) submessage of $m$
\emph{at}~$\pi$, written~$m\termat \pi$, is defined recursively:
\begin{equation*}
\begin{array}{rcl}
  m\termat\seq{}&=&m;\\
  \tau(m_1,\ldots,m_i)\termat \seq{j} \append \pi&=&m_j\termat \pi\mbox{
  when $1\le j\le i$};\\
  \enc{m}{k}\termat \seq{1} \append \pi&=&m\termat \pi;\\
  \enc{m}{k}\termat \seq{2} \append \pi&=&k\termat \pi;\\
  \mbox{ otherwise:}\quad\qquad m\termat\pi &=& \perp
\end{array}
\end{equation*}
When $p = (m, \pi)$ is a path, we write $\termat p$ for $m \termat \pi$.
We often focus only on \emph{carried paths} which do not descend
into keys of encryptions.
%
\begin{definition}
  \label{def:paths:1} 
  Suppose $\pi$ is a position and $m$ is a message.  


  Path $p=(m, \pi)$ \emph{terminates} at its endpoint $\termat p$.

  If $\pi=\pi_1\append\pi_2$, then $p$ \emph{visits}
  $m \termat \pi_1$.  If moreover $\pi_2\not=\seq{}$,  $p$
  \emph{traverses} $m \termat \pi_1$.

  If $c\in\Algebraconstr$, then we say that $c$ \emph{is a unit} iff
  $c$ is not a tuple.  Equivalently, $c$ is a unit iff it is either
  basic, $c\in\Algebrabasic$, or else a cryptographic value
  $c=\enc m K$.
\end{definition}
Paths traverse only 
free constructors of $\Sigmaconstr$, and never traverse messages
belonging to the underlying given algebra $\Algebrabasic$.

Importantly, this helps justify adding a proper treatment of DH's
algebraic behavior without disrupting {\cpsa}'s existing approach to
cryptography and freshness.  Equational theories on $\Algebrabasic$
(e.g.~a commutative law) can never disrupt the unambiguous notion of
the submessage at a position $\pi$, whenever it applies.  This also
justifies using the path notion for terms in $\term({\lang{}})$ as
well as members of $\Algebraconstr$.  The latter mimic the structure
of the former unambiguously throughout the free operators.

\begin{definition}
  \label{def:paths}
  \begin{enumerate}
    \item A path $p=(m, \pi)$ is a \emph{carried path} iff $p$ never
    visits the key of an encryption, but only its plaintext.  That is,
    if $\pi=\pi_1\append\pi_2$ and $m\termat\pi_1=\enc{t}K$, and if
    $\pi_2\not=\seq{}$, then $\pi_2=\seq{1,\ldots}$ and not
    $\pi_2=\seq{2,\ldots}$.

    Message $m_0$ is \emph{carried in} $m_1$, written
    $m_0\ingredient m_1$ iff $m_0=m_1\termat \pi$ for some carried
    path $(m_1, \pi)$.

    \item A message $m$ \emph{originates} at node $n=\strandnode s i$
    iff $m\ingredient\msg(n)$, and $n$ is a transmission
    $\direction(n)=+$, and $1\le j<i$ implies
    $m\not\ingredient\msg(\strandnode s j)$.

    \item If $S$ is a set of nodes, then $m$ \emph{originates
      uniquely} in $S$ iff $m$ originates on exactly one $n\in S$.  It
    is \emph{non-originating} in $S$ iff it originates on no $n\in S$.
 
    \item Message $t$ is \emph{visible in} $m$ iff $t=m\termat \pi$
    for some carried path $p=(m, \pi)$ such that $p$ never traverses
    an encryption, but only tuples.
  \end{enumerate}
  %
%
\end{definition}
%
%
We will use the following lemma later, in proving
Lemma~\ref{lemma:simple:visible}; it says a field value originating on
an adversary node is the whole message of that node.
\begin{lemma}
  \label{lemma:adv:visible}
  Suppose that $\bnd$ is a bundle, and $p\colon\field$ originates at
  an adversary node $n\in\node(\bnd)$.  Then $p=\msg(n)$.
\end{lemma}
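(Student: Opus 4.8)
The plan is to argue by a case analysis on the adversary strand carrying $n$, using the list in Fig.~\ref{fig:adversary:strands}. Since $p$ originates at $n$, the node $n$ is a transmission, hence it is the unique $+$-node of its strand, and every reception on that strand precedes it under $\Rightarrow$. Origination also gives us $p \ingredient \msg(n)$ while $p \not\ingredient \msg(n')$ for every earlier node $n'$ of the same strand. I would lean on two small facts about the carried-in relation of Def.~\ref{def:paths}. First, if $c\in\Algebrabasic$ is basic and $p \ingredient c$, then $p=c$: the recursion defining $m\termat\pi$ descends only through tuples and encryptions, so a basic value has no nonempty carried path and its only ingredient is itself. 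In particular a field value is never a proper ingredient of a group element or of any other basic value, and can equal such a value only when that value is $p$ itself. Second, $\ingredient$ is transitive, since concatenating two carried paths (neither of which visits an encryption key) again yields a carried path.

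With these in hand I would split the adversary strands into three groups. (1) Strands whose transmitted message is basic: the creation strands $+\gen$, $+\one$, $+a$, the field operations $+w_1\ftimes w_2$ and $+w_1\fdivide w_2$, the additive field operations, and the two exponentiation strands. If this basic value lies in $\field$ --- i.e.\ it is $\one$, a created transcendental, or the result of a field operation --- then by the first fact $p \ingredient \msg(n)$ forces $p=\msg(n)$, as desired. If instead it is a group element or a basic value of some other sort (a key, a name, a text, or $\gen$ itself), then by sort-disjointness $p$ cannot be an ingredient of it at all, contradicting $p \ingredient \msg(n)$; so $n$ is not such a node. (2) The construction strands $+(m_1,m_2)$ and $+\enc{m}{K}$. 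Here $\msg(n)$ is a tuple or an encryption, so $\msg(n)\neq p$ by sort; hence any witnessing carried path for $p \ingredient \msg(n)$ is nonempty, and since it avoids keys it factors through a component $m_j$ (for the tuple) or through the plaintext $m$ (for the encryption). That component or plaintext is exactly the message received at an earlier reception node of the strand, so by transitivity $p$ is an ingredient of an earlier node, contradicting origination. (3) The destruction strands $+m_1, +m_2$ from $-(m_1,m_2)$ and $+m$ from $-\enc{m}{K}$. In each, $\msg(n)$ is carried in the tuple or ciphertext received at the first node --- $m_j \ingredient (m_1,m_2)$, and $m \ingredient \enc{m}{K}$ via the plaintext path --- so $p \ingredient \msg(n) \ingredient \msg(n')$ for the earlier reception $n'$, and transitivity again contradicts origination.

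Combining the three groups, the only adversary transmission nodes at which a field value $p$ can originate are those of group~(1) that transmit a basic field value, and there $p=\msg(n)$. The main obstacle lies in groups~(2) and~(3): I must make precise that a carried path reaching a field value inside a transmitted tuple or encryption can be pushed back to a carried path inside a message received earlier on the same strand. This rests on the exact shape of the adversary strands (the transmitted components or plaintext are literally the earlier-received messages), on the transitivity of $\ingredient$, and on the fact that carried paths never enter encryption keys --- which is precisely why the destruction strand $-\enc{m}{K}\Rightarrow -K^{-1}\Rightarrow +m$, the one place where the adversary exposes a plaintext, still cannot let a field value originate.
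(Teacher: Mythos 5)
Your proof is correct and follows essentially the same route as the paper's: a case split over the adversary strand kinds, using the fact that basic values admit no nontrivial carried path (so their only ingredient is themselves) and that construction/destruction strands push any carried basic value back to an earlier reception, contradicting origination. The paper's own proof is just a terser statement of exactly this argument, disposing of the tupling, separation, encryption, and decryption strands in one sentence and concluding that $n$ lies on a creation or algebraic strand whose transmitted message is basic.
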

\begin{proof}
  The adversary node $n$ does not lie on an encryption, decryption,
  tupling, or separation strand, none of which originate any basic
  value.  Thus, $n$ lies on a creation or algebraic strand, and
  $\msg(n)\colon{\scsbasic}$.  Since no nontrivial path exists
  within a basic value, no value other than $\msg(n)$ is carried
  within $\msg(n)$.  \qed
\end{proof}

\paragraph{The ``acquired constraint'' on protocols.}  Since the roles
of a protocol definition form a strand space over $\term(\lang{})$,
their messages contain variables.  Some of these variables
$X\colon{\scmesg}$ can be instantiated by any message, meaning that
their instances do not have a predictable ``shape.''

Two things may go wrong if these variables occur first in a
transmission node.  First, syntactic constraints on the roles of a
protocol will not enforce invariants on the instances of the roles.
Instances may map a variable of sort message to any
$m\in\Algebraconstr$, so that the transmitted message may have any
value as a submessage.  Thus, even values that could never be computed
as a consequence of the workings of the protocol can originate as
arbitrary instances of $X\colon{\scmesg}$.

If variables of sort message may occur first in transmission nodes,
then we could not guarantee a finitely branching {\cpsa} search.  If
$X\colon{\scmesg}$ first appears in a transmission node on
$\rho\in\Pi$, then its instances will include messages of all formats.
Not all of them are instances of any finite set of more specific
forms, if those forms do not use further variables $Y\colon{\scmesg}$
of sort message.  This blows up the search.  If, instead, $X$ comes
from a reception node, it may still take infinitely many formats, but
the relevant differences arise in a finitely branching search, guided
by the forms of earlier transmissions.

A variable $X\colon{\scmesg}$ is \emph{acquired on} node
$n_1=\strandnode\rho i$ iff $X$ is carried in $\msg(n_1)$,
$\direction(n_1)=-$, and for all $n_0\Rightarrow^+n_1$,
$X\notin\fv(\msg(n_1))$.

$\Pi$ \emph{satisfies the acquired constraint} iff, for any
$n_1\in\node(\Pi)$ and any variable $X\colon{\scmesg}$, if
$X\in\fv(\msg(n_2))$, then there is a reception node $n_1$ such that
$n_1\Rightarrow^* n_2$ and $X$ is acquired on $n_1$.

\paragraph{Protocols that separate transcendental variables.}  We will rely in
our analysis on a property that many protocols, though not all,
satisfy.  We will exclude the remaining protocols from our tool.  The
property concerns how the protocol handles field values that are
\emph{not} used for exponentiation but transmitted in carried
position.  These field values contribute to messages as entries in
tuples and in the plaintexts of cryptographic operations.  We do not
constrain how field values feed into exponents at all.

A protocol $\Pi$ that satisfies the acquired constraint
\emph{separates transcendental variables} iff, whenever $\rho\in\Pi$,
$n=\strandnode\rho i$ is a transmission node on $\rho$, and
$p=(\msg(n),\pi)$ is a carried path in the term $\msg(n)$, if
$t={\msg(n)\termat\pi}$ is of sort field $t\colon\scfld$, then $t$ is
a variable of the sort transcendental, $t\in\kind{Var}$ and
$t\colon{\sctrsc}$.

Thus, $\Pi$ never sends out---in carried position---a term that has
the form of a product $xy$ or $3x$.  By contrast, we can use them for
exponentiation and send out $\gen^{xy}$ or $\gen^{3x}$ without any
problem.  

This restriction is mild for us.  We use field values in carried
position mainly to model the compromise of regular choices that were
previously secret, for instance to represent forward secrecy.  We
represent these choices as transcendentals, and therefore the
compromise event transmits one newly compromised values or, if
several, as a tuple but not algebraically combined.

This rules out some protocols.  For instance, the Schnorr signature
protocol transmits a term $k-xe$ where $k\colon\sctrsc$ is an
ephemeral random value; $x\colon\sctrsc$ is the signer's long-term
secret; and $e$ is a term representing the hash of the message to be
signed and $g^k$.  This is an anthology of things we do not represent.
It involves the additive structure of the field, which we will
subsequently assume the protocol definition ignores.  It transmits a
field value that is not simply one of the transcendentals $x,k$.  And,
moreover, it uses hashing to generate a field value.  Our hash
function $\hash m$ generates values of sort {\scmesg}, and our algebra
offers no way to coerce them to the field ${\thefield}$.  This last is
the simplest to remedy, and a version of {\cpsa} in the near future
will support hashing into the exponent.

We can however perfectly well view Schnorr signatures as a
cryptographic primitive $\enc m K$, where $K$ involves the signer's
long-term secret.  We just do not represent how it works.
Computational cryptography backs up the assumption that it in fact
acts as a signature.  

A concrete message is present in another one if it is at the end of
any path, whether carried or not, or if it is a transcendental with
non-zero degree in something at the end of a path:
\begin{definition}
  \label{def:present}
  A concrete message $m_0\in\Algebraconstr$ is \emph{present} in $m_1$
  iff there is a path $p=(m_1,\pi)$ such that either
  $m_0=m_1\termat\pi$, or else $m_0\in\trsc$ and, letting
  $c=m_1\termat\pi$, either
  \begin{enumerate}
    \item $c\in\thefield$ and $m_0$ has non-zero degree in $c$; or 
    \item $c=\gen^p\in\thegroup$, and $m_0$ has non-zero degree in
    $p$.
  \end{enumerate}
  A message $m\in\Algebraconstr$ is \emph{chosen} at node
  $n=\strandnode s i$ iff $m\ingredient\msg(n)$, and $n$ is a
  transmission $\direction(n)=+$, and $1\le j<i$ implies $m$ is not
  present in $\msg(\strandnode s j)$.
\end{definition}
We do not make any corresponding definition for formal messages
$t\in\term(\lang{\Pi})$, because an occurrence of a variable
$v\colon\sctrsc$ in a term $t\colon\scfld$ is not preserved under
interpretation or under substitution.  For instance, when $t$ is
$v\cdot w$, for a variable $w\colon\scfld$, then $\eta_\bnd(t)=\one$
when $\eta(w)=\one/\eta(v)$.  Similarly, $v$ cancels out under the
substitution $w\mapsto w_1/v$.  \emph{Chosen} is to ``present'' as
\emph{originates} is to ``carried.''  

Recall (Def.~\ref{def:paths}) that a message $t$ is \emph{visible} in
a message $m$ iff we can reach $t$ within $m$ by traversing only
tuples.
\begin{lemma}
  \label{lemma:simple:visible}
  Suppose $\Pi$ separates transcendentals, $\bnd$ is a $\Pi$-bundle, and
  $x\in\trsc$ is present in $p\in{\thefield}$.  If $p$ is visible in
  $\msg(n_p)$ for $n_p\in\node(\bnd)$, then $x$ is visible in
  $\msg(n_x)$ for some $n_x\preceq_{\bnd}n_p$.
\end{lemma}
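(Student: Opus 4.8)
The plan is to argue by well-founded bundle induction (Lemma~\ref{lemma:bundle:induction}) on the set $T$ of nodes $n\in\node(\bnd)$ such that $x$ is present in some field value $q\in\thefield$ that is visible in $\msg(n)$. The hypothesis puts $n_p\in T$, so $T\neq\emptyset$ and has a $\preceq_\bnd$-minimal member $n_0$. I would then show that at $n_0$ the witnessing value $q$ is forced to equal $x$ itself, so that $x$ is visible in $\msg(n_0)$ and we may take $n_x=n_0\preceq_\bnd n_p$. First I would dispose of the case that $n_0$ is a reception: by the bundle axioms (Def.~\ref{def:bundle}) it has a unique matching transmission $n_1\rightarrow n_0$ with $\msg(n_1)=\msg(n_0)$, so $q$ is still visible in $\msg(n_1)$ while $n_1\prec_\bnd n_0$, contradicting minimality. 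Hence $n_0$ is a transmission, either an adversary strand (Fig.~\ref{fig:adversary:strands}, multiplicative fragment only) or a regular instance.

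For the adversary cases I would walk through the strands of Fig.~\ref{fig:adversary:strands}. A visible value inside a unit message equals the whole message, so $\gen$, $\expt(h,w)$, and $\enc{m}{K}$ transmissions are excluded (their values are not of sort field), and $+\one$ is excluded since $x$ has degree $0$ in $\one$. A creation $+a$ can contribute only when $a\colon\sctrsc$, and then $q=a$ is a single transcendental, forcing $x=a=q$ by Def.~\ref{def:present}, so $x$ is visible at $n_0$. For the multiplicative operations $+w_1\cdot w_2$ and $+w_1/w_2$, $q$ is the whole value and $x$ has nonzero degree in some factor $w_i$, which is received on a strictly earlier node of the same strand; there $w_i$ is the entire (hence visible) message, so that earlier node lies in $T$, contradicting minimality. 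The tupling and projection strands likewise push the visible occurrence of $q$ to a strictly earlier reception, again contradicting minimality.

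The two genuinely load-bearing cases are the adversary decryption strand $-\enc{m}{K}\Rightarrow -K^{-1}\Rightarrow +m$ and the regular transmission. For decryption, $q$ is visible in $m$, hence carried inside the encryption $\enc{m}{K}$ transmitted by the matching node $n'\prec_\bnd n_0$. If $n'$ is itself an adversary construction $+\enc{m}{K}$, its plaintext $m$ was received strictly earlier with $q$ still visible, contradicting minimality; if $n'$ is regular, the carried path reaching the field subterm lands, by the \emph{separates transcendentals} property, on a transcendental variable, so $q$ is a single transcendental and $q=x$, whence $x$ is visible at $n_0$. For a regular transmission $n_0$, the tuple-only (visible) path to $q$ transfers, since tupling is free, to a formal subterm $t$ of the role term with $\eta(t)=q$. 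Because $q$ is a field unit, $t$ is neither a tuple, cryptographic, nor group term; and because separates transcendentals forbids compound field terms in carried position, $t$ is a variable. If $t$ is of field sort it must be a transcendental variable, giving $q=x$ directly.

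The remaining subcase, a message variable $t=X\colon\scmesg$ with $\eta(X)=q$, is where I expect the main difficulty. Here I would invoke the acquired constraint to locate a reception node $n_1\Rightarrow^+n_0$ on which $X$ is acquired, so that $q$ is carried in $\msg(n_1)$ and $n_1\prec_\bnd n_0$. If $q$ is in fact visible in $\msg(n_1)$ we contradict minimality; otherwise $q$ sits inside an encryption, and tracing $n_1$ to its matching transmission returns us to exactly the decryption-style analysis above, where the source of that encryption is either an adversary construction (yielding an earlier visible occurrence of $q$, contradicting minimality) or a regular node (where separates transcendentals pins $q=x$). The crux throughout is that visibility of a transcendental can only be lost and regained across encryption boundaries, so the $\preceq_\bnd$-minimal point of genuine visibility must be a creation of $x$, a regular reveal of a transcendental, or an extraction whose source is regular; in every surviving case separates transcendentals collapses the witnessing field value to the bare transcendental $x$. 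Making this last reduction fully rigorous---rather than merely ``returning to the previous case''---is the step I would need to nail down, most cleanly by strengthening the induction so that its statement tracks carried (not only visible) occurrences and records that any such occurrence is forced to be $x$ at its regular source.
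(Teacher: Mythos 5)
Your overall strategy (bundle induction plus a case analysis over the adversary strands of Fig.~\ref{fig:adversary:strands} and regular transmissions, with \emph{separates transcendentals} closing the regular cases) is the right family of argument, but as written it has a genuine gap, and you have correctly located it yourself: the decryption case does not close. The root cause is that you minimize over \emph{visible} occurrences. Visibility is not preserved backwards across decryption: when $n_0$ is the output $+m$ of a decryption strand, $q$ is visible in $m$ but only \emph{carried}, not visible, in the input $\enc{m}{K}$, so minimality of $n_0$ gives you nothing and you are forced to chase the provenance of $\enc{m}{K}$. At that point your dichotomy---``the matching transmission $n'$ is an adversary construction $+\enc{m}{K}$ or a regular node''---is incomplete: since communication requires message equality, $n'$ can equally be the output of a \emph{separation} strand (the encryption was extracted from a tuple) or of \emph{another decryption} strand (nested encryptions), and in those cases neither branch of your argument applies, because $q$ is still invisible at every earlier node of those strands. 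Iterating ``return to the previous case'' through such a chain is an unbounded regress that needs its own induction; the same regress reappears in your acquired-variable subcase. So the proposal, as it stands, is not a proof.

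The paper's proof avoids the regress with one move, which is exactly the strengthening you gesture at in your last sentence: it minimizes over \emph{carried} occurrences instead of visible ones. Since $p$ is visible, hence carried, in $\msg(n_p)$, it has an origination node $n_o\preceq_\bnd n_p$. Because carried paths pass through the plaintexts of encryptions and through tuples, the encryption, decryption, tupling, and separation strands can never \emph{originate} a basic value---whatever is carried in their output was already carried in an earlier input on the same strand---so those cases vanish instead of requiring provenance-chasing. What remains are creation strands, algebraic strands, and regular nodes: Lemma~\ref{lemma:adv:visible} pins $p=\msg(n_o)$ at adversary nodes; the algebraic case yields a strictly earlier node where $x$ is present in a visible field value, which is dispatched because the paper takes $n_p$ minimal among \emph{counterexamples} rather than among visible occurrences; and \emph{separates transcendentals} pins $p\in\trsc$, hence $p=x$, at regular nodes, contradicting the observation that $x\neq p$ for any counterexample. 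Two further small repairs your write-up needs even outside the decryption case: (i) a $\preceq_\bnd$-minimal member of your set $T$ need not lie below $n_p$, so you must restrict attention to $\{n\in T : n\preceq_\bnd n_p\}$ before invoking Lemma~\ref{lemma:bundle:induction}, otherwise your final claim $n_x=n_0\preceq_\bnd n_p$ is unjustified; and (ii) the lemma is stated before the paper adopts Assumption~\ref{Assum:monomial}, so you may not discard the additive adversary strands---though they are handled exactly like the multiplicative ones, since $x$ has non-zero degree in $p_1+p_2$ only if it does in some $p_i$.
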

\begin{proof}
  Choose $\bnd$, and if there are any $x,p,n_p$ that furnish a
  counterexample let $n_p\in\node(\bnd)$ be $\preceq_{\bnd}$-minimal
  among counterexamples for any $x,p$.  Observe first that $x\not=p$,
  since if $x=p$ this is not a counterexample to the property.
  
  Since $p$ is carried in $\msg(n_p)$, there exists an
  $n_o\preceq_{\bnd}n_p$ such that $p$ originates on $n_o$.  By the
  definition of originates, $n_o$ is a transmission node.

  First, we show that $n_o$ does not lie on an adversary strand, by
  taking cases on the adversary strands.  The \textbf{creation}
  strands that emit values in $\field$ originate $\one$, members of
  $\mathbb Q$, and transcendentals $y\colon\trsc$.  But $x$ is not
  present in $\one$ or values in $\mathbb Q$, and if $x$ is present in
  $y$, then $x$ and $y$ are identical, which we have excluded.

  If $n_o$ lies on a \textbf{algebraic} strand, then it takes incoming
  field values $p_1,p_2$.  Since $x$ has non-zero degree in $p$ only
  if it has non-zero degree in at least one of the $p_i$, this
  contradicts the $\preceq_{\bnd}$-minimality of the counterexample.

  Node $n_o$ does not lie on an \textbf{encryption},
  \textbf{decryption}, \textbf{tupling}, or \textbf{separation}
  strand, which never originate values in {\scsbasic}.

  Thus, $n_o$ does not lie on an adversary strand.  

  Finally, $n_o$ does not lie on a regular strand of $\Pi$.  By the
  DH-simple assumption, if $n_o$ originates the field value $p$, then $p$ is a
  transcendental.  Thus, if $x$ is present,  $p=x$, which was excluded
  above.  \qed
\end{proof}

\paragraph{Subsignature of $\Sigmaconstr$.}  {\cpsa} uses unification
systematically, and unification in the theory of fields and related
theories is undecidable.  In this work, our approach is to omit the
additive structure $\zero,+,-$.  This restricts the class of protocols
to the fairly large set in which the regular participants do not use
it, but remains faithful:  The adversary does not need to use the
additive structure either, to achieve all possible attacks against
these protocols~\cite{LiskovThayer14}.  But see
also~\cite{DoughertyGuttman12,DoughertyGuttman2014} for alternatives
that avoid unification.

For the remainder of this paper, we will focus on the
``multiplicative-only'' signature
$\Sigmamult=\Sigmaconstr\setminus\{\zero,+,-\}$ for messages.  

We continue to use the message algebra $\Algebraconstr$, containing
the field $\mathbb{Q}(\trsc)$, but we will consider only protocols
that do not mention its addition operations, nor the group operation,
which amounts to adding exponents.
%
%
By~\cite{LiskovThayer14}, we may assume that the adversary never uses
the strands for field addition and subtraction, and for the group
operation (addition in the exponent).

Thus, we will interpret these protocols in structures whose domains
for message sorts are those of $\Algebraconstr$, but whose signature
has ``forgotten'' the additive structure.  Thus, from now on, all of
the polynomials we consider will in fact be monomials.  We will write
$\mu,\nu$, etc.~for monomials in $\mathbb{Q}(\trsc)$, namely
polynomials with a rational coefficient but no additions in a number
of transcendentals $x_1,\ldots x_i\in\trsc$.  The transcendentals may
occur with positive or negative degree.

By a result of Liskov and Thayer~\cite{LiskovThayer14}, when the
protocol uses only the multiplicative structure, the adversary does
not need the additive strands.  That is, every attack that can be
achieved using all of the adversary strands of
Fig.~\ref{fig:adversary:strands} can be achieved without the additive
strands.   Therefore, we do not weaken the adversary if:
\begin{Assum}
  \label{Assum:monomial} 
  We henceforth assume that no bundle $\bnd$ contains any occurrences
  of the additive adversary strands.  
\end{Assum}

\paragraph{Protocols that separate exponents:  Group elements.}
Lemma~\ref{lemma:simple:visible} tells us what must hold if a
polynomial $p$ is disclosed, assuming the protocol separates
transcendentals.  Namely, any transcendental $x\colon\trsc$ with non-0
degree in $p$ is also disclosed.  We now provide a related property
for the group elements.  It holds for protocols that separate
transcendentals and use only the multiplicative signature
$\Sigmamult$.  Thus, the only polynomials of interest are monomials
$\mu$.  

The lemma says that when a group element $\gen^\mu\in\thegroup$ is
disclosed, then either \emph{all} of the transcendentals
$x\colon\trsc$ with non-0 degree in monomial $\mu$ are also disclosed,
or else we can divide the monomial $\mu$ into two parts $\nu$ and
$\mu/\nu$.  All the transcendentals in $\nu$ are disclosed.  And some
regular participant tansmitted $\gen^{\mu/\nu}$ in carried position.
Thus, responsibility for $\nu$ lies with the adversary, and
responsibility for sending an exponentiated version of the quotient
lies with some instance of a role of the protocol.  
\begin{lemma}
  \label{lemma:group:visible}
  Suppose $\Pi$ is a protocol over $\Sigmamult$ that separates
  transcendentals and uses only multiplicative structure; $\bnd$ is a
  $\Pi$-bundle; and $\gen^\mu\in\thegroup$ is carried in $\msg(n_\mu)$
  where node $n_\mu\in\node(\bnd)$.
  Then there is a monomial $\nu\in\thefield$ s.t.:
  \begin{enumerate}
    \item $\nu$ is a product of transcendentals visible before
    $n_\mu$, and 
    \item either (i) $\nu=\mu$ or else \\
    (ii) letting $\xi=\mu/\nu$, there is a regular transmission node
    $n_\xi\in\node(\bnd)$ such that $n_\xi\preceq_\bnd n_\mu$ and the
    value $\gen^{\xi}$ is carried in $\msg(n_\xi)$.  Moreover
    $\gen^{\xi}$ was previously visible.  
  \end {enumerate}
%
\end{lemma}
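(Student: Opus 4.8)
The plan is to argue by well-founded induction on the bundle order $\preceq_{\bnd}$, strengthening the statement so that it also records the final visibility claim. First I would fix a $\preceq_{\bnd}$-minimal counterexample $(\gen^\mu,n_\mu)$ and push the target back to a transmission node. If $n_\mu$ is a reception node, there is a unique $n'\rightarrow n_\mu$ with $\msg(n')=\msg(n_\mu)$ and $n'\prec_{\bnd}n_\mu$; then $\gen^\mu$ is carried in $\msg(n')$, so the conclusion already holds at $n'$ by minimality, and hence at $n_\mu$, contradicting that $n_\mu$ is a counterexample. The same minimality argument forces $\gen^\mu$ to originate at $n_\mu$ rather than at an earlier node of the same strand. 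So it remains to take cases on the strand carrying the originating transmission $n_\mu$.

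The decisive simplification is Assumption~\ref{Assum:monomial}: because $\Pi$ uses only $\Sigmamult$ and the additive (group) adversary strands are excluded, the only adversary strands that can \emph{originate} a group element are the creation strand $+\gen$ and the exponentiation strand $-h\Rightarrow-w\Rightarrow+\expt(h,w)$. Tupling, encryption, decryption, and separation strands transmit only values already carried in one of their receptions, so they originate no $\gen^\mu$, and the field multiplication and division strands emit values of sort $\field$ rather than $\thegroup$. The creation strand gives $\mu=1$, so $\nu=1=\mu$ settles alternative~(i). This leaves two substantive cases: a \emph{regular} origination and an \emph{exponentiation} origination.

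In the exponentiation case I would write the received group element as $h=\gen^\sigma$ (every element of $\thegroup$ is a power of $\gen$), so $\mu=\sigma w$ with $\sigma,w$ monomials. Both receptions lie strictly before $n_\mu$, and each received message is the \emph{whole} message of its node, hence visible there; in particular $\gen^\sigma$ is visible at the node $n_h$ receiving $h$. The induction hypothesis applied to $(\gen^\sigma,n_h)$ yields a monomial $\nu_0$ whose transcendentals are visible before $n_h$, with either $\nu_0=\sigma$ or else $\gen^{\sigma/\nu_0}$ carried on a regular transmission node $n_\xi\preceq_{\bnd}n_h$ that was previously visible. Separately, Lemma~\ref{lemma:simple:visible} applied to the visible field value $w$ shows every transcendental occurring in $w$ is visible before $n_\mu$. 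I would then set $\nu=\nu_0\cdot w$ and $\xi=\sigma/\nu_0$; since $\mu/\nu=\sigma w/(\nu_0 w)=\sigma/\nu_0=\xi$ and every transcendental of $\nu_0$ and of $w$ is visible before $n_\mu$, the case $\nu_0=\sigma$ gives $\nu=\mu$ (alternative~(i)) and otherwise we inherit $n_\xi$ and its previous visibility (alternative~(ii)). The regular origination case is immediate: $n_\mu$ is then itself a regular transmission carrying $\gen^\mu=\gen^\xi$, so $\nu=1$, $\xi=\mu$, and $n_\xi=n_\mu$ witness alternative~(ii).

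The hard part will be the exponentiation case under \emph{cancellation}, where $w$ carries transcendentals of negative degree that annihilate transcendentals of $\sigma$ (e.g.\ $\sigma=x$, $w=1/x$, $\mu=1$): the decomposition $\nu=\nu_0w$, $\xi=\sigma/\nu_0$ must still exhibit $\nu$ as a monomial of \emph{visible} transcendentals and $\gen^\xi$ as a genuinely regular, previously visible value. Keeping these two invariants synchronized through the induction---rather than merely the equation $\mu=\nu\xi$---is the crux, and it is exactly why the ``previously visible'' rider is carried along and why Lemma~\ref{lemma:simple:visible} (and hence the hypothesis that $\Pi$ separates transcendentals) is needed, to certify visibility of the transcendentals drawn from $w$. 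A secondary point to discharge carefully is the carried-versus-visible bookkeeping in the regular base case, ensuring that the regular node first carrying $\gen^\mu$ indeed exposes $\gen^\xi$ visibly before it is consumed.
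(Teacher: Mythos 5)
Your proposal is correct and takes essentially the same route as the paper's proof: induction along $\preceq_{\bnd}$, reduction of receptions to their paired transmissions, a case split over the kinds of strands at the originating transmission (regular, creation of $\gen$, constructive/destructive, exponentiation), and, in the exponentiation case, the induction hypothesis applied to the received group element $h=\gen^\sigma$ combined with Lemma~\ref{lemma:simple:visible} applied to the received exponent $w$, yielding the same decomposition $\nu=\nu_0 w$, $\xi=\sigma/\nu_0$. The cancellation issue you flag is absorbed exactly as you suggest---$\nu$ is allowed to be a monomial with negative degrees in visible transcendentals---so no work beyond the paper's own argument is required.
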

%
%
\begin{proof}
  Let $\bnd$ be a bundle, let $n_\mu\in\node(\bnd)$, and assume
  inductively that the claim holds for all nodes $n\prec n_\mu$.  If
  $n_\mu$ is a reception node, then the (earlier) paired transmission
  node satisfies the property by the IH.  However, the same $\nu$ and
  $n_\xi$ also satisfy the property for $n_\mu$.  If $n_\mu$ is a
  regular transmission, then the conclusion holds with $\nu=\one$, the
  empty product of transcendentals.

  So suppose $n_\mu$ lies on an adversary strand.  If $n_\mu$
  transmits the group element $\gen$, then let $\nu=\mu=\one$.  The
  constructive strands for tupling or encryption provide no new group
  elements in carried position.  Nor do the destructive strands for
  untupling or decryption.

  Thus, the remaining possibility is that $n_\mu$ is the transmission
  on an exponentiation strand
  $-h\Rightarrow -w\Rightarrow +\expt(h,w)$ where $\expt(h,w)=g^\mu$.
  By the IH, for the node receiving $h\in\thegroup$, the property is
  met.  Thus, $h=\gen^{\mu_0}$, where there exist $\nu_0,\xi_0$
  satisfying the conditions.

  Hence, we may take $\nu=\nu_0 w$ and $\xi=\xi_0$.  By
  Lemma~\ref{lemma:simple:visible}, $w$ is a product of previously
  visible transcendentals, so the requirements are met.  \qed
\end{proof}
\noindent
It also follows that if $g^\mu$ is \emph{visible} at $n_\mu$, then
$g^\xi$ is visible before $n_\mu$.
We will wrap up our three protocol requirements in the word
``compliant.''  
\begin{definition}
  A protocol $\Pi$ is \emph{compliant} if $\Pi$ separates
  transcendentals, uses only multiplicative structure, and satisfies
  the acquired constraint.  
\end{definition}


\section{Protocol languages, skeletons and cohorts}
\label{sec:skeletons}

In this section, we will introduce a language $\lang\Pi$ for each
protocol $\Pi$.  The \emph{skeletons} for $\Pi$ are certain theories
$\skel$ in that language.  We regard a skeleton as describing certain
executions; in our formalization the executions are bundles, and we
will use the usual semantic relation of satisfaction to define which
bundles $\bnd$ a skeleton $\skel$ describes.  Specifically, let
$\mathcal{I}$ be an interpretation from $\lang\Pi$ into $\bnd$;
$\skel$ describes $\bnd$ under this interpretation if
$\mathcal{I}\models\skel$.

Some skeletons ``fully describe'' some bundle $\bnd$ under an
interpretation $\mathcal{I}$, which we define below to mean that
$\mathcal{I}$ is an injective map and is also surjective for nodes
(Def.~\ref{def:cover:realize}).  

\subsection{The protocol languages}
\label{sec:skeletons:lang}

The logical langauge $\lang\Pi$ for the protocol $\Pi$ can be used to
express its protocol goals, but extends the goal language of Guttman,
Rowe, et al.~\cite{Guttman14,RoweEtAl2016}, as it also describes the
behavior in individual executions in more detail.  It uses a first
order signature $\Sigma^*$ that extends $\Sigmaconstr$ (still omitting
the additive operators) with:
\begin{description}
  \item[Sort] {\scnode} for strand nodes;
  \item[Individual constants $\mathcal{C}$,] disjoint from
  $\Sigmaconstr$, in infinite supply at every sort;
  \item[Protocol-independent] vocabulary, which is the same for all
  protocols $\Pi$; and
  \item[Protocol-dependent] vocabulary, which gives a way to refer to
  the specific kinds of nodes on the roles of $\Pi$, and the values
  their parameters take.
\end{description}
We provide more detail about the last two categories below.

If $\Sigma^*$ is any structure that extends $\Sigmaconstr$, we will
define a $\Algebraconstr$-\emph{interpretation} $\mathcal{I}$ to be a
$\Sigma^*$-structure whose restriction to $\Sigmaconstr$ agrees with
the map $\Algebraconstr$.  Since $\Sigma^*$ may involve entirely new
sorts (such as {\scnode}), the target of $\mathcal{I}$ may involve
entirely new domains.

\paragraph{Protocol-independent vocabulary} $\lang{}$ includes
equality at all sorts and:
{\small
\begin{eqnarray*}
  & \mathtt{msg}\ccolon {\scnode}\rightarrow {\scmesg}
  & \\
  \mathtt{Prec}\cccolon{\scnode}\times{\scnode} \quad
  & \mathtt{Coll}\cccolon{\scnode}\times{\scnode}
  & \mathtt{DerBy}\cccolon{\scnode}\times{\scmesg} \\
  \mathtt{Absent}\cccolon{\sctrsc}\times{\scfld} \quad
  &
    \mathtt{GenAt}\cccolon{\scnode}\times{\scnode}\times{\scbasic}
  & \quad \mathtt{Non}\cccolon{\scnode}\times{\scmesg}
\end{eqnarray*}
}
We will use typewriter font for function and relation symbols in
$\lang\Pi$.

The function symbol $\mathtt{msg}$ returns the message transmitted or
received on a node.  The relation symbols
$\mathtt{Prec}, \mathtt{Coll}$ are satisfied by a pair of nodes
(resp.)~iff the first precedes the second and iff both lie on the same
strand.  We sometimes write $\mathtt{StrPrec}(v_0,v_1)$ to mean
$\mathtt{Prec}(v_0,v_1)\land\mathtt{Coll}(v_0,v_1)$.

$\mathtt{DerBy}$ says that its message argument is derivable from
messages visible on nodes preceding its node argument.
$\mathtt{Absent}(v,w)$ says that the transcendental $v$ has degree
zero in the field element $w$.  $\mathtt{Non}(v)$ says that $v$
originates nowhere, and $\mathtt{GenAt}(n_0,n_1,v)$ says that $v$ is
chosen only on nodes $n_0,n_1$ (Def.~\ref{def:paths}).

We write $\mathtt{UnqGen}(n,v)$ to mean $\mathtt{GenAt}(n,n,v)$, where
$v$ can only be chosen at $n$.  {\cpsa} currently implements
$\mathtt{UnqGen}(n,v)$ rather than the full
$\mathtt{GenAt}(n_0,n_1,v)$.  The latter would be convenient for
expressing compromise assumptions (e.g.~in forward secrecy assertions)
in a compact, uniform way.

\begin{definition}
  \label{def:prot:ind:semantics}
  Let $\bnd$ be any bundle, and let $\mathcal{I}$ be any
  $\Algebraconstr$-interpretation for $\lang{}$ such that
  $\scnode(\mathcal{I})\subseteq\node(\bnd)$, i.e.~the sort $\scnode$
  is interpreted by nodes in $\bnd$.

  The clauses in the upper block of Fig.~\ref{fig:prot:semantics} give
  the semantics of the protocol independent vocabulary of $\lang{}$.
  We write syntactic text within $\lang{\Pi}$ in blue to distinguish
  it from our informal metatheory for properties of bundles.
\end{definition}
%
%
\begin{figure}[tb]
  \centering
  \begin{tabular}{l@{\qquad}l}
    Term & Semantics \\
    $\eta_{\mathcal{I}}(\syntax{\mathtt{msg}(t)})$
         & $\msg(\eta_{\mathcal{I}}(\syntax{t}))$
    \\[2mm]
    Formula & Semantics \\
    $\mathcal{I}\models_\eta\syntax{\mathtt{Prec}(t_1,t_2)}$
         & iff
           $\eta_{\mathcal{I}}(\syntax{t_1})\prec_\bnd\eta_{\mathcal{I}}(\syntax{t_2})$
    \\
    $\mathcal{I}\models_\eta\syntax{\mathtt{Coll}(t_1,t_2)}$
         & iff there exist $s,i,j$ such that 
           $\eta_{\mathcal{I}}(\syntax{t_1})=\strandnode s i$, \\
         & \quad  
           $\eta_{\mathcal{I}}(\syntax{t_2})=\strandnode s j$, and
           $(\strandnode s i),(\strandnode s j)\in\node(\bnd)$ \\
    $\mathcal{I}\models_\eta\syntax{\mathtt{Absent}(t_1,t_2)}$
         & iff
           $\eta_{\mathcal{I}}(\syntax{t_1})$ is not present in $\eta_{\mathcal{I}}(\syntax{t_2})$
    \\
    $\mathcal{I}\models_\eta\syntax{\mathtt{Non}(t_1)}$
         & iff $\eta_{\mathcal{I}}(\syntax{t_1})$ does not originate
           at any $n\in\node(\bnd)$ 
    \\
    $\mathcal{I}\models_\eta\syntax{\mathtt{GenAt}(t_1,t_2,t_3)}$
         & iff for all $n\in\node(\bnd)$, 
           if $\eta_{\mathcal{I}}(\syntax{t_3})$  is chosen at $n$, \\
         & \quad then
           $n=\eta_{\mathcal{I}}(\syntax{t_1})$ or $n=\eta_{\mathcal{I}}(\syntax{t_2})$
    \\
    $\mathcal{I}\models_\eta\syntax{\mathtt{DerBy}(t_1,t_2)}$
         & iff there is an adversary web with root
           $\eta_{\mathcal{I}}(\syntax{t_2})$   \\
         & \quad in which, for every leaf $\ell$,  \\
         & \quad
           either $\ell\in\node(\bnd)$ and $\ell\preceq_\bnd\eta_{\mathcal{I}}(\syntax{t_1})$, or else \\
         & \quad
           $\direction(\ell)=-$ and there is an
           $n\in\node(\bnd)$ s.t. \\ 
         & \quad
           $\direction(n)=+$, $\msg(\ell)=\msg(n)$, and
           $n\preceq_\bnd\eta_{\mathcal{I}}(\syntax{t_1})$
    \\ ~ & ~
    \\
    $\mathcal{I}\models_\eta\syntax{\mathtt{\RPPred\rho i}(t_1)}$
         & iff
           $\eta_{\mathcal{I}}(\syntax{t_1})$ is an instance of $\strandnode\rho i$
           under some $\theta$
    \\
    $\mathcal{I}\models_\eta\syntax{\mathtt{\PPred\rho v}(t_1,t_2)}$
         & iff, for some $\rho\in\Pi, i$
           s.t.~$v\in\kind{params}(\strandnode\rho i)$, \\
         & \quad
           for some assignment
           $\theta\colon\kind{Var}(\lang{\Pi})\rightarrow\bnd$,  \\
         & \quad
           $\eta_{\mathcal{I}}(\syntax{t_1})$ is an instance of
           $\strandnode\rho i$ under $\theta$, \\ 
         & \quad
           and $\theta(v)=\eta_{\mathcal{I}}(\syntax{t_2})$
  \end{tabular}
  %
%

  \caption{Semantics of protocol independent and dependent vocabulary}
  \label{fig:prot:semantics}
\end{figure}

By an \emph{adversary web} in the clause for
$\mathtt{DerBy}(t_1,t_2)$, we mean a directed acyclic graph built from
adversary strands (Def.~\ref{def:adversary:web}).  Its \emph{leaves}
are any earliest reception nodes, which are required to draw values
from the bundle $\bnd$.  Its \emph{roots} are all latest nodes; the
messages on these nodes are the values derived.

\paragraph{Protocol-dependent vocabulary of } $\lang\Pi$ consists of
\emph{role position} predicates and \emph{parameter} predicates.  A
role position predicate is a one-place predicate of nodes.  There is
one role position predicate for each pair $\rho,i$ such that
$\rho\in\Pi$ and $1\le i\le\length\rho$, and it is true of a node $n$
iff that node is an instance of $\strandnode\rho i$.

Any family of distinct predicate symbols may be chosen for the
\emph{role position} predicates.  We will refer to them as
$\RPPred\rho i$; i.e.~$\RPPred\cdot \cdot$ is a two dimensional table
with rows labeled by roles of $\Pi$ and columns indexed by the
positions along them.  The entries in the $\RPPred\cdot \cdot$ table
are the role position predicates.

A parameter predicate is a two-place predicate relating a node to a
message value.  There is one role position predicate for each pair
$\rho,v$ such that $\rho\in\Pi$ and $v\in\kind{params}(\rho)$.  It is
true of a node $n$ and a message value $t$ iff, for some $i,\eta$,
$v\in\kind{fv}(\msg(\strandnode\rho i))$ and $n$ is an instance of
$\strandnode\rho i$ under $\eta$ where $\eta(v)=t$.  Thus, $n$ is not
related to any message value $t$ if it is an instance of
$\strandnode\rho i$, but $v$ is a parameter that first occurs in
$\strandnode\rho j$ for $j>i$.  There is, however, at most one
possible $v$; i.e.~the parameter value is a partial function of the
node.  

The predicate symbols for \emph{parameter predicates} need not be
entirely distinct.  For instance, both an initiator role and a
responder role may have $\mathtt{Peer}$ and $\mathtt{Self}$ parameter
predicates.  Choices like this are desirable when the parameters have
corresponding significance for different roles.  If $n_i$ and $n_r$
lie on instances of the initiator and responder roles, and are high
enough that the parameters are defined, then these are \emph{matching
  sessions} only if $\mathtt{Self}(n_i,A)\equiv\mathtt{Peer}(n_r,A)$
and $\mathtt{Peer}(n_i,B)\equiv\mathtt{Self}(n_r,B)$, etc.  Reusing
parameter predicates in different roles is safe when the roles have no
instances in common, for instance when one begins with a transmission
and the other begins with a reception.

We will refer to the parameter predicates as $\PPred\rho v$;
i.e.~$\PPred\cdot \cdot$ is a two dimensional table with rows labeled
by roles of $\Pi$ and columns indexed by their parameters.  The
entries in the $\PPred\cdot \cdot$ table are the parameter predicates.

The (straightforward) semantics for the protocol-dependent vocabulary
of $\lang\Pi$ is given in the lower block of
Fig.~\ref{fig:prot:semantics}.  In this case, we are only interested
in the case in which $\bnd$ is in fact a $\Pi$-bundle.

\paragraph{Some relevant axioms.}  There are a number of axioms
expressed in terms of these predicates that are satisfied in all
$\Pi$-bundles.  
We gather them in Fig.~\ref{fig:skelax}.  
\begin{figure}[tb]\small
  \centering
  \begin{description}
    \item[Prec]  
    is transitive and anti-reflexive, i.e.~a strict order.
    \item[Coll] 
    is reflexive, symmetric, and transitive, i.e.~an equivalence
    relation.
    %
    %
    %
    \item[Precedence for $\nf{\cdot}{\cdot}$:]  For each $\rho\in\Pi$
    and $j$ where $1\le j<\length{\rho}$:
    \[
      \forall n\colon\scnode\qdot \nf{\rho}{j+1}(n)\limp\exists
      m\qdot\nf{\rho}{j}(m)\land \mt{StrPrec}(m, n) .
    \]
    \item[Strand uniqueness of $\nf{\cdot}{\cdot}$:]  Only one node on a
    strand satisfies any role position predicate.  That is, for each
    $\rho\in\Pi$ and $j$ where $1\le j<\length{\rho}$:
    %
    \[
      \forall m,n\colon\scnode\qdot \nf{\rho}{j}(n)\land
      \nf{\rho}{j}(m)\land\mt{Coll}(m,n) \limp m=n .
    \]
    \item[Existence for $\pf{\cdot}{\cdot}$:]  For each $\rho\in\Pi$,
    $j$, and $v$ such that $v\in\kind{params}(\strandnode\rho j)$,
    \[ 
      \forall n\colon\scnode\qdot \nf{\rho}{j}(n) \limp \exists v_1\qdot
      \pf{\rho}{v}(n,v_1) .
    \]
    \item[Preservation for $\pf{\cdot}{\cdot}$:]  For each $\rho\in\Pi$
    and each $v\in\kind{params}(\rho)$:
    \[ \forall m,n\colon\scnode,v_1\colon\scmesg\qdot
      \mt{StrPrec}(m,n)\land \pf{\rho}{v}(m,v_1)\limp
      \pf{\rho}{v}(n,v_1). 
    \] 
    \item[Uniqueness for $\pf{\cdot}{\cdot}$:]  For each $\rho\in\Pi$
    and each $v\in\kind{params}(\rho)$:
    \[ 
      \forall n\colon\scnode,v_1,v_2\colon\scmesg\qdot
      \pf{\rho}{v}(n,v_1)\land\pf{\rho}{v}(n,v_2) \limp v_1=v_2 .
    \]
  \end{description}
  \caption{$\skelax$ axiom groups}
  \label{fig:skelax}
\end{figure}
We will refer to these axioms as the \emph{skeleton axioms}
$\skelax$.  

%
%
%
%
%
%
%
%

A theory $T$ is a set of sentences $T\subseteq\sent(\lang\Pi)$.  We
write $T\entails\phi$ when $\phi$ is a logical consequence of the
sentences in $T$, avoiding the standard symbol $\vdash$ which is also
used for adversary derivability as in Fig.~\ref{fig:rules}.  Let
$\alpha[\vec t/\vec v]$ be the result of replacing the variables in
$\vec v$ by the corresponding terms $\vec t$ throughout~$\alpha$.

\begin{definition}
  \label{def:expanded}
  A theory $T\subseteq\sent(\lang{\Pi})$ is \emph{in expanded form}
  iff, for any tuple of constants $\vec c$ and any axiom
  $\forall \vec v\qdot \alpha\limp\exists\vec w\qdot\beta$ in
  $\skelax$, where $\vec w$ may be empty:


  If $T\entails\alpha[\vec c]$, then there are constants $\vec d$ such
  that $T\entails\beta[\vec c/\vec v;\vec d/\vec w]$.
\end{definition}
Expanding a theory may make it larger, but not too much.  We use
$\length\cdot$ to express the cardinality $\length S$ of a set as well
as the length $\length s$ of a strand.  If $T$ is a set of formulas,
we write $\mathcal{C}(T)$ for the set of constants occurring in $T$.

\begin{lemma}%
  \label{lemma:expanded} Assume $\Pi$ is a protocol whose roles are of
  length less than $\ell$:  $\rho\in\Pi$ implies $\length\rho<\ell$.
  Suppose that each role $\rho\in\Pi$ has less than $j$ parameters:
  $\length{\kind{params}(\rho)}<j$.  And suppose that the assumptions
  of nodes of $\Pi$ are atomic formulas, and each node has less than
  $k$ assumptions:  $\rho\in\Pi$ implies
  $\length{\assume(\strandnode\rho i)}<k$.

  If $T$ is any theory in $\lang\Pi$, let
  $N(T)\subseteq\mathcal{C}(T)$ be the set of constants of sort
  $\scnode$ occurring in $T$.
  
  Suppose that $T\subseteq\sent(\lang\Pi)$ is a finite theory
  consisting of atomic sentences that may contain constants in
  $\mathcal{C}$.%
  %
  Then there exists a theory $T_e\supseteq T$ in expanded form such
  that, letting $b=\ell \length{N(T)}$:
  \begin{enumerate}
    \item $\length{N(T_e)}\le b$;
    \item $\length{T_e}\le (2j+k)b+3b^2+\length{T}$;
    \item $T_e\cup\skelax$ is conservative over $T\cup\skelax$ for
    sentences in $\mathcal{C}(T)$.

    I.e., suppose that $\phi\in\sent(\lang\Pi)$ and
    $\mathcal{C}(\phi)\subseteq\mathcal{C}(T)$.  Then if
    $T_e\cup\skelax\entails\phi$, then $T\cup\skelax\entails\phi$.
  \end{enumerate}  
\end{lemma}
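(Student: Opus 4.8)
The plan is to build $T_e$ by a terminating \emph{chase} (saturation) of $T$ under the skeleton axioms $\skelax$, and then read off the three claims. Since $T$ is a set of atomic sentences and every axiom of $\skelax$ has the coherent shape $\forall\vec v\qdot\alpha\limp\exists\vec w\qdot\beta$ with $\alpha,\beta$ conjunctions of atoms (including equalities), a set $U$ of atomic sentences satisfies $U\entails\alpha[\vec c]$ exactly when each conjunct of $\alpha[\vec c]$ is present in $U$ modulo the congruence generated by the equalities of $U$. So I would iterate the following, starting from $T_0=T$, firing a rule whenever its premise is thus present but its conclusion is not: (i) for \textbf{Precedence for $\nf{\cdot}{\cdot}$}, introduce one fresh node constant naming the strand-predecessor; (ii) for \textbf{Existence for $\pf{\cdot}{\cdot}$}, introduce one fresh message constant naming a parameter value, and instantiate that node's fewer than $k$ role assumptions; (iii) for \textbf{transitivity of Prec}, the \textbf{equivalence closure of Coll}, and \textbf{Preservation for $\pf{\cdot}{\cdot}$}, add the derived relational atoms (no new constants); and (iv) for \textbf{Strand uniqueness} and \textbf{Uniqueness for $\pf{\cdot}{\cdot}$}, merge the two constants forced equal. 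When merging, I would substitute a fresh witness into an existing constant, but when two \emph{original} constants of $\mathcal{C}(T)$ are forced equal I would instead retain both and add the equality atom, so that no constant of $\mathcal{C}(T)$ is ever renamed away. The fixpoint $T_e$ is then in expanded form by construction. (If transitive closure ever produces a cycle $\mt{Prec}(c,c)$, violating anti-reflexivity, then $T\cup\skelax$ is already inconsistent and all three claims are immediate.)

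For the size bounds, the key observation is that new \emph{node} constants are produced only by Precedence for $\nf{\cdot}{\cdot}$, which walks backward along a strand; since roles have length $<\ell$, each Coll-class (strand) carries at most $\ell-1$ positions, each occupied after saturation by a single surviving node constant, and no saturation step ever starts a new strand. Hence the surviving node constants number at most the $\le|N(T)|$ originals plus at most one fresh constant for each of the $\le(\ell-1)|N(T)|$ strand-positions not already holding an original, giving $|N(T_e)|\le\ell|N(T)|=b$. Termination then follows because message constants are produced only by Existence (fewer than $j$ per node, never cascading, since Existence is triggered by node facts only), relational atoms range over a constant set of size $\le b$, and every merge strictly decreases the constant count. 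Bounding $|T_e|$ is a routine tally of the surviving atoms by syntactic category: the $\mt{Prec}$, $\mt{Coll}$, and node-equality atoms each range over node pairs and contribute $\le 3b^2$; while the per-node atoms---role-position and $\mathtt{msg}$ atoms, the fewer-than-$j$ parameter atoms (including those propagated by Preservation), and the fewer-than-$k$ assumption atoms---contribute $\le(2j+k)b$; adding the original $|T|$ gives the claimed bound.

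For conservativity (clause 3) I would argue semantically. It suffices to show that every $\mathcal{M}\models T\cup\skelax$ expands---on the same domain, keeping all non-constant symbols and every constant of $\mathcal{C}(T)$ fixed---to some $\mathcal{M}^+\models T_e$. For then, given $T_e\cup\skelax\entails\phi$ with $\mathcal{C}(\phi)\subseteq\mathcal{C}(T)$, the expansion $\mathcal{M}^+$ models $T_e\cup\skelax$ (as $\skelax$ names no new constant, its truth is unchanged by the expansion), hence models $\phi$; and since $\phi$ mentions only constants fixed by the expansion, $\mathcal{M}\models\phi$. As $\mathcal{M}$ was arbitrary, $T\cup\skelax\entails\phi$. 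The expansion is built by induction along the chase: at a step firing $\forall\vec v\qdot\alpha\limp\exists\vec w\qdot\beta$ at $\vec c$ with $T_i\entails\alpha[\vec c]$, the running expansion $\mathcal{M}^+_i\models T_i$ already satisfies $\alpha[\vec c]$, so since $\mathcal{M}^+_i$ still models the corresponding axiom of $\skelax$, it contains witnesses for $\exists\vec w\qdot\beta[\vec c,\vec w]$; I interpret the freshly introduced constants as those witnesses. For equality-generating steps the forced equality already holds in $\mathcal{M}^+_i$ by the relevant uniqueness axiom, so either retaining both originals together with the added equality atom, or substituting a fresh constant into an existing one, is sound. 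This yields $\mathcal{M}^+_{i+1}\models T_{i+1}$, and the claim follows at the fixpoint.

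The main obstacle I expect is the interplay between the equality-generating (Uniqueness) axioms and the count and conservativity goals \emph{simultaneously}. These merges must (a) keep the node-constant count within $\ell|N(T)|$, which forces genuine elimination of merged fresh witnesses rather than the mere accumulation of equality atoms, while (b) never renaming a constant of $\mathcal{C}(T)$, on pain of losing conservativity for a goal $\phi$ over $\mathcal{C}(T)$; reconciling these is exactly what dictates the asymmetric merge policy above. A secondary care point is matching the two entailment relations: the expanded-form condition of Def.~\ref{def:expanded} is phrased with $T_e\entails$ alone (no $\skelax$), so rule firing must be triggered by atom-presence in the current theory, whereas $\skelax$ is used only in the semantic step to supply witnesses; one must check that these two readings of ``the premise holds'' coincide on the coherent premises at hand, which is precisely the congruence-closure characterization noted at the outset.
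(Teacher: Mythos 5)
Your proof is correct, and its core---the chase construction and the counting---coincides with the paper's: the paper likewise builds $T_e$ by repeatedly instantiating the existentials of the Precedence axioms with fresh node constants (at most $\ell-1$ per node constant of $T$, giving $b=\ell\length{N(T)}$), then saturating with the parameter and assumption atoms (fewer than $2j+k$ per node) and the at-most-$3b^2$ binary facts. You genuinely diverge in two places. First, the equalities forced by the uniqueness axioms: the paper never merges constants at all; it simply records the forced equations as atoms and counts them within the stated bound (strand-uniqueness equations among the $b^2$-bounded binary facts, parameter-uniqueness equations inside the per-node tally). This shows that the tension you flag in your final paragraph is illusory: lazy firing---checking entailment modulo the recorded equalities, which is exactly your congruence-closure reading---already keeps the node count at $\le b$, so no elimination of witnesses is required, and your asymmetric merge policy, while sound, is an unnecessary complication rather than something the count forces. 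Second, conservativity: you argue it semantically, re-interpreting an arbitrary model of $T\cup\skelax$ at the new constants only (the constant-free axioms of $\skelax$ supply the witnesses) and pulling $\phi$ back along the expansion. The paper instead argues proof-theoretically, maintaining the invariant that every sentence added to $T_e$ has its existential closure over the new constants already entailed by $T\cup\skelax$---an invariant preserved both when a witness constant is introduced and under deduction---so that a $\phi$ mentioning only $\mathcal{C}(T)$, being its own existential closure, is entailed by $T\cup\skelax$. The two arguments establish the same Henkin-style conservativity; yours avoids the bookkeeping about preservation under deduction (the loosest step in the paper's sketch), while the paper's stays entirely syntactic, matching its treatment of skeletons as theories one computes with rather than structures one models.
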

From this it follows that it is decidable whether an atomic sentence
in the constants $\mathcal{C}(T)$ is a consequence of $T\cup\skelax$.
\begin{proof}
  Repeatedly instantiate the existential quantifiers of role
  predecessor axioms with new node constants, whenever they are not
  yet met in the theory.  Because $\ell$ is a length bound on roles,
  we can add at most $\ell-1$ new node constants for each node
  constant appearing in a role position sentence in $T$.  There are at
  most $\length{T}$ node constants occurring in these sentences, since
  each role position predicate is unary.

  For each one of these, we now obtain $T_e$ by saturating the theory
  by adding the corresponding assumptions.  Each one of them can
  contribute at most $j$ parameter predicate assertions, and at most
  $k$ assumptions.  For each node and each parameter assertion about
  it, there may be an equation introduced by a strand uniqueness
  axiom.  

  There are less than $b^2$ order assertions, and no more than $b^2$
  of collinearity.  There can be no more than $b^2$ equations
  introduced by strand uniqueness axioms.  There could be $\length{T}$
  other assertions inherited from $T$.

  $T_e$ is conservative:  Suppose any sentence $\phi\in T_e$ is in the
  minimal such $T_e$, and contains the new constants $\vec c$.
  Selecting new variables $\vec v$, the existential closure
  $\exists \vec v\qdot\phi[\vec v/\vec c]$ is already a consequence of
  $T\cup\skelax$.  This holds invariantly, as it is preserved when we
  introduce a new constant to witness for an exitential conclusion,
  and also under deduction.  If $\phi$ contains no new constants, then
  the existential closure is vacuous, and
  $\exists \vec v\qdot\phi[\vec v/\vec c]$ is identical to $\phi$.
  \qed
\end{proof}
\subsection{Skeletons}
\label{sec:skeletons:skeletons}

For the remainder of this section, consider a fixed protocol $\Pi$.
Thus, by a \emph{role}, we will mean a role $\rho\in\Pi$; by a
\emph{bundle}, we will mean a $\Pi$-bundle; by $\lang{}$, we will mean
the language $\lang{\Pi}$ of $\Pi$; etc.

In the order-sorted context, we define a role-specific theory
(cp.~\cite[Def.~5.8]{Guttman14}) as follows.  A skeleton is a
\emph{theory} which is role-specific and in expanded form; the latter
means that it is closed under the skeleton axioms
(Def.~\ref{def:expanded}).
\begin{definition}
  \begin{enumerate}
    \item A set $T\subseteq\sent(\lang{\Pi})$ of (well-sorted)
    sentences is a \emph{role-specific} theory iff, for every
    individual constant $c\colon{\scnode}$, if $c\in\mathcal{C}(T)$,
    then for some atomic formula $\RPPred{\rho}{i}(c)\in T$, $c$ is
    the argument of a role position predicate $\RPPred{\rho}{i}(c)$.
    %
    \item A \emph{skeleton} $\skel$ is a role-specific theory
    $\skel\subseteq\sent(\lang{\Pi})$ in expanded form.

    \item Let $H$ be a sort-respecting map from constants
    $c\in\mathcal{C}$ to $\term(\lang\Pi)$; extend $H$ from constants
    to terms and formulas homomorphically.

    $H$ is a \emph{skeleton homomorphism}
    $H\colon\skel\rightarrow\skelB$ iff it is a theory interpretation
    for $\Pi$-bundles, i.e.~for all $\Pi$-bundles $\bnd$ and
    interpretations $\mathcal{I}$ into $\bnd$, let $\mathcal{J}$ be
    the interpretation that sends $c$ to the value to which
    $\mathcal{I}$ sends $H(c)$.  Then $\mathcal{I}\models\skelB$
    implies $\mathcal{J}\models H(\skel)$.
  \end{enumerate}
\end{definition}
By a theory interpretation we mean the semantic notion, i.e.~it may
restrict but does not add models.  We use this rather than a notion
defined in terms of deduction because we have not fully axiomatized
the set of $\Pi$-bundles here.  We will avoid doing so, because the
notions of origination and chosen values (Defs.~\ref{def:paths},
\ref{def:present}) require a somewhat cumbersome inductive
definition.

Moreover, we follow the {\cpsa} tradition of using skeletons as data
structures that we inspect and build in computations, rather than as
explicit theories in which we do deduction.  {\cpsa}'s skeletons
maintain some consequences of unique and non-origination assumptions,
which are not captured in $\skelax$.  Apart from this, however, they
maintain the same information that is expressed in role-specific
theories in expanded form; essentially, this is the content of
Thms.~4.13--4.15 of~\cite{Guttman14}.  Indeed, a role-specific theory
in expanded form determines formal messages sent and received, so that
we can use matching and unification on these terms in just the way
that {\cpsa} previously solved its
problems~\cite{Guttman12,cpsatheory11}.

In particular, {\cpsa} looks for reception nodes $n$ in a skeleton
$\skel$ that receive formal messages that the adversary cannot supply
given previous transmissions in $\skel$.  Each such ``non-derivable''
node indicates that the skeleton $\skel$ is incomplete, i.e.~it may
describe some bundles, but it is not a full description of any bundle.
\begin{definition}
  \label{def:cover:realize}
  Let $\bnd$ be a $\Pi$-bundle,
  $\mathcal{I}\colon\mathcal{C}\rightarrow\bnd$, and $\skel$ be a
  skeleton.
  \begin{enumerate}   
    \item $\skel$ $\mathcal{I}$-\emph{covers} $\bnd$ iff $\mathcal{I}$
    is an interpretation into $\bnd$ and $\mathcal{I}\models\skel$.
    \item $\bnd$ $\mathcal{I}$-\emph{realizes} $\skel$ iff $\skel$
    $\mathcal{I}$-covers $\bnd$ and:  
    \begin{enumerate}
      \item \label{clause:realize:alg:ind} for all terms
      $t_1,t_2\in\term(\lang\Pi)$, if
      $\mathcal{I}(t_1)=\mathcal{I}(t_2)$ then
      $\skel\entails t_1=t_2$;
      \item $\mathcal{I}$ is surjective on regular nodes; i.e., for
      every regular $n\in\node(\bnd)$, there is a constant
      $c\colon\scnode$ in $\mathcal{C}(\skel)$ such that
      $\mathcal{I}(c)=n$.
    \end{enumerate}
    \item A formula $\Psi$ is $\Pi$-\emph{entailed} by $\skel$ iff,
    whenever $\skel$ $\mathcal{I}$-covers $\bnd$, then
    $\mathcal{I}\models\Psi$.
    \item If $\skel$ is a skeleton, then the \emph{nodes of} $\skel$,
    written $\node(\skel)$, is the set of constants $c\colon{\scnode}$
    in $\mathcal{C}(\skel)$.

    Node $c_0$ \emph{precedes} node $c_1$ in $\skel$ iff
    $\skel\entails\mathtt{Prec}(c_0,c_1)$.
  \end{enumerate}
  $\skel$ \emph{covers} $\bnd$ iff, for some $\mathcal{I}$, $\skel$
  $\mathcal{I}$-covers $\bnd$.  $\bnd$ \emph{realizes} $\skel$ iff,
  for some $\mathcal{I}$, $\skel$ $\mathcal{I}$-realizes $\bnd$.
  $\skel$ is \emph{realized} iff some $\bnd$ realizes it.  
\end{definition}
When $\bnd$ realizes $\skel$, $\skel$ describes all of the regular
events in $\bnd$.  Moreover, $\skel$ is no more generic then $\bnd$,
in the sense that it reflects all of the equations that $\bnd$ forces
to be true.  Indeed, when $\bnd$ realizes $\skel$, there may be ``more
specific'' bundles $\bnd'$ that force additional equations to be true.
However, when $\skel$ is realized, it is already specific enough to
explain what happens in some possible execution, factoring out the
details of how the adversary chooses to derive messages that will be
derivable.  

\paragraph{Derivable nodes.}  The adversary, when acting on the formal
messages of a skeleton $\skel$, can use instances of the adversary
strands of Fig.~\ref{fig:adversary:strands}, regarded as strands over
formal terms.  Moreover, the adversary can feed a term produced by a
transmission node to a reception node when the two nodes agree on the
term in question.  In establishing this equality, we can use the
axioms for the field and group operations as well as the axioms of
Def.~\ref{def:alg:basic}.

Since the division axiom has the premise that the divisor is non-zero,
we also need a source of assertions of that form.  Since every
transcendental is different from zero, we always have the formula
$v\not=\zero$ for variables $v\colon\sctrsc$.  By the field axioms,
$t_1\not=\zero$ and $t_2\not=\zero$ iff $t_1\cdot t_2\not=\zero$.

Since we are ignoring the additive structure, each $t\colon\scfld$ is
a quotient of monomials.  These are easy to simplify:  One simply
cancels common factors, resulting in a monomial $\mu$ with a rational
coefficient and a sequence of distinct variables $x\colon\sctrsc$ each
raised to a positive or negative power.  We will say that
$x\colon\sctrsc$ occurs in $t$ if it occurs in this reduced form
$\mu$.

Using these methods for proving equalities, we can build up
\emph{formal adversary webs} (cp.~Def.~\ref{def:adversary:web}).
These are directed acyclic graphs in which the nodes lie on adversary
strands, and in which the communication edge $+t_1\rightarrow -t_2$ is
permitted if there is a proof that $t_1=t_2$.

In formal adversary webs, we allow creation nodes in which we create a
constant $c\colon\scfld$.  These are unproblematic, because we can
always interpret them by sending $c$ to an otherwise unused
transcendental, or indeed to $\one$.

Likewise, adapting Def.~\ref{def:present}, we can speak of a node with
message in $\term(\lang\Pi)$ choosing a subterm.  Let us say that a
term $t_0$ \emph{occurs in} a term $t_1$ if, having simplified
monomials in $t_0,t_1$ as just described, $t_0$ occurs syntactically
in $t_1$.  For instance $x\colon\sctrsc$ occurs in $xy/y$ and $x^2y/x$
but not in $xy/x$.

Now we will say that a term $t_0$ is \emph{chosen} at node
$n=\strandnode s i$ iff $t_0$ occurs in $\msg(n)$, and $n$ is a
transmission $\direction(n)=+$, and $1\le j<i$ implies $t_0$ does not
occur in $\msg(\strandnode s j)$.

For a skeleton $\skel$, an earliest creation node $+c$ is permitted if
$c$ is \emph{unrestricted} in $\skel$.  A constant $c$ is restricted
if $\mathtt{Non}(c)\in\skel$---so that $c$ must be non-originating in
any bundle to be covered---or if
$\mathtt{GenAt}(t_1,t_2,c)\in\skel$---so that $c$ can be chosen only
on the regular nodes interpreting $t_1,t_2$.  If there is no such
formula in $\skel$, then $c$ can be freely created by the adversary.
\begin{definition}
  \label{def:derivable}
  Let $\skel$ be a skeleton.  A message $t\in\term(\lang\Pi)$ is
  \emph{derivable by} node $n_1\in\node(\skel)$ iff there is a formal
  adversary web with root $t$ such that (i) every creation node has
  unrestricted value, and (ii) every leaf receives its message from a
  transmission node $n_0\in\node(\skel)$ where $n_0$ precedes $n_1$ in
  $\skel$.

  $\skel$ is \emph{derivable} iff:
  \begin{enumerate}
    \item \label{clause:der:webs} for every reception
    $n_1\in\node(\skel)$ in $\skel$, $\msg(n)$ is derivable by $n$ in
    $\skel$;
    \item if $\skel\entails\mathtt{DerBy}(n,t)$, then $t$ is derivable
    by $n$ in $\skel$;
    \item if $\skel\entails\mathtt{Absent}(x,t)$, then $x$ does not
    occur in $t$;
    \item if $\skel\entails\mathtt{GenAt}(n_1,n_2,t)$, then $t$ is
    chosen on $n$ iff $n=n_1$ or $n=n_2$;
    \item if $\skel\entails\mathtt{Non}(t)$, then $t$ originates on no
    node $n\in\node(\skel)$.
  \end{enumerate}
  We say that $n$ is \emph{derivable in} $\skel$ iff $\msg(n)$ and all
  terms $t$ such that $\skel\entails\mathtt{DerBy}(n,t)$ with this $n$
  are derivable by $n$ in $\skel$.  
\end{definition}
Thus, a skeleton is derivable iff all of its nodes are derivable, and
the $\mathtt{Absent}$, $\mathtt{GenAt}$, and $\mathtt{Non}$
constraints are met.  
\begin{lemma}
  \label{lemma:derivable:realized}
  If $\skel$ is derivable, then there is a bundle $\bnd$ such that
  $\bnd$ realizes $\skel$.  
\end{lemma}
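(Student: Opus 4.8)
The plan is to build, from the syntactic data of $\skel$, a concrete interpretation $\mathcal{I}$ into a bundle $\bnd$ that we assemble piece by piece, and then to verify the two realization clauses of Def.~\ref{def:cover:realize}. First I would fix the interpretation of the basic constants. Since $\skel$ is in expanded form and ignores the additive structure, every term of sort $\scfld$ occurring in $\skel$ reduces to a monomial in transcendental constants with a rational coefficient (as discussed before Def.~\ref{def:derivable}). I would send each transcendental constant $c\colon\sctrsc$ to a distinct, otherwise-unused member of $\trsc$, except where $\skel\entails c=c'$ forces an identification, and send constants of the other basic sorts to distinct fresh values in $\Algebrabasic$, again respecting the equalities $\skel$ proves. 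Extending $\mathcal{I}$ homomorphically along the free constructors and the field and group operations yields values in $\Algebraconstr$. By construction, $\mathcal{I}(t_1)=\mathcal{I}(t_2)$ holds exactly when the reduced monomial (or free-constructor) forms of $t_1$ and $t_2$ coincide, which is precisely when $\skel\entails t_1=t_2$; this supplies realization clause~(\ref{clause:realize:alg:ind}). Derivability condition~(3) guarantees consistency with the unary constraints: whenever $\skel\entails\mathtt{Absent}(x,t)$, the transcendental $x$ does not occur in $t$, so its fresh image has degree zero in $\mathcal{I}(t)$.

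Next I would realize the regular strands. Using the $\mathtt{Coll}$ equivalence I group the node constants of $\skel$ into strands; the precedence and strand-uniqueness axioms of $\skelax$ (in force because $\skel$ is expanded) arrange each group as the nodes of a prefix of a unique role $\rho$, and the existence, uniqueness, and preservation axioms for $\pf{\cdot}{\cdot}$ pin down the parameter values. Applying $\mathcal{I}$ to those parameters instantiates each role to a concrete strand over $\Algebraconstr$; its node assumptions hold because they are exactly the $\mathtt{Non}$ and $\mathtt{GenAt}$ facts recorded in $\skel$. These are the regular strands of $\bnd$, and $\mathcal{I}$ maps each node constant to the corresponding concrete node, giving the surjectivity clause of Def.~\ref{def:cover:realize}.

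The heart of the argument is supplying the messages received by these strands together with the $\mathtt{DerBy}$ obligations. For each reception node $n\in\node(\skel)$, derivability condition~(\ref{clause:der:webs}) furnishes a formal adversary web with root $\msg(n)$ whose leaves draw from transmissions preceding $n$ in $\skel$; likewise each entailed $\mathtt{DerBy}(n,t)$ is witnessed by such a web. I would apply $\mathcal{I}$ to each formal web to obtain a concrete adversary web over $\Algebraconstr$, interpreting each formal creation of an unrestricted constant by an actual creation node (Fig.~\ref{fig:adversary:strands}). Splicing these webs into the graph---routing each root to the reception node it serves and each leaf to the earlier regular transmission it copies---produces $\bnd$, with $\rightarrow$ given by the web edges together with these matching edges. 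I then check the three conditions of Def.~\ref{def:bundle}: backward closure and unique providers hold because each reception is fed by exactly one transmission, and well-foundedness holds because the webs connect only earlier transmissions to later receptions, so $(\Rightarrow_\bnd\cup\rightarrow)^*$ refines the strict order $\mathtt{Prec}$ of $\skel$ extended by the internal orders of the inserted webs.

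Finally I would verify $\mathcal{I}\models\skel$ atom by atom against Fig.~\ref{fig:prot:semantics}: the role-position and parameter atoms hold by the construction of the regular strands; $\mathtt{Prec}$ and $\mathtt{Coll}$ hold because the assembled order extends that of $\skel$; $\mathtt{DerBy}$ holds because we inserted exactly the witnessing webs; and $\mathtt{Non}$, $\mathtt{GenAt}$, $\mathtt{Absent}$ hold because derivability conditions~(3)--(5) constrain the skeleton side while the generic choice of fresh values and the ``unrestricted-only'' creation rule prevent any restricted value from originating on an adversary strand. The main obstacle I expect is realization clause~(\ref{clause:realize:alg:ind}): because the exponents carry the nontrivial Abelian-group equations, I must show that interpreting the free transcendental constants by distinct fresh transcendentals identifies two monomials only when they are already provably equal in $\skel$. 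This is exactly a faithfulness (initiality) statement for the monomial normal form, and establishing it rigorously---rather than the essentially bookkeeping verification of the bundle axioms---is where the real work lies.
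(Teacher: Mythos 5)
Your construction follows the paper's proof in all essentials: interpret the constants of $\skel$ by pairwise distinct fresh values, recover the regular nodes and their messages from the role-position and parameter predicates, push the formal adversary webs supplied by Def.~\ref{def:derivable} through $\mathcal{I}$ to obtain concrete webs over $\Algebraconstr$, assemble these into a bundle, and discharge the $\mathtt{Non}$, $\mathtt{GenAt}$, and $\mathtt{Absent}$ annotations by the corresponding clauses of derivability. (Your extra care about identifications forced by $\skel\entails c=c'$ is fine, and is in fact slightly more careful than the paper.)

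The one place where your write-up has a real gap is the treatment of constants of sort $\scfld$ that are not of sort $\sctrsc$. Your opening claim that every field-sort term in $\skel$ ``reduces to a monomial in transcendental constants'' silently assumes there are no such constants, but skeletons do contain them (cf.\ Lemma~\ref{lemma:derivable:field:interpretation}), and you send them only to ``distinct fresh values in $\Algebrabasic$.'' Read literally, that permits interpreting a field constant $w$ as, say, the product of the images of two transcendental constants $c_1,c_2$; then $\mathcal{I}(w)=\mathcal{I}(c_1\cdot c_2)$ while $\skel\not\entails w=c_1\cdot c_2$, and clause~\ref{clause:realize:alg:ind} of Def.~\ref{def:cover:realize} fails. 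Correspondingly, the ``faithfulness'' obligation you defer to the end is stated only for transcendental constants, which is not the property you actually need. The paper closes both points with a single phrase: it interprets constants of sort $\scfld$ \emph{as well as} $\sctrsc$ by fresh transcendentals. With that choice the obstacle you anticipate evaporates rather than requiring real work: since $\thefield$ is by definition $\mathbb{Q}(\trsc)$, the field of rational expressions modulo the usual polynomial identities, two multiplicative terms over algebraically independent transcendentals denote the same element iff their reduced monomial forms coincide, which (modulo the entailed identifications of constants) is exactly when $\skel\entails t_1=t_2$. There is no further initiality lemma to establish, which is why the paper's proof does not mention one.
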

\begin{proof}
  Choose distinct values of corresponding sort in $\Algebraconstr$ for
  all message constants in $\mathcal{C}(\skel)$, using transcendentals
  for constants of sort $\scfld$ as well as of sort $\sctrsc$.  Choose
  distinct nodes for each node constant $c_n$, with their directions
  and messages determined by the role definitions determined by their
  node position predicates and parameter predicates.  These nodes are
  the regular nodes of $\bnd$.  These choices determine an
  interpretation map $\mathcal{I}$.

  By clause~\ref{clause:der:webs}, there is a formal web rooted at
  each $c_n$ that derives it formally from earlier transmissions.
  Applying $\mathcal{I}$ turns this into an advesary web over
  $\Algebraconstr$.  Thus, the nodes form a bundle when equipped with
  these adversary webs.  The remaining annotations are satisfied by
  the corresponding clauses of Def.~\ref{def:derivable}.  \qed
\end{proof}
Curiously, the converse of this lemma is false.

Consider a skeleton $\skel$ containing one regular strand, consisting
of a node that sends $+\enc{g^x}K$, after which $-\gen^{xw}$ is
received.  Here, $x\colon\sctrsc$ is assumed uniquely generated, and
the symmetric key is non-originating.  Since the variable
$w\colon\scfld$ is of the broader field sort, we may apply a
substitution $\sigma$ that sends $w\mapsto z/x$, meaning that, in
$\sigma(\skel)$ the adversary must supply the value
$\gen^{x(z/x)}=\gen^z$.  Since $z$ is unconstrained, this the
adversary can easily supply, with a creation followed by an
exponentiation.

The substitution $\tau$ in which $z\mapsto wx$ inverts $\sigma$,
i.e.~$\tau(\sigma(\skel))$ is the same theory as $\skel$.  Thus, the
two skeletons $\skel$ and $\sigma(\skel)$ are inter-interpretable.
Since $\sigma(\skel)$ is derivable---hence also realized---$\skel$ is
also realized.

{\cpsa} rewrites skeletons when possible to make receptions derivable,
after which we use derivability as a criterion of being realized.  The
rewriting must meet two constraints.  First, $x\colon\sctrsc$ must
occur in $\msg(n)$ if it is assumed to be chosen there in
$\mathtt{GenAt}(n,n_2,x)$ or $\mathtt{GenAt}(n_1,n,x)$.  Second, $x$
must not originate at $n$ if it is assumed non-originating.
\begin{lemma}
  \label{lemma:derivable:field:interpretation}
  Suppose that $\bnd$ $\mathcal{I}$-realizes $\skel$.  Let $T_0$ be
  the set of constants declared of sort $\sctrsc$ in $\skel$, and let
  $T\subseteq\trsc$ be the set of transcendentals that $\mathcal{I}$
  assigns to $T_0$.  Let the set $W_0$ of constants declared of sort
  $\scfld$ occurring in $\skel$, and let $W\subseteq\thefield$ be the
  image of $W_0$ under $\mathcal{I}$.

  There exists a bundle $\bnd'$ and an interpretation into $\bnd'$
  $\mathcal{I}'\models\skel$ such that $\mathcal{I}'$ maps constants
  declared with sort $\scfld$ in $\skel$ injectively to monomials of
  the form $(\prod_{x\in T} x^{d}) y$ where each $x\in T$, each
  $d\in\mathbb Z$, and each $y\in\trsc\setminus T$.
\end{lemma}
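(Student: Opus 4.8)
The plan is to realize $\mathcal{I}'$ as $\theta\circ\mathcal{I}$ for a suitable automorphism $\theta$ of the message algebra $\Algebraconstr$ that is the identity on every sort except $\field$ and $\group$, fixes the transcendentals of $T$ pointwise, and carries each field-constant image into the required shape. By Assumption~\ref{Assum:monomial} every field element is a monomial, so for each $w\in W_0$ the value $\mathcal{I}(w)$ factors uniquely as $\mathcal{I}(w)=\mu^T_w\cdot r_w$, where $\mu^T_w=\prod_{x\in T}x^{d_{w,x}}$ collects the transcendentals of $T$ (with $d_{w,x}\in\mathbb{Z}$) and the \emph{remainder} $r_w$ is a monomial whose only transcendentals lie in $\trsc\setminus T$. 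I will choose $\theta$ to fix each $x\in T$ and to send $r_w\mapsto y_w$ for fresh $y_w\in\trsc\setminus T$, so that $\mathcal{I}'(w)=\mu^T_w\cdot y_w$ has exactly the asserted form.

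Because the cryptographic and tupling operators of $\Sigmaconstr\setminus\Sigmabasic$ are free, any field automorphism $\theta_0$ of $\thefield$ fixing $\mathbb{Q}$ extends uniquely to an automorphism $\theta$ of $\Algebraconstr$ acting as $\theta_0$ on $\field$, as the induced map $\gen^\nu\mapsto\gen^{\theta_0(\nu)}$ on $\group$, and as the identity on the remaining basic sorts. Set $\bnd'=\theta(\bnd)$, applying $\theta$ to the message of every node while leaving the strand structure and the relations $\rightarrow,\Rightarrow$ untouched. Since $\theta$ is an algebra isomorphism, $\bnd'$ is again a $\Pi$-bundle: the adversary strands of Fig.~\ref{fig:adversary:strands} are closed under $\theta$, regular strands remain instances of the same roles (their assumptions are algebraic and hence isomorphism-invariant), and origination, chosenness, and derivability are preserved because they depend only on the combinatorial structure and on the algebra up to isomorphism. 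Consequently the semantics of every predicate of $\lang\Pi$---$\mathtt{Prec}$, $\mathtt{Coll}$, $\mathtt{Non}$, $\mathtt{GenAt}$, $\mathtt{Absent}$, $\mathtt{DerBy}$, and the role-position and parameter predicates---is transported verbatim, so $\mathcal{I}'=\theta\circ\mathcal{I}$ satisfies exactly the sentences that $\mathcal{I}$ does. As $\mathcal{I}\models\skel$ we obtain $\mathcal{I}'\models\skel$; in particular every equality entailed by $\skel$ is preserved automatically, since $\theta$ is a homomorphism.

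The heart of the argument is the construction of $\theta_0$ as an automorphism of $\thefield=\mathbb{Q}(\trsc)$ that fixes $T$ and sends each remainder $r_w$ to a fresh transcendental. Working in the free abelian group of monomials over $\trsc\setminus T$ (modulo the scalar group $\mathbb{Q}^*$), this is achievable exactly when the exponent vectors of the $r_w$ extend to a $\mathbb{Z}$-basis, i.e.\ when the remainders are \emph{multiplicatively independent}; one then takes $\theta_0$ to be the monomial substitution of the non-$T$ transcendentals given by an invertible integer matrix that turns each $r_w$ into a single generator $y_w$, fixing $T$ and all unused transcendentals. This is where realization does the work: by Definition~\ref{def:cover:realize}, clause~\ref{clause:realize:alg:ind}, any multiplicative relation holding among the values $\mathcal{I}(w)$ is already entailed by $\skel$, and for a compliant protocol the only such entailed relations arise from matching exponentiated group values. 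Using that $\Pi$ separates transcendentals together with Lemmas~\ref{lemma:simple:visible} and~\ref{lemma:group:visible}, every such relation equates a product of field constants to a monomial purely in $T$; such relations contribute only to the $T$-parts $\mu^T_w$ and leave the remainders relation-free, so the $r_w$ can indeed be sent to independent fresh transcendentals (reusing a single $y$ across constants whose remainders $\skel$ forces to agree, which keeps the map injective on the provably distinct constants). I expect this last step---showing that all $\skel$-provable relations among the broad-sort field constants reduce, under compliance, to $T$-monomial identities and hence absorb into the $\mu^T_w$---to be the main obstacle; once it is in place, the transport argument of the preceding paragraph is routine.
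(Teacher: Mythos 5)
Your overall decomposition matches the paper's proof: factor each $\mathcal{I}(w)$ as a $T$-monomial $\mu^T_w$ times a $T$-free remainder $r_w$ (using Assumption~\ref{Assum:monomial}), send the remainders to fresh transcendentals, and transport the bundle. But the way you execute the replacement has a genuine gap: you require it to be an \emph{automorphism} $\theta$ of all of $\Algebraconstr$ (equivalently, of $\thefield=\mathbb{Q}(\trsc)$), and such a $\theta$ need not exist. Your parenthetical equivalence ``the exponent vectors of the $r_w$ extend to a $\mathbb{Z}$-basis, i.e.\ the remainders are multiplicatively independent'' is false: in an integer lattice, $\mathbb{Z}$-linear independence of a set of vectors does not imply it extends to a basis. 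Concretely, suppose a single constant $w\colon\scfld$ has $\mathcal{I}(w)=z^2/x$ with $x\in T$, $z\in\trsc\setminus T$, so $r_w=z^2$. This occurs in genuinely realized skeletons --- in the paper's own running example the adversary may create $z$, multiply it by itself, and exponentiate, and Clause~\ref{clause:realize:alg:ind} of Def.~\ref{def:cover:realize} does not forbid it, because $z$ itself is not the value of any term of $\lang\Pi$, so no unentailed coincidence of terms arises. Yet no automorphism (indeed no endomorphism) of $\mathbb{Q}(\trsc)$ sends $z^2$ to a transcendental $y$: it would have to send $z$ to a square root of $y$, and a degree count in the polynomial ring shows $\mathbb{Q}(\trsc)$ contains none. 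So $\theta$, and with it your entire ``transport verbatim'' paragraph, cannot be constructed in general.

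The paper avoids exactly this by never extending the map to the ambient field. It derives algebraic independence of $W$ over $\mathbb{Q}(T)$ in one step from Clause~\ref{clause:realize:alg:ind} (which also yields injectivity of $w\mapsto r_w$: otherwise some $w_1/w_2$ would be a $T$-monomial, a forbidden relation), applies Lemma~\ref{lemma:alg:ind} to get an isomorphism $J$ from the \emph{subfield} $\thefield_1=\mathbb{Q}(T)(W')$ onto $\mathbb{Q}(T,Y)$, and then \emph{rebuilds} the bundle rather than mapping it pointwise: regular strands, whose field values are generated by the interpretations of the skeleton's constants and hence lie in $\thefield_1$, are pushed through $J$, while each adversary derivation of a remainder $w\in W'$ is replaced by a one-step creation of $J(w)\in Y$. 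In the example above this replaces ``create $z$, square it'' by ``create $y$,'' which no single pointwise message map can accomplish; the freedom to re-route the adversary webs is essential, not routine. Separately, your treatment of the independence question is both deferred (you admit it is unproven) and misdirected: Lemmas~\ref{lemma:simple:visible} and~\ref{lemma:group:visible} concern what an adversary can extract inside a bundle, not which identities hold among interpreted constants; the relevant fact is just Clause~\ref{clause:realize:alg:ind}. Moreover, the relations you propose to tolerate --- a product of field constants equal to a $T$-monomial --- cannot be absorbed into the $\mu^T_w$ as you claim: such a relation forces one remainder to be the \emph{inverse} of another, which is incompatible with the lemma's conclusion that every constant is sent to a $T$-monomial times a fresh transcendental to the first power, so these relations must be ruled out, not accommodated.
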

In the example above, we would like $\mathcal{I}'(w)=x^{-1}y$.  
\begin{proof}
  By assumption~\ref{Assum:monomial}, $W$ consists of monomials, and
  let $\thefield_0$ be the base field.

  By Def.~\ref{def:cover:realize},
  Clause~\ref{clause:realize:alg:ind}, the values in $W$ are
  algebraically independent.  The values in $W$ may involve the
  transcendentals $T$.  However, when we divide through by some
  $\prod_{x\in T} x^{d}$, we obtain a set $W'$, whose members are
  still algebraically independent but $T$-free.  Indeed, the map from
  $W'$ to $W$ is injective:  Otherwise, there are two members
  $w_1,w_2\in W$ such that $w_1/w_2$ is a monomial in members of $T$,
  contradicting their algebraic independence.

  The members of $W'$ are $T$-free independent monomials, rather than
  just transcendentals.  Choose a set of transcendentals
  $Y\subseteq\trsc$, disjoint from $T$, of the same cardinality as
  $W'$.  Letting $\thefield_1$ be the smallest field including
  $\thefield_0(T)$ and $W'$, Lemma~\ref{lemma:alg:ind} gives us an
  isomorphism $J\colon\thefield_1\rightarrow\thefield_0(T,Y)$.

  We let $\bnd'$ be the result of applying this $J$ to $\bnd$, and let
  $\mathcal{I}'=J\circ\mathcal{I}$.  It is clear that every regular
  strand in $\bnd'$ is an instance of the same role as its preimage in
  $\bnd$; adversary webs are preserved when we replace any derivation
  of $w\in W'$ by a one-step creation of $J(w)\in Y$.  \qed 
\end{proof}
\begin{lemma}
  \label{lemma:derivable:if:interpretation:simple}
  Suppose $\bnd$ $\mathcal{I}$-realizes $\skel$ and $\mathcal{I}$ maps
  constants declared with sort $\scfld$ in $\skel$ injectively to
  monomials of the form $(\prod_{x\in T} x^{d}) y$ where each
  $x\in T$, $d\in\mathbb Z$, and $y\in\trsc\setminus T$.

  There is a $\sigma\colon\scfld\rightarrow\term(\lang\Pi)$ that maps
  constants declared of sort $\scfld$ to terms of $\lang\Pi$ of sort
  $\scfld$, such that $\sigma(\skel)$ is derivable.

  Moreover, $\sigma$ is invertible.  
\end{lemma}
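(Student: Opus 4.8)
The plan is to read the required substitution $\sigma$ directly off the realizing interpretation $\mathcal{I}$, invert it with an explicit $\tau$, and then obtain derivability of $\sigma(\skel)$ by transporting the adversary webs already present in the realizing bundle $\bnd$ back to formal webs over $\sigma(\skel)$.

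First I would define $\sigma$. By hypothesis each constant $c$ declared of sort $\scfld$ satisfies $\mathcal{I}(c)=(\prod_{x\in T}x^{d_{c,x}})\,y_c$ with $y_c\in\trsc\setminus T$, and (from Lemma~\ref{lemma:derivable:field:interpretation}) the $y_c$ are pairwise distinct. For each $x\in T$ fix a constant $a_x$ of sort $\sctrsc$ in $\skel$ with $\mathcal{I}(a_x)=x$; any two such choices are provably equal in $\skel$ by Clause~\ref{clause:realize:alg:ind} of realization, so the choice is immaterial. Introduce a fresh constant $b_c\colon\sctrsc$ for each field constant $c$ and set
\[
  \sigma(c) \;=\; \Bigl(\prod_{x\in T} a_x^{\,d_{c,x}}\Bigr)\,b_c ,
\]
leaving $\sigma$ the identity on every constant declared of sort $\sctrsc$, so that the $\mathtt{GenAt}$ and $\mathtt{Non}$ constraints, which concern only transcendental constants, are untouched. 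The inverse $\tau$ sends each fresh $b_c$ to $c\cdot\prod_{x\in T}a_x^{-d_{c,x}}$ and is the identity elsewhere; since the $b_c$ are distinct and fresh, $\tau$ is well defined, and a one-line monomial computation gives $\tau(\sigma(c))=c$ and $\sigma(\tau(b_c))=b_c$. Hence $\sigma$ is invertible.

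Next I would exhibit a realizing interpretation for $\sigma(\skel)$. Extend $\mathcal{I}$ to $\mathcal{I}_\sigma$ by putting $\mathcal{I}_\sigma(b_c)=y_c$ and agreeing with $\mathcal{I}$ on all old constants; then $\mathcal{I}_\sigma(\sigma(t))=\mathcal{I}(t)$ for every term $t$, so $\bnd$ $\mathcal{I}_\sigma$-realizes $\sigma(\skel)$, and $\sigma(\skel)$ remains role-specific and in expanded form since $\sigma$ touches neither node constants nor the predicates of $\skelax$. Because the $y_c$ are distinct and disjoint from $T$, $\mathcal{I}_\sigma$ sends the field terms of $\sigma(\skel)$ injectively to transcendental monomials, so formal occurrence of a term matches actual presence of its value and formal provable equality matches equality of values under $\mathcal{I}_\sigma$ (Clause~\ref{clause:realize:alg:ind}). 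This faithfulness immediately discharges the $\mathtt{Absent}$, $\mathtt{GenAt}$ and $\mathtt{Non}$ clauses of Def.~\ref{def:derivable}: each asserts a fact about occurrence or origination that holds formally exactly because it holds for the corresponding values in $\bnd$.

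The substantive step is derivability of the reception nodes and the $\mathtt{DerBy}$ obligations. For a reception node $n$, Lemma~\ref{lemma:adversary:web} supplies a concrete adversary web in $\bnd$ rooted at $\mathcal{I}_\sigma(n)$ whose leaves draw from earlier regular transmissions. I would lift this web to a formal adversary web over $\sigma(\skel)$ node-for-node: each concrete multiplicative, constructive, or destructive adversary step is mirrored by the corresponding formal adversary strand of Fig.~\ref{fig:adversary:strands}; each leaf, drawing a value from a regular transmission in $\bnd$, is matched to the corresponding regular transmission constant of $\sigma(\skel)$ (which exists by surjectivity on regular nodes and carries the matching term by faithfulness); and each communication edge $+t_1\to -t_2$ is legitimate because $\mathcal{I}_\sigma(t_1)=\mathcal{I}_\sigma(t_2)$ forces $\sigma(\skel)\entails t_1=t_2$. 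I expect the one point needing care to be the creation nodes: an adversary creation node originates the value it emits, and in a bundle realizing $\sigma(\skel)$ a restricted value can originate only where its $\mathtt{Non}$/$\mathtt{GenAt}$ constraint permits, never on an adversary strand, so every concrete creation emits an unrestricted value and lifts to a permitted formal creation. This is exactly where the monomial form $(\prod x^{d})\,y$ pays off: the transcendental $y_c$ carrying the fresh part of each field value is unconstrained and hence adversary-creatable, while the restricted factors $a_x$ are only ever supplied through regular transmissions, in accordance with Lemmas~\ref{lemma:simple:visible} and~\ref{lemma:group:visible}. Assembling these formal webs, together with the already-verified annotation clauses, shows $\sigma(\skel)$ is derivable.
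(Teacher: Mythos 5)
Your proposal is correct and takes essentially the same route as the paper's own (much terser) proof: you define $\sigma$ by factoring each field constant's interpretation into its $T$-part times a fresh constant, re-interpret the fresh constants as the leftover transcendentals $y_c$ so that $\bnd$ still realizes $\sigma(\skel)$, get derivability by pulling the adversary webs of $\bnd$ back to formal webs over $\sigma(\skel)$, and invert $\sigma$ with the $\tau$ that divides out the product of the $a_x$'s. The only differences are immaterial elaborations---declaring the fresh constants $b_c$ at sort $\sctrsc$ where the paper's $w'$ is naturally read at sort $\scfld$, and spelling out the web-lifting, creation-node, and annotation checks that the paper compresses into one sentence.
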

In the example above, $\sigma$ maps $w$ to $w'/x$, which cancels out
$x$ in the reception node, whose message now equals $w'$, which is
certainly derivable.  
\begin{proof}
  Suppose that $\mathcal{I}$ maps each $\syntax{x}\colon\sctrsc$ to
  $x\in\trsc$, $\syntax{w}\colon\scfld$ to
  $(\prod_{x\in T} x^{d}) y\in\thefield$, etc.  Define $\sigma$ to act
  on each $\syntax{w}$, sending it
  $\syntax{w}\mapsto(\prod_{\syntax{x}\in
    T}\syntax{x}^{d})\syntax{w'}$.  We are now interested in the
  interpretation $\mathcal{J}$ where $\mathcal{J}(\syntax{w'})=y$,
  etc.  That is, every constant of sort $\scfld$ is interpreted by a
  distinct transcendental, disjoint from those interpreting constants
  of sort $\sctrsc$.  

  Thus,
  $\mathcal{J}(\sigma(\syntax{w}))=\mathcal{J}((\prod_{\syntax{x}\in
    T}\syntax{x}^{d})\syntax{w'})=(\prod_{x\in T} x^{d}) y$, so that
  $\mathcal{J}$ applied to $\sigma(\skel)$ yields the same results as
  $\mathcal{I}$ applied to $\skel$.  Hence, $\bnd$
  $\mathcal{J}$-realizes $\sigma(\skel)$.

  Indeed, $\sigma(\skel)$ is derivable, because, for each node, we can
  use the preimage of the adversary web that derives in it $\bnd$.

  The inverse map $\tau$ divides by the product of the $\syntax{x}$s.
  The theory is unchanged to within equalities it proves.  
  \qed 
\end{proof}
\begin{theorem}
  \label{thm:derivable:realized:subst}
  $\skel$ is realized iff there exist $\sigma,\tau$ such that
  $\sigma(\skel)$ is derivable and $\tau(\sigma(\skel))$ is the same
  theory as $\skel$.
\end{theorem}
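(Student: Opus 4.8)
The plan is to assemble the theorem directly from the three preceding lemmas, treating the two directions separately, and using throughout that the substitutions $\sigma,\tau$ act only on constants of sort $\scfld$, leaving node constants untouched.

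For the easy direction ($\Leftarrow$), I would assume $\sigma(\skel)$ is derivable and $\tau(\sigma(\skel))$ is the same theory as $\skel$. By Lemma~\ref{lemma:derivable:realized} there is a bundle $\bnd$ and an interpretation $\mathcal{I}$ with $\bnd$ $\mathcal{I}$-realizing $\sigma(\skel)$. I then transport this realization back across $\sigma$ by setting $\mathcal{I}'(c)=\eta_{\mathcal{I}}(\sigma(c))$ for each constant $c$. The standard substitution lemma for Tarski satisfaction gives $\mathcal{I}\models\sigma(\phi)$ iff $\mathcal{I}'\models\phi$, so $\mathcal{I}'\models\skel$; that is, $\skel$ $\mathcal{I}'$-covers $\bnd$. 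It remains to check the two extra realization clauses of Def.~\ref{def:cover:realize}. Surjectivity on regular nodes is immediate, since $\sigma$ fixes node constants, so $\mathcal{I}'$ and $\mathcal{I}$ agree on them and $\node(\skel)=\node(\sigma(\skel))$. For the term-equality clause, if $\mathcal{I}'(t_1)=\mathcal{I}'(t_2)$ then $\eta_{\mathcal{I}}(\sigma(t_1))=\eta_{\mathcal{I}}(\sigma(t_2))$, whence $\sigma(\skel)\entails\sigma(t_1)=\sigma(t_2)$ by realization of $\sigma(\skel)$; applying $\tau$ and using $\tau(\sigma(\skel))=\skel$ together with $\tau\sigma=\mathrm{id}$ on terms (up to provable equality) yields $\skel\entails t_1=t_2$. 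Hence $\bnd$ $\mathcal{I}'$-realizes $\skel$, so $\skel$ is realized.

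For the forward direction ($\Rightarrow$), I would assume $\skel$ is realized, say $\bnd$ $\mathcal{I}$-realizes $\skel$. First I invoke Lemma~\ref{lemma:derivable:field:interpretation} to replace $\mathcal{I}$ by an interpretation $\mathcal{I}'$ into a bundle $\bnd'$ that maps every $\scfld$-constant injectively to a monomial of the normal form $(\prod_{x\in T}x^{d})y$. Here I note that this lemma is obtained by applying a field isomorphism $J$ to $\bnd$ with $\mathcal{I}'=J\circ\mathcal{I}$; since $J$ is a bijection it preserves both realization clauses, so in fact $\bnd'$ $\mathcal{I}'$-realizes $\skel$, not merely covers it. With $\mathcal{I}'$ now in normal form, Lemma~\ref{lemma:derivable:if:interpretation:simple} supplies an invertible $\sigma\colon\scfld\to\term(\lang\Pi)$ such that $\sigma(\skel)$ is derivable. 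Taking $\tau$ to be the inverse furnished by that lemma, we get $\tau(\sigma(\skel))=\skel$ up to provable equalities, which is exactly the required ``same theory'' condition.

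The routine content is bookkeeping; the one point needing care is that both lemmas I rely on are phrased in terms of \emph{covering} ($\mathcal{I}'\models\skel$) rather than the stronger notion of \emph{realizing}, so the main obstacle will be to confirm that the two realization-specific clauses of Def.~\ref{def:cover:realize}---entailment of all forced equalities, and surjectivity onto regular nodes---survive the constructions. In ($\Rightarrow$) this is ensured by the bijectivity of the isomorphism $J$, and in ($\Leftarrow$) by the invertibility of $\sigma$ together with the syntactic fact that $\sigma,\tau$ never rename node constants. Once these two preservation facts are established carefully, the theorem follows immediately.
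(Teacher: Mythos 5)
Your proof is correct and takes essentially the same route as the paper, whose entire proof is the one-line citation ``By Lemmas~\ref{lemma:derivable:realized}--\ref{lemma:derivable:if:interpretation:simple}'': the backward direction is Lemma~\ref{lemma:derivable:realized} plus transport of the realization along the invertible substitution, and the forward direction chains Lemma~\ref{lemma:derivable:field:interpretation} into Lemma~\ref{lemma:derivable:if:interpretation:simple}. Your added observation---that Lemma~\ref{lemma:derivable:field:interpretation} literally concludes only covering ($\mathcal{I}'\models\skel$) while Lemma~\ref{lemma:derivable:if:interpretation:simple} requires realizing, and that the isomorphism $J$ in its proof in fact preserves both realization clauses---is a careful patch of a looseness the paper silently elides, not a departure from its argument.
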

\begin{proof}
  By
  Lemmas~\ref{lemma:derivable:realized}--\ref{lemma:derivable:if:interpretation:simple}.
  \qed
\end{proof}


\section{Tests and Cohorts}
\subsection{Solving tests}
\label{sec:tests}


Our goal is to explain how to gradually enrich descriptions to
identify the minimal, essentially different forms that they can take,
which we call their \emph{shapes}.  To find the principles for
enrichments that lead to these shapes, we will isolate structural
characteristics that are present in all bundles.  When these
characteristics are absent, we will need to enrich a description to
add them, and the options for these enrichments provide the recipe 
driving {\cpsa}'s execution.

So let $\bnd$ be a bundle for a protocol $\Pi$.  Recall from
Def.~\ref{def:paths:1} that a \emph{unit} is a basic value or an
encryption, but not a tuple.  By an \emph{escape set}, we mean a set
$E$ of encryptions $\enc{t}K$.

The central technique of {\cpsa} is to consider how a unit $c$ that
has previously been protected by an escape set $E$ could have escaped
from its protection.

We will say that $c$ is \emph{protected by} $E$ in a particular
message $m$ iff every carried path $p=(m, \pi)$ such that
$c=\termat p$
traverses some member of $E$.  When there are no carried paths
$p=(m, \pi)$ such that $c=\termat p$ then this is true (vacuously).
We will write $\foundwithin c E m$ when $c$ is protected by $E$ in
$m$.

When $\foundwithin c E m$, any adversary that possesses $m$ and wants
to obtain $c$, or any message containing $c$ in a non-$E$ form, must
either break through the protection provided by $E$, or else build $c$
separately.

Since these notions involve only paths through the free structure of
the message algebra, they remain the same whether we are talking about
concrete messages $m\in\Algebraconstr$ or formal terms
$t\in\term(\lang{})$.  To formalize reasoning about the cases when a
value escapes, we define cuts. A \emph{cut} in a bundle is a
downward-closed set of nodes in a bundle, like a lower Dedekind cut:
\begin{definition}
  $S\subseteq\node(\bnd)$ is a \emph{cut in} $\bnd$ iff $\bnd$ is a
  bundle and, whenever $n_2\in S$ and $n_1\preceq_{\bnd}n_2$, then
  $n_1\in S$.

  When $S$ is a cut in $\bnd$, we call a node
  $n\in\node(\bnd)\setminus S$ an \emph{upper node of} $S$ in $\bnd$.
  A node $n$ is a \emph{minimal upper node of} $S$ in $\bnd$ iff it is
  an upper node of $S$, and if $n'$ is an upper node of $S$ in $\bnd$
  and $n'\preceq_\bnd n$, then $n'=n$.  
\end{definition}
\begin{lemma}
  \label{lemma:minimal:upper}
  Let $\bnd$ be a bundle, $c$ a unit, and $E$ a set of encryptions.
  \begin{enumerate}
    \item The set 
    $ S=\{n\in\node(\bnd)\colon \forall n'\preceq_\bnd n\qdot
      \foundwithin c E {\msg(n')} \} $ 
    is a cut in $\bnd$.
    \item If $S$ has upper nodes in $\bnd$, then $S$ has minimal upper
    nodes in $\bnd$.

    \item If $n_u$ is a minimal upper node of $S$ in $\bnd$, then
    $\direction(n_u)=+$ and there is a path $p=(\msg(n_u), \pi)$ such
    that $c=\msg(n_u)\termat \pi$ and $p$ traverses no member
    $e\in E$.  Moreover, if $n_S\Rightarrow^+n_u$, then $n_S\in S$, so
    $\foundwithin c E {\msg(n_S)}$.
  \end{enumerate}
\end{lemma}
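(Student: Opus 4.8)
The plan is to take the three clauses in turn; the first two fall out of the order structure of a bundle, while the third carries the real content. For clause~1 I would check downward-closure directly: if $n_2 \in S$ and $n_1 \preceq_\bnd n_2$, then any $n' \preceq_\bnd n_1$ also satisfies $n' \preceq_\bnd n_2$ by transitivity of $\preceq_\bnd$, so $\foundwithin{c}{E}{\msg(n')}$ holds because $n_2 \in S$; hence $n_1 \in S$. For clause~2, the upper nodes of $S$ are exactly $\node(\bnd)\setminus S$, which is nonempty by hypothesis, so the bundle induction principle (Lemma~\ref{lemma:bundle:induction}) applied to this set produces a $\preceq_\bnd$-minimal element of it, which is by definition a minimal upper node of $S$.

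For clause~3 I would first show that protection already fails at $n_u$ itself, i.e.\ $\neg\,\foundwithin{c}{E}{\msg(n_u)}$. Since $n_u \notin S$, the defining condition of $S$ fails at $n_u$, so there is some $n' \preceq_\bnd n_u$ with $\neg\,\foundwithin{c}{E}{\msg(n')}$. Taking this very $n'$ as the witness in the definition of $S$ shows $n' \notin S$, so $n'$ is an upper node; minimality of $n_u$ then forces $n' = n_u$. Unwinding the definition of protection, $\neg\,\foundwithin{c}{E}{\msg(n_u)}$ yields exactly a carried path $p = (\msg(n_u), \pi)$ with $c = \msg(n_u)\termat\pi$ that traverses no member of $E$, which is the path the statement promises.

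To see $\direction(n_u) = +$, I would argue by contradiction: suppose $n_u$ were a reception node. The bundle axiom guaranteeing a unique matching transmission (second clause of Def.~\ref{def:bundle}) gives a node $n_1 \in \node(\bnd)$ with $n_1 \rightarrow n_u$, so $n_1 \prec_\bnd n_u$ and $\msg(n_1) = \msg(n_u)$; hence $\neg\,\foundwithin{c}{E}{\msg(n_1)}$, and by the argument just given $n_1$ is an upper node strictly below $n_u$, contradicting minimality. For the ``moreover'' clause, if $n_S \Rightarrow^+ n_u$ then backward-closure (clause~\ref{clause:bundle:backward} of Def.~\ref{def:bundle}) puts $n_S$ in $\node(\bnd)$, and $n_S \prec_\bnd n_u$ since $\Rightarrow^+$ is contained in $\preceq_\bnd$ and $n_S \neq n_u$; minimality of $n_u$ forbids $n_S$ from being an upper node, so $n_S \in S$, and instantiating the condition defining $S$ at $n_S$ itself gives $\foundwithin{c}{E}{\msg(n_S)}$.

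The only step needing genuine care is the direction argument: it relies on the matching transmission both existing and being unique, and on its carrying the \emph{same} message, so that the failure of protection transfers from $n_u$ back to $n_1$. This is the one place where the full bundle axioms are used, and the clause would fail over an arbitrary node poset that is not a bundle.
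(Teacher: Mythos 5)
Your proof is correct and takes essentially the same route as the paper's (much terser) argument: downward closure from the shape of the definition, Lemma~\ref{lemma:bundle:induction} applied to the set of upper nodes, and the matching-transmission-contradicts-minimality argument for $\direction(n_u)=+$; your explicit witness step showing that protection already fails at $n_u$ itself is exactly what the paper leaves implicit in ``the path $p$ exists by the definition.'' One cosmetic remark: the direction argument needs only the \emph{existence} of the matching transmission and the equality $\msg(n_1)=\msg(n_u)$, not its uniqueness.
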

We write $\esc(c,E,\bnd)$ for the minimal upper nodes of this cut,
since these nodes cause $c$ to escape from the protection of the
encryptions $E$.  We call $c$ the \emph{critical value} of the cut.
\begin{proof}
  \textbf{1.}  By the form of the definition, $S$ is downward-closed.
  \textbf{2.}  By Lemma~\ref{lemma:bundle:induction}.

  \textbf{3.}  If the direction is $-$, then the matching earlier
  transmission contradicts minimality.  The path $p$ exists by the
  definition of $\foundwithin c E {\msg(n')}$.  That $n_S\in S$ holds
  follows by the assumption of minimality.  \qed
\end{proof}
%
By examining the forms of the adversary strands, we find:
\begin{theorem}
  \label{thm:test} 
  Let: $\Pi$ be a compliant protocol; $\bnd$ be a $\Pi$-bundle;
  $n\in\node(\bnd)$; $E$ an escape set; $p=(\msg(n),\pi)$ a carried
  path that traverses no $e\in E$; and $c=\msg(n)\termat \pi$ be a
  unit.  Then there exist nodes $n_u\in\esc(c,E,\bnd)$.  Each $n_u$ is
  either an adversary creation node or satisfies at least one of these
  cases:
  \begin{description}
    \item[Regular escape:]  $n_u$ is a regular node, and for each
    $n_S\Rightarrow^+n_u$, $\foundwithin c E {\msg(n_S)}$;  
    \item[Breaking $E$:]  There is an encryption $\enc t K\in E$ and a
    transmission node $n_K$ such that $n_K\prec n_u$, and
    $\msg(n_K)=K^{-1}$;
    \item[Forging $c$:]  $c=\enc t K$ is an encryption, and there is a
    transmission node $n_K$ such that $n_K\prec n_u$, and
    $\msg(n_K)=K$;
    \item[Field element:]  $c\in\thefield$ is a field element, with
    $c$ visible in $\msg(n_u)$, and every $x\in\trsc$ present in $c$
    is visible before $n_u$;
    \item[Group element:]  $\msg(n_u)=c=\gen^\mu\in\thegroup$ is a
    group element, and $\mu=\xi\nu$ where $\nu$ is a product of field
    values visible before $n_u$, and either $\xi=\one$, or else
    $\gen^{\xi}$ is carried in a regular transmission node
    $n_r\prec n_u$.
  \end{description}
\end{theorem}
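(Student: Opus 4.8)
The plan is to derive everything from the cut machinery of Lemma~\ref{lemma:minimal:upper} together with the two visibility lemmas. First I would show that $\esc(c,E,\bnd)$ is nonempty. Since $p=(\msg(n),\pi)$ is a carried path with $c=\termat p$ that traverses no $e\in E$, the value $c$ is \emph{not} protected by $E$ in $\msg(n)$; that is, $\foundwithin c E {\msg(n)}$ fails, so $n\notin S$, where $S$ is the cut of Lemma~\ref{lemma:minimal:upper}. Hence $n$ is an upper node of $S$, and by part~2 of that lemma $S$ has minimal upper nodes, which by definition are the members of $\esc(c,E,\bnd)$. Fixing any such $n_u$, part~3 of the lemma supplies $\direction(n_u)=+$, a carried path $(\msg(n_u),\pi')$ to $c$ traversing no member of $E$, and the fact that every strand-predecessor $n_S\Rightarrow^+n_u$ lies in $S$ and so satisfies $\foundwithin c E {\msg(n_S)}$.

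Next I would split on the strand carrying $n_u$. If $n_u$ is regular, the \textbf{Regular escape} case holds immediately, since part~3 already gives $\foundwithin c E {\msg(n_S)}$ for each $n_S\Rightarrow^+n_u$. If $n_u$ is an adversary creation node we are done. Otherwise $n_u$ is the transmission root of one of the remaining adversary strands of Fig.~\ref{fig:adversary:strands}; the additive strands are excluded by Assumption~\ref{Assum:monomial}. The tupling and untupling strands are impossible: in each, the carried path to $c$ through $\msg(n_u)$ restricts to, or extends through the pairing constructor (which, being no encryption, is never in $E$) from, a non-$E$ carried path to $c$ in a message received on a strand-predecessor, contradicting that that predecessor lies in $S$ and hence protects $c$. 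For the encryption constructor $+\enc m K$, either $c=\enc m K$ itself---the \textbf{Forging $c$} case, with $n_K$ the transmission matching the reception $-K$---or the path descends into the plaintext, whence $\enc m K\notin E$ (else the path traverses $E$), so $c$ is carried in $m$ by a non-$E$ path while the predecessor $-m$ protects $c$, a contradiction. Dually, for the decryption strand, extending the path through the plaintext of $\enc m K$ and invoking protection at the predecessor $-\enc m K$ forces the obligatory traversed member of $E$ to be $\enc m K$ itself, which is the \textbf{Breaking $E$} case, with $n_K$ matching $-K^{-1}$.

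It remains to treat the product, division, and exponentiation strands, where $\msg(n_u)$ is a basic value, so that the only carried path is trivial and $c=\msg(n_u)$. If $n_u$ is a product or division node, then $c\in\thefield$ is visible in $\msg(n_u)$ trivially, and Lemma~\ref{lemma:simple:visible} (applied with $p=c$ and $n_p=n_u$, using that $\Pi$ separates transcendentals) shows each transcendental present in $c$ is visible at some $n_x\preceq_\bnd n_u$; this is the \textbf{Field element} case. If $n_u$ is an exponentiation node, then $c=\gen^\mu\in\thegroup$, and Lemma~\ref{lemma:group:visible} directly supplies a monomial $\nu$, a product of transcendentals (hence field values) visible before $n_u$, with $\mu=\xi\nu$ and either $\xi=\one$ or $\gen^\xi$ carried in a regular transmission $n_r\prec n_u$---precisely the \textbf{Group element} case. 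The substantive content of the theorem therefore resides in Lemmas~\ref{lemma:simple:visible} and~\ref{lemma:group:visible}, which are already in hand; the rest is bookkeeping. I expect the main obstacle to be the path-traversal arguments for the cryptographic strands---correctly deciding whether the $E$-traversal forced by the cut occurs at the outermost encryption (yielding \textbf{Breaking $E$} or \textbf{Forging $c$}) or strictly inside a protected submessage (yielding a contradiction)---since these turn on the precise definitions of \emph{traverses}, \emph{carried path}, and protection $\foundwithin{}{}{}$.
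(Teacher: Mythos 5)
Your proof is correct and takes essentially the same route as the paper's: nonemptiness of $\esc(c,E,\bnd)$ from Lemma~\ref{lemma:minimal:upper}, a case split on the kind of strand containing $n_u$, with encryption and decryption strands yielding the \textbf{Forging} and \textbf{Breaking} cases, pairing/separation strands ruled out, and Lemmas~\ref{lemma:simple:visible} and~\ref{lemma:group:visible} supplying the \textbf{Field element} and \textbf{Group element} cases. The only difference is one of detail: you spell out the path-traversal and protection bookkeeping that the paper compresses into single-sentence assertions.
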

\begin{proof}
  By Lemma~\ref{lemma:minimal:upper}, an escape node $n_u$ exists.  If
  it lies on a regular strand, the first case obtains.  If $n_u$ is
  the last node of a decryption strand, the ciphertext node must be a
  member of $E$; thus, the key node receives the decryption key
  $K^{-1}$.  The matching transmission node $n_K$ satisfies the second
  case.

  If $n_u$ is the last node of an encryption strand, that strand must
  create the value $c=\enc t K$; the previous key node receives $K$,
  and the matching transmission node $n_K\prec n_u$ satisfies the
  third case.  Pairing and separation strands do not originate any
  unit, nor extract a unit from protection by $E$.

  If $c$ is a field or group element,
  Lemmas~\ref{lemma:simple:visible} or resp.~\ref{lemma:group:visible}
  establish the claim.
  \qed
\end{proof}
When $c$ is a field or group element originating on an adversary node,
then the last two cases apply, resp.  Otherwise, the first case or
second case applies, depending whether $n_u\in\esc(c,E,\bnd)$ is
regular or not.



\subsection{Cohorts enrich skeletons}
\label{sec:cohorts}
%

When a user analyzes a protocol $\Pi$, she provides {\cpsa} with the
specification of $\Pi$, and starts the analysis with a particular
skeleton.  {\cpsa}'s job is to find the minimal, essentially different
derivable skeletons that enrich the starting
point~\cite{DoghmiGuttmanThayer07,Guttman10,cpsatheory11}.  

At each point in the analysis, {\cpsa} is operating with a
\emph{fringe}---a set of skeletons that are of interest, but have not
yet been explored---and some \emph{shapes}---which are derivable
skeletons already found.  
\begin{figure}[tb]
  \centering
  \begin{tabbing}
    procedure analyze(fringe, shapes) \\
    if \= fringe $=\emptyset$ \\
    \> then return shapes \\
    \> else \= begin choose $\skel$ from fringe in  \\
    \>\> if $\skel$ \= is derivable \\
    \>\>\> then analyze(fringe$\setminus\skel$, shapes $\cup\,\{\skel\}$) \\
    \>\>\> else analyze($($fringe $\cup$ cohort($\skel$)$)\setminus\skel$, shapes) \\
    \>\> end
  \end{tabbing}
  \caption{{\cpsa} Algorithm top level}
  \label{fig:cpsa:alg:top}
\end{figure}
The algorithm (Fig.~\ref{fig:cpsa:alg:top}) to do the analysis is
finished if the fringe is empty, in which case the answer is the set
of shapes found so far.  Otherwise, it chooses a skeleton $\skel$; if
$\skel$ is derivable it is added to the shapes.

Otherwise, {\cpsa} replaces $\skel_0$ with a set of skeletons that are
``closer'' to derivable skeletons.  In doing so, we would like to make
sure that no bundles are lost.  If $\skel_0$ covers a bundle $\bnd$,
we want to make sure that one of the ways we enrich $\skel_0$ will
continue to cover $\bnd$.  Thus, we seek a set of skeletons
$\{\skel_1,\ldots,\skel_k\}$ such that, for all $i,\bnd$:
\[  \skel_i\entails\skel_0 \qquad \skel_0\not\entails\skel_i \qquad
  \bnd\models\skel_0 \limp \exists j\qdot \bnd\models\skel_j 
\]
The first two assertions say that each $\skel_i$ adds information.
The last one says that every bundle that remains covered.  We call
such a set $\{\skel_1,\ldots,\skel_k\}$ a \emph{cohort}.

An important case is $k=0$, in which the cohort is the empty set.
Since the empty set covers no bundles, the last condition implies
$\skel_0$ also covers no bundles; we have learnt that it may be
discarded.

Thm.~\ref{thm:test} justifies {\cpsa}'s approach to generating
cohorts, which appears in Fig.~\ref{fig:cpsa:alg:cohort}.

When the cohort procedure is called, some node $n$ is not derivable;
so {\cpsa} selects one such, and moreover identifies a unit---not a
tuple---as the \emph{critical value} $t$, and an \emph{escape set}
$E$.

$t,E$ are chosen so that there is a carried path $p$ from $\msg(n)$ to
$t=\msg(n)\termat p$ such that $p$ traverses no member of $E$.
Moreover, $E$ is a set of encryptions such that, for earlier
transmissions $n_0$ preceding $n$, $t$ appears only protected within
the members of $E$ in nodes $\msg(n_0)$, which we wrote
$\foundwithin t E {\msg(n_0)}$ in Section~\ref{sec:tests}.  Moreover,
in $E$ we collect the topmost encryptions $\enc m K$ in $\msg(n_0)$
such that $t$ is carried within $m$ and the decryption key $K^{-1}$ is
not derivable by $n$ in $\skel_0$.  $E=\emptyset$ holds if $t$ was
carried nowhere in transmissions prior to $n$.  Thus, $E$ did protect
$t$ prior to $n$, but $t$ is found outside of the protection of $E$ in
$n$ itself.
\begin{figure}[tb]
  \centering
  \begin{tabbing} 
    pro\=cedure cohort($\skel$) \\
    choose $n$ from non-derivable($\node(\skel)$) in \\ 
    \> choose $t,E$ from critical-and-escape($n,\,\skel$) in \\
    \> let regular = regular-trans($t,E,n,\skel$) in \\
    \> let break = break-escape($E,n,\skel$) in \\
    \> let forge = forge-by-key($t,n,\skel$) in \\
    \> let contract = contract-away-test($t,E,n,\skel$) in \\
    \> let field = \= if $t\colon\scfld$  \\
    \>\> then \= choose $x\colon\sctrsc\in\kind{consts}(t)$ in \\
    \>\>\> $\{$ add-fld-deriv$(x,t,n,\skel)$, add-absence$(x,t,n,\skel)\}$ \\
    \>\> else $\emptyset$ in \\ 
    \> let group = \= if $t\colon\scgrp$  \\
    \>\> then \= choose $w\colon\scfld\not\in\kind{consts}(\skel)$ in \\
    \>\>\> $\{$ add-grp-deriv$(w, t,n,\skel) \}$ \\
    \>\> else $\emptyset$ in \\
    \> return $($regular $\cup$ break $\cup$ forge $\cup$ field $\cup$
    group $\cup$ contract$)$
  \end{tabbing}
  \caption{The {\cpsa} Cohort Algorithm}
  \label{fig:cpsa:alg:cohort}
\end{figure}

Prior versions of {\cpsa} computed skeletons in which an instance of a
\emph{regular} node transmits $t$ outside $E$; in which the adversary
\emph{breaks} the escape set $E$ by obtaining a decryption key
$K^{-1}$, possibly with the help of additional regular behavior; in
which the adversary may be able to \emph{forge} $t=\enc m K$ by
obtaining $K$; and in which a substitution applicable to $\skel_0$
\emph{contracts} $\skel_0$, equating encryptions on paths to $t$ with
members of $E$.

In this last case, the test in $\skel$ disappears before the covered
bundles are reached, so they do not need to offer any solution for it.

The remaining two clauses ensure that the cohort continues to cover
bundles in which field and group elements $c$ originate on adversary
nodes.  They rely on
Lemmas~\ref{lemma:simple:visible}--\ref{lemma:group:visible} to add
information to cover these {\DHKA} cases.  In each case, we express
this information by additional formulas in the resulting skeletons.

\begin{lemma}%
  \textbf{Cohort case,  Group member.} 
  \label{lemma:cohort:group}
  Let $t\colon\scgrp$ be a critical value in $n\in\node(\skel)$, and
  $w\colon\scfld$ be a constant not appearing in $\skel$.  Let $\Phi$
  be the sentence:
  \[ \mathtt{DerBy}(n,\expt(t,{1/w})) \land  \mathtt{DerBy}(n,w) . 
  \]
  The group cohort for $t,n,\skel$ is the singleton
  $\{\skel\cup\{\Phi\}\}$.  Moreover, $\skel\cup\{\Phi\}$ covers every
  $\bnd$ that $\skel$ covers.
\end{lemma}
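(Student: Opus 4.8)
The plan is to show directly that whatever interpretation witnesses that $\skel$ covers $\bnd$ can be extended, using the fresh constant $w$ as the only new degree of freedom, to a witness that $\skel\cup\{\Phi\}$ covers $\bnd$. So let $\mathcal{I}$ be any interpretation with $\skel$ $\mathcal{I}$-covering $\bnd$, and write $\hat n=\mathcal{I}(n)$ and $\gen^\mu=\mathcal{I}(t)$. Since $t\colon\scgrp$ is the critical value and is carried in $\msg(n)$, its image $\gen^\mu$ is carried in $\msg(\hat n)$ in $\bnd$. This is exactly the hypothesis of Lemma~\ref{lemma:group:visible} (taking $n_\mu=\hat n$), which applies because $\Pi$ is compliant; it produces a monomial $\nu\in\thefield$ that is a product of transcendentals \emph{visible} before $\hat n$, together with the dichotomy $\mu=\xi\nu$ where either $\xi=\one$, or else $\gen^\xi$ is carried in a regular transmission $n_\xi\preceq_\bnd\hat n$ and is previously visible.

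First I would extend $\mathcal{I}$ to $\mathcal{I}'$ by setting $\mathcal{I}'(w)=\nu$ and leaving everything else fixed. Because $w$ does not occur in $\skel$, $\mathcal{I}'$ still agrees with $\mathcal{I}$ on $\mathcal{C}(\skel)$, so $\mathcal{I}'\models\skel$; it then remains only to verify the two conjuncts of $\Phi$. Note that $\nu$ is nonzero, being a product of (nonzero) transcendentals, so $1/w$ and the exponentiation $\expt(t,1/w)$ are meaningful, and by the cyclic-group axiom the interpreted value of $\expt(t,1/w)$ is $(\gen^\mu)^{1/\nu}=\gen^{\mu/\nu}=\gen^\xi$.

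Next I would discharge $\mathtt{DerBy}(n,w)$: each transcendental factor of $\nu$ is visible in some transmission $\preceq_\bnd\hat n$, so the adversary extracts it by projecting through tuples (a formal web drawing only on those transmissions), and then assembles $\nu$ with the multiplicative adversary strands of Fig.~\ref{fig:adversary:strands}; this yields an adversary web rooted at $\nu$ whose leaves precede $\hat n$, which is precisely the $\mathtt{DerBy}$ semantics. For $\mathtt{DerBy}(n,\expt(t,1/w))$, i.e.\ derivability of $\gen^\xi$, I would split on the dichotomy: if $\xi=\one$ then $\gen^\xi=\gen$ is produced outright by a creation strand; otherwise Lemma~\ref{lemma:group:visible} guarantees $\gen^\xi$ is visible at a node preceding $\hat n$, so the same receive-and-project web derives it. Combining these webs gives $\mathcal{I}'\models\Phi$, and hence $\skel\cup\{\Phi\}$ $\mathcal{I}'$-covers $\bnd$, as required.

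The step I expect to be the main obstacle is this second $\mathtt{DerBy}$ conjunct: converting the ``carried in a regular transmission'' conclusion of Lemma~\ref{lemma:group:visible} into genuine adversary \emph{derivability} by $\hat n$. Being carried is not enough on its own, since the value could lie under an encryption whose key is unavailable; the argument must therefore lean on the visibility clause of the lemma (and the remark following its proof) to secure a tuple-only path to $\gen^\xi$ from a transmission $\preceq_\bnd\hat n$. I would make sure that ``previously visible'' is read as visibility in some transmission node preceding $\hat n$, which is exactly what the web semantics of $\mathtt{DerBy}$ demands.
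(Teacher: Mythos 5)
Your proposal is correct and follows essentially the same route as the paper's proof: invoke Lemma~\ref{lemma:group:visible} on the image of the critical value, extend the interpretation by sending the fresh constant $w$ to the monomial $\nu$, and check that both $\mathtt{DerBy}$ conjuncts of $\Phi$ hold. The paper's own argument is terser—it simply asserts that the assignment $w\mapsto\nu$ satisfies both conjuncts—whereas you spell out the adversary webs (tuple-projection plus multiplicative strands) witnessing the $\mathtt{DerBy}$ semantics, correctly leaning on the visibility clause of Lemma~\ref{lemma:group:visible} just as the paper implicitly does.
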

\begin{proof}
  Suppose that $\skel\;\mathcal{I}$-covers a bundle $\bnd$, in which
  $\mathcal{I}(t)$ originates prior to $\mathcal{I}(n)$ on an
  adversary node.  Then, by Lemma~\ref{lemma:group:visible}, there is
  a monomial $\nu$ which is a product of transcendentals visible
  before $\mathcal{I}(n)$, and---assuming $(\mathcal{I}(t))^{1/\nu}$
  is distinct from $\gen$---the latter was previously visible in
  $\bnd$.  Thus, we may extend the interpretation $\mathcal{I}$ by
  assigning $w\mapsto\nu$, satisfying both conjuncts of $\Phi$.  \qed 
\end{proof}
\smallskip
\begin{lemma}
  \textbf{Cohort case,  Field member.} 
  \label{lemma:cohort:field}
  Let $t\colon\scfld$ be a critical value in $n\in\node(\skel)$; let
  $x\colon\sctrsc$ be a constant appearing in $t$.  Let $\Phi,\Psi$
  (resp.)~be the sentences:
  \[ \mathtt{DerBy}(n,x) \quad\mbox{and}\quad  \mathtt{Absent}(x,t) . 
  \]
  The field cohort $t,n,\skel$ is the pair
  $\{\skel\cup\{\Phi\},\;\skel\cup\{\Psi\}\}$.  If $\skel$ covers
  $\bnd$, then either $\skel\cup\{\Phi\}$ or $\skel\cup\{\Psi\}$
  covers $\bnd$.
\end{lemma}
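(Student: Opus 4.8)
The plan is to mirror the proof of Lemma~\ref{lemma:cohort:group}, reducing everything to a single semantic dichotomy. Since neither $\Phi=\mathtt{DerBy}(n,x)$ nor $\Psi=\mathtt{Absent}(x,t)$ introduces a constant outside $\mathcal{C}(\skel)$, for any interpretation $\mathcal{I}$ with $\mathcal{I}\models\skel$ coverage of the two candidate members reduces to satisfaction of the added sentence: $\skel\cup\{\Phi\}$ $\mathcal{I}$-covers $\bnd$ iff $\mathcal{I}\models\mathtt{DerBy}(n,x)$, and likewise $\skel\cup\{\Psi\}$ $\mathcal{I}$-covers $\bnd$ iff $\mathcal{I}\models\mathtt{Absent}(x,t)$. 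So it suffices to show that whenever $\skel$ $\mathcal{I}$-covers $\bnd$, at least one of the two sentences holds under $\mathcal{I}$. Writing $c=\mathcal{I}(t)\in\thefield$, $\xi=\mathcal{I}(x)\in\trsc$, and $n_\ast=\mathcal{I}(n)$, the goal becomes: \emph{either} $\xi$ is not present in $c$, \emph{or} $\xi$ is derivable by $n_\ast$.

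First I would dispatch the easy branch. If $\xi$ is not present in $c$ (Def.~\ref{def:present}), then by the clause for $\mathtt{Absent}$ in Fig.~\ref{fig:prot:semantics} we have $\mathcal{I}\models\mathtt{Absent}(x,t)$, so $\skel\cup\{\Psi\}$ covers $\bnd$. This is precisely the branch in which $\mathcal{I}$ cancels $x$ out of $t$---for instance when $t$ is $x\cdot w$ and $\mathcal{I}(w)=1/\xi$---so that the formal occurrence of $x$ in $t$ given by hypothesis does not survive interpretation.

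The substantive branch is when $\xi$ \emph{is} present in $c$. Because $t$ is the critical value escaping the escape set $E$ at $n$, the carried path avoiding $E$ is preserved by the structure-respecting interpretation $\mathcal{I}$, so $c$ escapes $\mathcal{I}(E)$ at $n_\ast$ in $\bnd$; the relevant case for this cohort member is the one in which the escape node lies on an adversary strand, exactly parallel to the adversary-origination hypothesis invoked in Lemma~\ref{lemma:cohort:group}. I would then apply the Field element clause of Theorem~\ref{thm:test} (which rests on Lemma~\ref{lemma:simple:visible}): it supplies an escape node $n_u\preceq_{\bnd}n_\ast$ with $c$ visible in $\msg(n_u)$ and \emph{every} transcendental present in $c$ visible before $n_u$. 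In particular $\xi$ is visible in $\msg(n_x)$ for some $n_x\prec_{\bnd}n_u\preceq_{\bnd}n_\ast$. To convert this into derivability, note that visibility places $\xi$ at a tuple-only, encryption-free carried position of $\msg(n_x)$, so a chain of separation (untupling) adversary strands extracts $\xi$ from $\msg(n_x)$; the single leaf of this web draws $\msg(n_x)$ from a transmission node $m\preceq_{\bnd}n_x\prec_{\bnd}n_\ast$. By the $\mathtt{DerBy}$ clause of Fig.~\ref{fig:prot:semantics} this witnesses $\mathcal{I}\models\mathtt{DerBy}(n,x)$, so $\skel\cup\{\Phi\}$ covers $\bnd$.

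The main obstacle I anticipate is the bookkeeping that keeps the \emph{syntactic} occurrence of $x$ in $t$ (the hypothesis) separate from the \emph{semantic} presence of $\xi$ in $c$, since the entire purpose of the two-member cohort is to split on whether interpretation cancels $x$; getting this split to align cleanly with the two sentences $\Phi,\Psi$ is the delicate point. The second delicate step is the passage from ``visible before $n_u$'' to a genuine $\mathtt{DerBy}$ witness, i.e.\ checking that the transmission feeding the untupling web really precedes $n_\ast$. As in Lemma~\ref{lemma:cohort:group}, I would also be explicit that this cohort member is responsible only for the adversary/field-element escape case, with regular escapes and key-breaking covered by the companion cohort members, so that the union of all members---justified by Theorem~\ref{thm:test}---recovers coverage of every bundle that $\skel$ covers.
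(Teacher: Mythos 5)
Your proof is correct and follows essentially the same route as the paper's: the same dichotomy on whether $\mathcal{I}(x)$ is present in the monomial $\mathcal{I}(t)$, with the not-present case satisfying $\mathtt{Absent}(x,t)$ directly and the present case yielding visibility of $\mathcal{I}(x)$ before $\mathcal{I}(n)$ via Lemma~\ref{lemma:simple:visible} (which you reach through the Field element clause of Theorem~\ref{thm:test}, itself resting on that lemma). Your explicit untupling web converting visibility into a $\mathtt{DerBy}$ witness fills in a step the paper's proof leaves implicit, but the argument is otherwise the same.
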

\begin{proof}
  Suppose that $\skel\;\mathcal{I}$-covers a bundle $\bnd$, in which
  $\mathcal{I}(t)$ originates prior to $\mathcal{I}(n)$ on an
  adversary node.  If $\mathcal{I}(x)$ is present in the monomial
  $\mathcal{I}(t)$, then by Lemma~\ref{lemma:simple:visible}
  $\mathcal{I}(x)$ is visible in a node preceding $\mathcal{I}(n)$.
  Thus, formula $\Phi$ is satisfied in $\bnd$.  Otherwise, by
  Lemma~\ref{lemma:simple:visible}, formula $\Psi$ is satisfied in
  $\bnd$.  \qed
\end{proof}
Thus, the procedures add-fld-deriv$(x,t,n,\skel)$ and
add-absence$(x,t,n,\skel)$ add these two properties $\Phi$ and $\Psi$
respectively, and return the resulting skeletons.

\bibliography{../../inputs/secureprotocols}

\begin{thebibliography}{10}

\bibitem{ankney1995unified}
R.~Ankney, D.~Johnson, and M.~Matyas.
\newblock {The Unified Model}. contribution to {ANSI X9F1}.
\newblock {\em Standards Projects (Financial Crypto Tools), ANSI X}, 42, 1995.

\bibitem{BaaderSnyder01}
Franz Baader and Wayne Snyder.
\newblock Unification theory.
\newblock In Alan Robinson and Andrei Voronkov, editors, {\em Handbook of
  Automated Reasoning}, volume~1, chapter~8. The {MIT} Press, 2001.
\newblock \url{http://www.cs.bu.edu/~snyder/publications/UnifChapter.pdf‎}.

\bibitem{BartheFFMSS14}
Gilles Barthe, Edvard Fagerholm, Dario Fiore, John~C. Mitchell, Andre Scedrov,
  and Benedikt Schmidt.
\newblock Automated analysis of cryptographic assumptions in generic group
  models.
\newblock In {\em {CRYPTO}}, volume 8616 of {\em LNCS}, pages 95--112.
  Springer, 2014.

\bibitem{BasinCM13}
David~A. Basin, Cas Cremers, and Simon Meier.
\newblock Provably repairing the {ISO/IEC} 9798 standard for entity
  authentication.
\newblock {\em Journal of Computer Security}, 21(6):817--846, 2013.

\bibitem{BellareRogaway93}
Mihir Bellare and Phillip Rogaway.
\newblock Entity authentication and key distribution.
\newblock In {\em Advances in Cryptology -- Crypto '93 Proceedings}, number 773
  in LNCS. Springer-Verlag, 1993.

\bibitem{Blanchet01}
Bruno Blanchet.
\newblock An efficient protocol verifier based on {Prolog} rules.
\newblock In {\em 14th Computer Security Foundations Workshop}, pages 82--96.
  IEEE CS Press, June 2001.

\bibitem{Cremers11keyexchange}
Cas Cremers.
\newblock Key exchange in {IPsec} revisited: Formal analysis of {IKEv1} and
  {IKEv2}.
\newblock In {\em Computer Security--ESORICS 2011}. Springer, 2011.

\bibitem{cremers2012operational}
Cas Cremers and Sjouke Mauw.
\newblock {\em Operational semantics and verification of security protocols}.
\newblock Springer, 2012.

\bibitem{DiffieHellman76}
W.~Diffie and M.~Hellman.
\newblock New directions in cryptography.
\newblock {\em IEEE Transactions on Information Theory}, 22(6):644--654,
  November 1976.

\bibitem{DoghmiGuttmanThayer07}
Shaddin~F. Doghmi, Joshua~D. Guttman, and F.~Javier Thayer.
\newblock Searching for shapes in cryptographic protocols.
\newblock In {\em Tools and Algorithms for Construction and Analysis of Systems
  {(TACAS)}}, number 4424 in LNCS, pages 523--538, 2007.

\bibitem{DoughertyGuttman12}
Daniel~J. Dougherty and Joshua~D. Guttman.
\newblock An algebra for symbolic {Diffie-Hellman} protocol analysis.
\newblock In {\em Trustworthy Global Computing}, volume 8191 of {\em LNCS},
  pages 164--181, 2012.

\bibitem{DoughertyGuttman2014}
Daniel~J. Dougherty and Joshua~D. Guttman.
\newblock Decidability for lightweight {Diffie-Hellman} protocols.
\newblock In {\em {IEEE} Symposium on Computer Security Foundations}, 2014.

\bibitem{DBLP:conf/fosad/EscobarMM07}
Santiago Escobar, Catherine Meadows, and Jos\'e Meseguer.
\newblock Maude-{NPA}: Cryptographic protocol analysis modulo equational
  properties.
\newblock In {\em Foundations of Security Analysis and Design {V}, {FOSAD}
  2007--2009 Tutorial Lectures}, volume 5705 of {\em Lecture Notes in Computer
  Science}, pages 1--50. Springer, 2009.

\bibitem{Gentzen35}
G.~Gentzen.
\newblock Investigations into logical deduction (1935).
\newblock In {\em The Collected Works of Gerhard Gentzen}. North Holland, 1969.

\bibitem{GoguenMeseguer92}
Joseph~A. Goguen and Jos\'{e} Meseguer.
\newblock Order-sorted algebra {I}: equational deduction for multiple
  inheritance, overloading, exceptions and partial operations.
\newblock {\em Theoretical Computer Science}, 105(2):217--273, 1992.

\bibitem{Guttman10}
Joshua~D. Guttman.
\newblock Shapes: Surveying crypto protocol runs.
\newblock In Veronique Cortier and Steve Kremer, editors, {\em Formal Models
  and Techniques for Analyzing Security Protocols}, Cryptology and Information
  Security Series. {IOS} Press, 2011.

\bibitem{Guttman12}
Joshua~D. Guttman.
\newblock State and progress in strand spaces: Proving fair exchange.
\newblock {\em Journal of Automated Reasoning}, 48(2):159--195, 2012.

\bibitem{Guttman14}
Joshua~D. Guttman.
\newblock Establishing and preserving protocol security goals.
\newblock {\em Journal of Computer Security}, 22(2):201--267, 2014.

\bibitem{GuttmanLRR15}
Joshua~D. Guttman, Moses~D. Liskov, John~D. Ramsdell, and Paul~D. Rowe.
\newblock Formal support for standardizing protocols with state.
\newblock In {\em Security Standardisation Research {SSR}}, pages 246--265,
  2015.
\newblock Extended version at \url{http://arxiv.org/abs/1509.07552}.

\bibitem{GuttmanThayer02}
Joshua~D. Guttman and F.~Javier {{Thayer}}.
\newblock Authentication tests and the structure of bundles.
\newblock {\em Theoretical Computer Science}, 283(2):333--380, June 2002.

\bibitem{KepserRichts99}
Stephan Kepser and J\"{o}rn Richts.
\newblock Optimisation techniques for combining constraint solvers.
\newblock In Maarten de~Rijke and Dov Gabbay, editors, {\em Frontiers of
  Combining Systems 2}, pages 193--210, Asmsterdam, 1999. Research Studies
  Press/Wiley.
\newblock
  \url{http://tcl.sfs.uni-tuebingen.de/~kepser/papers/optimisation.ps.gz}.

\bibitem{Knuth81}
Donald~E. Knuth.
\newblock {\em The Art of Computer Programming, Volume {II:} Seminumerical
  Algorithms, 2nd Edition}.
\newblock Addison-Wesley, 1981.

\bibitem{kuesters2009using}
Ralf K{\"u}sters and Tomasz Truderung.
\newblock Using {ProVerif} to analyze protocols with {Diffie-Hellman}
  exponentiation.
\newblock In {\em {IEEE} Computer Security Foundations Symposium}, pages
  157--171. IEEE, 2009.

\bibitem{LiskovThayer14}
Moses Liskov and F.~Javier Thayer.
\newblock Modeling {Diffie-Hellman} derivability for automated analysis.
\newblock In {\em {IEEE} Computer Security Foundations}, pages 232--243, 2014.

\bibitem{cpsatheory11}
Moses~D. Liskov, Paul~D. Rowe, and F.~Javier Thayer.
\newblock Completeness of {CPSA}.
\newblock Technical Report MTR110479, The MITRE Corporation, March 2011.
\newblock
  \url{http://www.mitre.org/publications/technical-papers/completeness-of-cpsa}.

\bibitem{Liu12}
Zhiqiang Liu.
\newblock {\em Dealing Efficiently with Exclusive OR, Abelian Groups and
  Homomorphism in Cryptographic Protocol Analysis}.
\newblock PhD thesis, Clarkson University, September 2012.
\newblock \url{http://people.clarkson.edu/~clynch/papers/}.

\bibitem{Lowe97}
Gavin Lowe.
\newblock A hierarchy of authentication specifications.
\newblock In {\em 10th Computer Security Foundations Workshop Proceedings},
  pages 31--43. {IEEE} CS Press, 1997.

\bibitem{MarreroEtAl97}
Will Marrero, Edmund Clarke, and Somesh Jha.
\newblock A model checker for authentication protocols.
\newblock In Cathy Meadows and Hilary Orman, editors, {\em Proceedings of the
  {DIMACS} Workshop on Design and Verification of Security Protocols}.
  {DIMACS}, Rutgers University, September 1997.

\bibitem{Maurer05}
Ueli~M. Maurer.
\newblock Abstract models of computation in cryptography.
\newblock In {\em Cryptography and Coding}, volume 3796 of {\em LNCS}, pages
  1--12. Springer, 2005.

\bibitem{MeierSCB13}
Simon Meier, Benedikt Schmidt, Cas Cremers, and David~A. Basin.
\newblock The {tamarin} prover for the symbolic analysis of security protocols.
\newblock In {\em Computer Aided Verification ({CAV})}, pages 696--701, 2013.

\bibitem{MoedersheimKatsoris14}
Sebastian M{\"{o}}dersheim and Georgios Katsoris.
\newblock A sound abstraction of the parsing problem.
\newblock In {\em {IEEE} {CSF}}, pages 259--273, 2014.

\bibitem{Paulson98}
Lawrence~C. Paulson.
\newblock The inductive approach to verifying cryptographic protocols.
\newblock {\em Journal of Computer Security}, 1998.
\newblock Also Report 443, Cambridge University Computer Lab.

\bibitem{agum09}
John~D. Ramsdell.
\newblock {AGUM}: Unification and matching in an abelian group, 2009.
\newblock \url{http://hackage.haskell.org/package/agum}.

\bibitem{Ramsdell12}
John~D. Ramsdell.
\newblock Deducing security goals from shape analysis sentences.
\newblock The MITRE Corporation, April 2012.
\newblock \url{http://arxiv.org/abs/1204.0480}.

\bibitem{cpsa09}
John~D. Ramsdell and Joshua~D. Guttman.
\newblock {CPSA}: A cryptographic protocol shapes analyzer, 2009.
\newblock \url{http://hackage.haskell.org/package/cpsa}.

\bibitem{RoweEtAl2015}
Paul~D. Rowe, Joshua~D. Guttman, and Moses~D. Liskov.
\newblock Measuring protocol strength with security goals.
\newblock Submitted to \emph{IJIS} in the SSR 2014 special issue. Available at
  \url{http://web.cs.wpi.edu/~guttman/pubs/ijis_measuring-security.pdf}, April
  2015.

\bibitem{RoweEtAl2016}
Paul~D. Rowe, Joshua~D. Guttman, and Moses~D. Liskov.
\newblock Measuring protocol strength with security goals.
\newblock {\em International Journal of Information Security}, 15(6):575--596,
  November 2016.
\newblock DOI 10.1007/s10207-016-0319-z,
  \url{http://web.cs.wpi.edu/~guttman/pubs/ijis_measuring-security.pdf}.

\bibitem{Cremers12}
Benedikt Schmidt, Simon Meier, Cas Cremers, and David~A. Basin.
\newblock Automated analysis of {Diffie-Hellman} protocols and advanced
  security properties.
\newblock {\em Computer Security Foundations, ({CSF})}, pages 25--27, 2012.

\bibitem{Shoup97}
Victor Shoup.
\newblock Lower bounds for discrete logarithms and related problems.
\newblock In {\em {EUROCRYPT}}, volume 1233 of {\em Lecture Notes in Computer
  Science}, pages 256--266. Springer, 1997.

\bibitem{ThayerHerzogGuttman99}
F.~Javier Thayer, Jonathan~C. Herzog, and Joshua~D. Guttman.
\newblock Strand spaces: Proving security protocols correct.
\newblock {\em Journal of Computer Security}, 7(2/3):191--230, 1999.

\end{thebibliography}
\bibliographystyle{plain}

\appendix
\section{Appendix:  Algebraic independence}
\label{sec:ind}
We say that a number of values in a field are algebraically
independent when any polynomial in them that evaluates to 0 is
identically 0.
\begin{definition}
  \label{def:alg:ind}
  Suppose that $\thefield_0$ is any field.  Values
  $v_1,\ldots, v_n\in\thefield_0$ are \emph{algebraically independent}
  in $\thefield_0$ iff, for every polynomial $p$ in $n$ variables
  $x_1,\ldots, x_n$, if the value of $p$ for
  $[x_1\mapsto v_1,\ldots, x_n\mapsto v_n]$ is 0, then $p$ is the
  identically 0 polynomial.
\end{definition}
\begin{lemma}
  \label{lemma:alg:ind}
  Suppose that $\thefield_0$ is any field, and $W$ is a set of field
  extension elements.  Let $v_1,\ldots, v_n\in\thefield_0(W)$ be
  algebraically independent values in which extension elements from
  $W$ occur non-vacuously.  Let $Y=\{y_1,\ldots, y_n\}$ be field
  extension elements disjoint from $\thefield_0$ and $W$.  Let
  $\thefield_1$ be the minimal extension of $\thefield_0$ containing
  $v_1,\ldots, v_n$.

  There is a field homomorphism from $\thefield_1$ onto
  $\thefield_0(Y)$.  
\end{lemma}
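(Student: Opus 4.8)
The plan is to factor the desired map through the field of rational functions $\thefield_0(x_1,\ldots,x_n)$ in $n$ indeterminates, and to exhibit it as a composite of two isomorphisms, one determined by the $v_i$ and one by the elements of $Y$. First I would set up the evaluation homomorphism: let $\thefield_0[x_1,\ldots,x_n]$ be the polynomial ring, and define the $\thefield_0$-algebra homomorphism $\varepsilon\colon\thefield_0[x_1,\ldots,x_n]\to\thefield_1$ by $x_i\mapsto v_i$. Its kernel is exactly the set of polynomials vanishing at $(v_1,\ldots,v_n)$, which by the algebraic independence hypothesis (Def.~\ref{def:alg:ind}) is $\{0\}$, so $\varepsilon$ is injective. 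Since the polynomial ring is an integral domain and $\thefield_1$ is a field, $\varepsilon$ extends uniquely, by the universal property of the fraction field, to a field homomorphism $\hat\varepsilon\colon\thefield_0(x_1,\ldots,x_n)\to\thefield_1$ sending $p/q\mapsto p(\vec v)/q(\vec v)$; this is well defined because $q\neq 0$ forces $q(\vec v)\neq 0$, again by independence.

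Next I would argue that $\hat\varepsilon$ is onto. Its image is the set of quotients of polynomials in the $v_i$ with $\thefield_0$-coefficients, namely $\thefield_0(v_1,\ldots,v_n)$. Since $\thefield_1$ is by hypothesis the minimal field extension of $\thefield_0$ containing $v_1,\ldots,v_n$, we have $\thefield_1=\thefield_0(v_1,\ldots,v_n)$, so $\hat\varepsilon$ is surjective; together with the injectivity already established it is an isomorphism. Running the identical construction with the elements of $Y=\{y_1,\ldots,y_n\}$ in place of $\vec v$ yields an isomorphism $\hat\delta\colon\thefield_0(x_1,\ldots,x_n)\to\thefield_0(Y)$. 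Here I would note that the ``field extension elements'' $Y$ disjoint from $\thefield_0$ and $W$ are to be read as transcendentals, i.e.\ algebraically independent over $\thefield_0$, which is precisely what the construction of $\hat\delta$ needs (and is in fact forced: any surjection onto $\thefield_0(Y)$ is automatically injective, hence an isomorphism, so it would make $\thefield_0(Y)$ purely transcendental of degree $n$). The map required by the lemma is then $\hat\delta\circ\hat\varepsilon^{-1}\colon\thefield_1\to\thefield_0(Y)$, the $\thefield_0$-isomorphism sending $v_i\mapsto y_i$, which is in particular a field homomorphism onto $\thefield_0(Y)$.

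The only real content, and thus the step I would handle with care, is the passage from the injective evaluation on polynomials to a well-defined \emph{surjective} homomorphism on $\thefield_1$: surjectivity rests squarely on the minimality of $\thefield_1$ (so that it contains nothing beyond rational expressions in the $v_i$), while well-definedness on quotients rests on algebraic independence to keep denominators nonzero. Everything else---injectivity of field homomorphisms, the universal property of the fraction field, and the symmetry between $\vec v$ and $Y$---is routine. I would not need the ``occur non-vacuously'' clause explicitly, since algebraic independence already forces each $v_i\notin\thefield_0$ (otherwise $x_i-v_i$ would be a nonzero polynomial vanishing at $(v_1,\ldots,v_n)$).
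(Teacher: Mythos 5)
Your proof is correct, but there is nothing in the paper to compare it against: Lemma~\ref{lemma:alg:ind} is stated in the appendix on algebraic independence with no proof at all, so your argument fills a gap the authors left open. The route you take---the evaluation map $\varepsilon\colon\thefield_0[x_1,\ldots,x_n]\to\thefield_1$, injective precisely by Def.~\ref{def:alg:ind}, extended to $\hat\varepsilon\colon\thefield_0(x_1,\ldots,x_n)\to\thefield_1$ by the universal property of the fraction field, surjective by minimality of $\thefield_1$, then composed with the renaming isomorphism onto $\thefield_0(Y)$---is the standard argument and is consistent with how the paper actually uses the lemma: in the proof of Lemma~\ref{lemma:derivable:field:interpretation} the authors invoke it to obtain an \emph{isomorphism} $J\colon\thefield_1\rightarrow\thefield_0(T,Y)$, which is exactly what your construction delivers (and, as you note, any surjective field homomorphism is automatically an isomorphism). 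Your reading of $Y$ as algebraically independent transcendentals over $\thefield_0$ is also the intended one: in the paper's own setup (Section~\ref{sec:algebra:math}), adjoining unconstrained ``new elements'' to a field is \emph{defined} to produce the field of rational expressions in them, so your $\hat\delta$ is essentially the canonical identification. Finally, your two side observations are sound: well-definedness of $\hat\varepsilon$ on quotients is just injectivity of $\varepsilon$ restated, and the ``occur non-vacuously'' hypothesis is indeed redundant for the proof, since independence already rules out $v_i\in\thefield_0$.
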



\section{Efficient Unification}
\label{sec:unification}
\paragraph{Author of this appendix:  John D. Ramsdell.}  Date:
February, 2014.

\newtheorem{expl}{Example}

\newcommand{\typ}{\mathbin:}
\newcommand{\inc}[1]{\ensuremath{\bar{#1}}}
\newcommand{\eqq}{\overset{?}{=}}
\newcommand{\defn}{:=}
\newcommand{\eqqu}[1]{\underset{#1}{\eqq}}
\newcommand{\grp}{\ensuremath{A_+}}
\newcommand{\bel}{\ensuremath{B_b}}

\medskip\noindent This appendix describes how {\cpsa} performs
efficient unification.  The focus is solely on Diffie-Hellman
exponents, as the remainder of the algorithm is straightforward.  The
notation used is from the chapter on unification theory in Handbook of
Automated Reasoning~\cite[Chap.~5]{BaaderSnyder01}.  The reader should
have this text available while reading this section.

Terms in the exponents satisfy $\mathit{AG}$, the equations for a free
Abelian group.  To better match cited material, we use additive
notation for the group.  The binary operation is~$+$, constant~0 is
the identity element, and~$-$ is the unary inverse operation.  The
binary operation~$+$ is left associative so that $x+y+z=(x+y)+z$.  The
equational identities for an Abelian group are
\[\begin{array}{r@{{}\approx{}}llc}
x+0&x&\mbox{Unit element}&\mathcal{U}\\
x+-x&0&\mbox{Inverse element}&\mathcal{I}\\
x+y&y+x&\mbox{Commutativity}&\mathcal{C}\\
x+y+z&x+(y+z)&\mbox{Associativity}&\mathcal{A}
\end{array}\]

The normal form of term~$t$ is \(c_1 x_1 + c_2 x_2 + \cdots + c_n x_n\),
where~$c_i$ is a non-zero integer, and for $c>0$,~$cx$ abbreviates
\(x+x+\cdots+x\) summed~$c$ times while for $c<0$,~$cx$ abbreviates
$-((-c)x)$.  Every term has a normal form.

There are efficient algorithms for unification
modulo~$\mathit{AG}$~\cite[Section~5.1]{BaaderSnyder01}.  {\cpsa} uses
an algorithm that reduces the problem to finding integer solutions to
an inhomogeneous linear equation with integer coefficients.  The
equation solver used is from The Art of Computer
Programming~\cite[Pg.~327]{Knuth81}.

In the order sorted signature for our free Abelian group, there are
two sorts,~$G$ and~$B$ with $B<G$, and no operations or equations
specific to sort~$B$.

The unsorted algebra isomorphic to the sorted algebra adds a unary
inclusion operation~$b$ to the signature.  For term~$t$, we
write~$\inc{t}$ for $b(t)$.  A term~$t$ is \emph{constrained} if it
occurs as~$\inc{t}$.  Because problems in this algebra are derived
from the order-sorted algebra, terms have a restricted form.
Whenever~$\inc{t}$ occurs in a term,~$t$ is a variable.  Furthermore,
when variable~$x$ is constrained, it occurs everywhere in the term in
the form~$\inc{x}$.  For example, the term $x+\inc{x}$ is ill-formed.
Thus the normal form of a term is \(c_1 x_1 + c_2 x_2 + \cdots + c_m
x_m + c_{m+1}\inc{x}_{m+1} + c_{m+2}\inc{x}_{m+2} + \cdots
+c_n\inc{x}_n\).

\paragraph{Combination of Unification Algorithms.}  Section~6.1 of the
Handbook~\cite{BaaderSnyder01} describes a general method for
combining unification algorithms.  Let {\grp} be the theory associated
with the Abelian group, and ${\bel}\defn\{\inc{x}\approx\inc{x}\}$, so
that $E=\grp\cup\bel$.  Obviously, $=_{\bel}$ is just syntactic
equality.  The ``dummy'' identity $\inc{x}\approx\inc{x}$ ensures
that~$b$ belongs to the signature of~\bel.

An $E$-unification problem is written $t=^?_E t'$.  A \emph{reduced}
unification problem is one in which~$t$ contains no constrained
variables,~$t'$ contains no unconstrained variables, and~$t$ and~$t'$
are in normal form.  Any $E$-unification problem can converted into a
reduced $E$-unification problem.

To convert the reduced problem into decomposed form, a fresh
variable~$v_i$ is generated for each constrained variable~$\inc{x}_i$
in~$t'$.  When \(t'=c_1\inc{x}_1 + c_2\inc{x}_2 + \cdots
+c_n\inc{x}_n\), the decomposed form is
\[\{t-c_1v_1-c_2v_2-\cdots-c_nv_n\eqqu{\grp}0,
v_1\eqqu{\bel}\inc{x}_1,
v_2\eqqu{\bel}\inc{x}_2,\ldots,
v_n\eqqu{\bel}\inc{x}_n\},\]
where $x-cy$ abbreviates $x+(-c)y$.

For this problem, the shared variables are the fresh variables~$v_i$
introduced during decomposition.  Each shared variable is labeled with
theory~\bel.  The ordering of shared variables is irrelevant because
of the simple nature of the equations in theory~\bel.  The only source
of non-determinism is the partitioning of the shared variables.

\begin{expl}
  \begingroup\rm
  \[\begin{array}{ll}
  \{z\eqqu{E}\inc{x}-\inc{y}\}\\
  \{z-v_0+v_1\eqqu{\grp}0,v_0\eqqu{\bel}\inc{x},v_1\eqqu{\bel}\inc{y}\}
  &\mbox{Decomposed form}\\
  \{\{v_0\},\{v_1\}\}
  &\mbox{Partition}\\
  \{z\approx\inc{x}-\inc{y}, v_0\approx\inc{x},v_1\approx\inc{y}\}
  &\mbox{Solution}
  \end{array}\]
  \endgroup
\end{expl}

\begin{expl}
  \begingroup\rm
  \[\begin{array}{ll}
  \{0\eqqu{E}\inc{w}+\inc{x}-\inc{y}-\inc{z}\}\\
  \{v_0+v_1-v_2-v_3\eqqu{\grp}0,
  v_0\eqqu{\bel}\inc{w},
  v_1\eqqu{\bel}\inc{x},
  v_2\eqqu{\bel}\inc{y},
  v_3\eqqu{\bel}\inc{z}\}
  &\mbox{Decomposed form}\\
  \{\{v_0,v_2\},\{v_1,v_3\}\}
  &\mbox{Partition}\\
  \{w\approx y, x\approx z,\ldots\}
  &\mbox{Solution}
  \end{array}\]
  To find the other solution, $\{w\approx z, x\approx y\}$, use the
  partition $\{\{v_0,v_3\},\{v_1,v_2\}\}$.
  \endgroup
\end{expl}

\paragraph{Efficient Unification Algorithm.}
The issue surrounding an efficient implementation is the fast
generation of the fewest variable partitions required to generate a
complete set of unifiers.  A means to reduce the amount of
non-determinism in this algorithm is reported
in~\cite{KepserRichts99}.  The paper provides a method to intertwine
equation solving with making decisions about non-deterministic
choices.

The only kinds of decisions used here are equality/disequality
decisions for constrained variables.  For constrained variables~$x$
and~$y$, we write $x\doteq y$ for the decision to equate~$x$ and~$y$,
and $x\not\doteq y$ for the decision to keep them distinct.

Let $\sigma$ and $\theta$ be substitutions, and~$D$ be a set of
decisions.  To compute a complete set of unifiers for problem
$t=^?t'$, compute $\fn{unify}(t=^? t', \emptyset, \fn{Id})$ where $\fn{unify}(
t=^? t', D, \theta)$ is
\begin{enumerate}
\item if $\sigma$ is an $\grp$-unifier for $t=^? t'$ obtained by
  treating constrained variables as constants then return
  $\{\sigma\theta\}$;
\item else return $\fn{unify}'(t=^? t', D, \theta)$.
\end{enumerate}

\noindent Function $\fn{unify}'(t=^? t', D, \theta)$ is
\begin{enumerate}
\item if all distinct pairs of constrained variables have a decision
  in~$D$ then return~$\emptyset$;
\item else choose variables $x$ and $y$ without a decision in~$D$ and return
  \[\begin{array}{l}
  \fn{unify}(t=^? t'\{x\mapsto y\}, D\cup\{x\doteq y\},
  \theta\{x\mapsto y\})\cup{}\\
  \fn{unify}'(t=^? t', D\cup\{x\not\doteq y\},\theta).
  \end{array}\]
\end{enumerate}

The algorithm can further be improved by changing function \fn{unify}
so that when $\grp$-unification fails, the results of the steps up to
the failure are applied to the recursive call so as to avoid repeating
these steps.  This refinement is due to~\cite[Chapter~8]{Liu12}.  To
justify this refinement, one must show that constrained variable
identification does not invalidate any previous $\grp$-unification
steps.

To implement this refinement, we assume the equation is in reduced
form and replace \fn{unify} with $\fn{unify}_0$, which expands the
$\grp$-unification algorithm.  For equation, $t=^?t'$ let
\(t=\sum_{i=1}^n c_ix_i\) and \(t'=\sum_{j=1}^m d_j\inc{y}_j\).

\noindent Function $\fn{unify}_0(t=^? t', D, \theta)$ is
\begin{enumerate}
\item let $c_i$ be the smallest coefficient in absolute value;
\item if $c_i<0$ then return $\fn{unify}_0(-t=^? -t', D, \theta)$;
\item\label{item:solve} else if $c_i=1$ then return $\{\sigma\theta\}$
  where $\sigma=\{x_i\mapsto\sum_j^md_j\inc{y}_j -
  \sum_{k\neq i}^nc_kx_k\}$;
\item else if $c_i$ divides every coefficient in $c$ then
  \begin{enumerate}
  \item if $c_1$ divides every coefficient in $d$ then divide $c$ and
    $d$ by $c_i$ and goto step~\ref{item:solve};
  \item else return $\fn{unify}'(t=^? t', D, \theta)$;
  \end{enumerate}
\item else eliminate $x_i$ in favor of freshly created
  variable~$x_{n+1}$ using  $\sigma=\{x_i\mapsto x_{n+1}-\sum_{k\neq
    i} (c_k/c_i)x_k\}$ and then
  return $\fn{unify}_0(t''=^? t', D, \sigma\theta)$ where
  $t''=c_ix_{n+1}+\sum_k(c_k\bmod c_i)x_k$.
\end{enumerate}

Constrained variable identification does not invalidate any previous
$\grp$-unification steps because $\gcd(d_i,d_j)$ divides $d_i+d_j$.

There is another opportunity for performance improvement by refining
function $\fn{unify}'$.  It could inspect the coefficients of
constrained variables when deciding which constrained variables to
identify.  This refinement has yet to be explored.


\end{document}
